\let\csname equation*\endcsname\relax
\let\csname endequation*\endcsname\relax
\newenvironment{protocol}[1][htb]
  {
   \begin{algorithm}[#1]%
  }{\end{algorithm}}
\newtheorem*{lemmarep}{\textsc{Lemma}}
\newtheorem*{theorep}{\textsc{Theorem}}
\newtheorem{theo}{\textsc{Theorem}}
\newtheorem{lemma}[theo]{\textsc{Lemma}}
\newtheorem{defi}[theo]{\textsc{Definition}}
\def\EQ#1{\begin{align}#1\end{align}}
\def\ket#1{\left| #1 \right\rangle}
\def\bra#1{\left\langle #1 \right|}
\def\dm#1{\left|#1 \right\rangle \left\langle #1 \right|}
\newcommand{\ketbra}[3]{\left| #1 \right\rangle \left\langle #2 \right|_{#3}}
\newcommand{\px}{\sigma_x}
\newcommand{\py}{\sigma_y}
\newcommand{\pz}{\sigma_z}
\newcommand{\ptr}[2]{\text{tr}_{#1}\left[ #2 \right]}
\newcommand{\id}{id}
\definecolor{darkgreen}{RGB}{50,190,50}
\definecolor{darkblue}{RGB}{0,0,190}
\definecolor{darkred}{RGB}{238,0,0}
\newcommand{\E}{\mathcal{E}}
\renewcommand{\H}{\mathcal{H}}
\newcommand{\F}{\mathcal{F}}
\newcommand{\N}{\mathbb{N}}
\newcommand{\C}{\mathbb{C}}
\newcommand{\T}{\mathcal{T}}
\newcommand{\bea}{\begin{eqnarray}}
\newcommand{\eea}{\end{eqnarray}}
\def\tr{\mathrm{tr}}
\newtheorem*{rep@theorem}{\rep@title}
\newcommand{\newreptheorem}[2]{%
\newenvironment{rep#1}[1]{%
 \def\rep@title{#2 \ref{##1}}%
 \begin{rep@theorem}}%
 {\end{rep@theorem}}}
\begin{document}

\title{Entanglement generation secure against general attacks}
\author{Alexander Pirker$^{1}$, Vedran Dunjko$^{1,2}$, Wolfgang D\"ur$^{1}$ and Hans J. Briegel$^{1}$}

\address{${1}$
Institut f\"ur Theoretische Physik, Universit\"at Innsbruck, Technikerstr. 21a, 
A-6020 Innsbruck,
Austria \newline
${2}$ 
Max-Planck-Institute of Quantum Optics, Hans-Kopfermann-Strasse 1,
D-85748 Garching, Germany}

\eads{\href{mailto:alexander.pirker@student.uibk.ac.at}{alexander.pirker@student.uibk.ac.at}, \href{mailto:vedran.dunjko@mpq.mpg.de}{vedran.dunjko@mpq.mpg.de}, \href{mailto:wolfgang.duer@uibk.ac.at}{wolfgang.duer@uibk.ac.at} and \href{mailto:hans.briegel@uibk.ac.at}{hans.briegel@uibk.ac.at}}
%
\begin{abstract}
We present a security proof for establishing private entanglement by means of recurrence-type entanglement distillation protocols over noisy quantum channels. We consider protocols where the local devices are imperfect, and show that nonetheless a confidential quantum channel can be established, and used to e.g. perform distributed quantum computation in a secure manner. While our results are not fully device independent (which we argue to be unachievable in settings with quantum outputs), our proof holds for arbitrary channel noise and noisy local operations, and even in the case where the eavesdropper learns the noise. Our approach relies on non-trivial properties of distillation protocols which are used in conjunction with de-Finetti and post-selection-type techniques to reduce a general quantum attack in a non-asymptotic scenario to an i.i.d. setting. As a side result, we also provide entanglement distillation protocols for non-i.i.d. input states.
\end{abstract}
\maketitle

\section{Introduction}

Entanglement is a key resource in quantum information processing. Entanglement can be used to teleport quantum information \cite{BennettTele}, to implement remote quantum gates \cite{BennettRemote}, or for distributed quantum computation \cite{Cirac}. It allows one to perform tasks that are not possible by classical means, such as secret key expansion vital for secure classical communication. The latter is achieved through the famous and extensively studied quantum key distribution (QKD) protocols \cite{bib.lo, bib.got, bib.sho, bib.bai, bib.postselect, bib.renner.diss, bib.applyexpdef}. In these works, security was proven in a variety of ever more general scenarios, considering noisy channels, imperfect devices and device-independent (DI) settings, where even the local quantum devices are untrusted \cite{bib.acin,bib.lim,bib.umseh}.

In contrast, the perhaps equally important task of establishing private entanglement, and the closely related problem of establishing secure quantum channels, has not been resolved in equal generality. The latter has, historically, received significantly less attention \cite{Barnum}, until the very recent increase of interest \cite{Hayden,SecChannelBroadbent,SecChannelPortmann,SecChannelGarg} in security under ideal settings. The task of establishing private entanglement has been considered in the context of noisy channels and both perfect \cite{bib.dejmps} operations, and operations with local depolarizing noise \cite{bib.aschauer, AschauerPRA}. In these works, either initial states that are identical and independently distributed (i.i.d.), or asymptotic scenarios are assumed.

Here, we present a comprehensive treatment for the security of distillation protocols. To make our results broadly applicable, we generalize the security model (i.e. powers of the adversary) over standard settings for protocols with quantum outputs. Furthermore, we remove the need for asymptotic, or i.i.d. assumptions, allow for more general noise models, and formulate and prove security criteria which ensure composability -- i.e. the security of the protocols when they are used in arbitrary contexts, e.g. as sub-routines of larger protocols.

More specifically, we consider arbitrary attacks employed by an adversary (Eve, the distributer of noisy or corrupt Bell-pairs) and assume noisy communication channels and noisy local operations --  essentially arbitrary noise describing imperfect single- and two-qubit gates. We also extend adversarial powers beyond standard: the noisy apparatus may leak all the information about the noise processes which occurred in a run of the protocol to Eve.

Our scenario, by necessity, falls short from full DI, as security under such weakest assumptions is not attainable for protocols with a quantum output -- any device used in any protocol with which a client can interact classically, perhaps to test its performance, but which eventually outputs a quantum system, can always deviate from honest behavior when the final quantum output is eventually demanded (independent of how elaborate the testing may have been).
This raises the questions of how DI assumptions can be relaxed such that security becomes possible also for quantum output protocols, or how standard security models can be further extended.

DI assumptions can be understood as an extreme noisy scenario, where Eve has absolute control over the noise processes. Our model relaxes this:  Eve's control is not exact (deterministic), but rather probabilistic, however still perfectly heralded -- while Eve may fail in her interventions, she still learns the noise realized. In this sense, generalizing the types of noise the protocol is provably secure under in our model, corresponds to scenarios which are ever closer to DI. Naturally, other generalizations of DI settings which make sense for protocols with quantum outputs may be possible \footnote{E.g., we assume very primitive, but trusted, quantum devices, such as a device which can either forward an input quantum system, or measure it in one basis. Already such a simple device invalidates our no-go observation.}.

We proceed by first providing a security analysis for i.i.d. inputs, and then generalize to non-i.i.d. states. This is done by employing de-Finetti and post-selection symmetrization-based techniques. However, since we are interested in security in arbitrary contexts, we must go beyond standard scenarios considered in entanglement distillation works \cite{bib.dejmps, bib.aschauer, AschauerPRA} and explicitly consider the adversarial quantum systems (containing e.g. purifications of all quantum states) as well. Therefore the symmetrization-based techniques cannot be straightforwardly applied, but need to be adapted. We present and discuss the required additional steps of preprocessing, and provide entanglement distillation protocols that are not restricted to i.i.d. inputs, but are capable of dealing with general inputs. The latter is related to recent results in \cite{Brandao, Buscemi,Waeldchen}. \newline

\section{Structure of the paper}

The paper is organized as follows. In Sec. \ref{sec:model} we introduce the basic concepts, specify the overall setting and define the confidentiality of entanglement distillation protocols. Next, we summarize our main contribution in Sec. \ref{sec:contr}. In Sec. \ref{sec:conf} we show confidentiality of recurrence-type entanglement distillation protocols by proving confidentiality for i.i.d. inputs in Sec. \ref{sec:conf:iid} and we extend this results to arbitrary initial states in Sec. \ref{sec:conf:arb} and \ref{sec:conf:1}. Finally we prove confidentiality whenever the noise transcripts leak to Eve in Sec. \ref{sec:conf:2}. We summarize and discuss our results in Sec. \ref{sec:disc}.

\section{The model and security guarantees}\label{sec:model}

Entanglement distillation is modelled by considering three players, Alice and Bob, who wish to generate a shared Bell pair, and Eve, who provides the initial pairs. Thus, Eve is connected to Alice and to Bob via a (generally noisy) quantum channel which may be completely under her control.
Alice and Bob are connected by a classical authenticated, but not confidential, channel.
In entanglement distillation protocols Alice and Bob apply local, in general noisy, quantum operations to their pairs.
To model this noise, we extend the approach of \cite{bib.aschauer}, where a noise register, referred to as the ``lab demon'' (L) register $L$ is used to store classical information about the local noise history, is appended to Alice and Bob's pairs. In this work, the L register is a quantum register, attached to Alice and Bob. We represent the noisy maps of the entanglement distillation process as unitaries acting on an enlarged Hilbert space. L thereby coherently applies Pauli operators onto the registers of Alice and Bob. Due to the symmetry of Bell states $\ket{B_{00}} = 1/\sqrt{2}(\ket{00}+\ket{11}),$ it suffices to consider the case when the noise is applied on Alice's register only.  To model the setting where Eve acquires information about the noise transcript during the execution of the protocol, we assume that L informs Eve which noise operator was applied at each step. The setting is illustrated in Fig. \ref{fig:model}. In the remainder of this paper we elaborate further on the full quantum treatment of L and Eve in terms of purifications, going beyond the setting of \cite{bib.aschauer}.

\begin{figure}[h!]
\begin{center}
\scalebox{1}{
\includegraphics{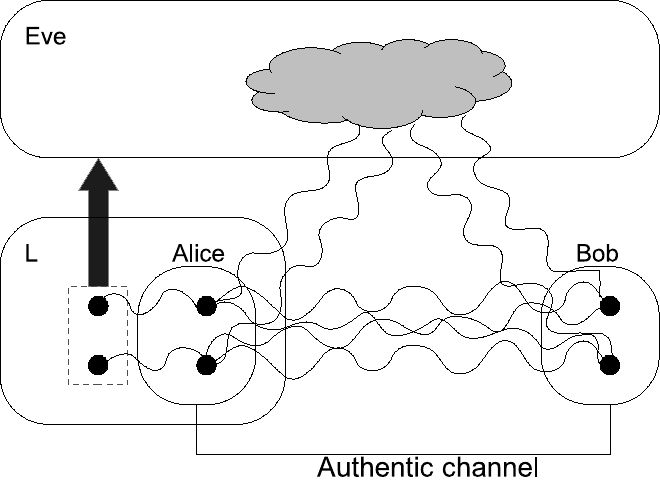}
}
\caption[h!]{\label{fig:model} Illustration of the overall setting: Eve provides the initial pairs to Alice and Bob, who run the entanglement distillation protocol. The noisy apparatus may leak the specification of the realized noise map to Eve after every step of the protocol.}
\end{center}
\end{figure}

The proposed overall protocol under i.i.d. assumption involves several steps. First, Eve distributes $n$ pairs (the \emph{initial states}), to Alice and Bob who apply local ``twirl'' operations (random, correlated local operations). Next, Alice and Bob sacrifice some $m\approx\sqrt{n}$ pairs to check whether the fidelity, given with $F(\rho, \sigma)=\text{tr}\sqrt{\rho^{1/2}\sigma \rho^{1/2}}$ for density operators $\rho$ and $\sigma$, of the pairs is sufficient for entanglement distillation, via local $\sigma_x$ and $\sigma_z$ measurements. If the fidelity $F$ relative to $\ket{B_{00}}$ is insufficient, they abort. Otherwise they proceed with a recurrence-type entanglement distillation to produce a high fidelity Bell-pair from the remaining initial states, which may also be aborted. Finally, Alice and Bob output their final state. For i.i.d. inputs, the twirl ensures that local $\sigma_z$ and  $\sigma_x$ correlation measurements can be used to estimate the fidelity of individual pairs. This estimate is crucial for ensuring entanglement distillation via recurrence-type entanglement distillation protocols. Later, we will generalize to non-i.i.d. settings by prepending the protocol with symmetrization (permuting of the pairs) and tracing-out steps.

To formalize the security requirements, we define the ideal map $\F^{\alpha,l}$, mapping the initial states of Alice and Bob to a single Bell-pair, where $\alpha$ (abstractly) characterizes the noise levels in the channels connecting Eve to Alice and Bob, and also the noise of the local devices, and $l$ indicates that the noise transcripts leak to Eve. The ideal map can intuitively be thought of as a map which simulates a real protocol as follows. In the case of an abort, it replaces the final state with a fixed state $\sigma^{\perp}_{ABE}$. In the non-aborting case, however, it replaces the actual output with a special state $\sigma^{\alpha,\mathcal{P},l}_{ABE},$ which corresponds to the output of a real protocol where the noise transcripts leak to Eve, utilizing distillation protocol $\mathcal{P}$, that was successfully run with asymptotically many high-fidelity i.i.d. initial pairs. This is the best the noisy entanglement distillation protocol $\mathcal{P}$ could ever do. As we show later, $\sigma^{\alpha,\mathcal{P},l}_{ABE}$ is a well-defined state for the entanglement distillation protocols and noise parameters considered here. That is, it depends on the local noise parameters \emph{only}, and not the initial states.
Formally, we have for a given real map (that is, the map realized by the execution of a real protocol)
\begin{align}
(\E^{\alpha,l} \otimes \id_E) \left(\dm{\psi}_{ABE} \right) = p_{\rho} \sigma_{ABE} \otimes \dm{ok}_{f} + (1-p_{\rho}) \sigma^{\perp}_{ABE} \otimes \dm{fail}_{f} \label{eq:realmap}
\end{align}
a corresponding ideal map 
\begin{align}
(\F^{\alpha,l} \otimes \id_E) \left(\dm{\psi}_{ABE} \right) = p_{\rho} \sigma^{\alpha,\mathcal{P},l}_{ABE} \otimes \dm{ok}_{f} + (1-p_{\rho}) \sigma^{\perp}_{ABE} \otimes \dm{fail}_{f}  \label{ideal}
\end{align}
where $\ket{\psi}_{ABE}$ is a purification of the initial $n$-partite ensemble $\rho^{(n)}_{AB}$ provided by Eve, $p_{\rho}$ is the success probability depending on the initial state $\rho^{(n)}_{AB}$, and $\sigma^{\perp}_{ABE}$ is a fixed state output if the protocol is aborted. Observe that the corresponding success probabilities $p_{\rho}$, per definition, are identical for the real and ideal maps $\E^{\alpha,l}$ and $\F^{\alpha,l}$ in (\ref{eq:realmap}) and (\ref{ideal}) respectively. The two-level flag system $f$ distinguishes the accepting and aborting branches. The state $\sigma^{\alpha,\mathcal{P},l}_{ABE}$ is the asymptotic state of the entanglement distillation protocol $\mathcal{P}$ and is of the form
\begin{align}
\sigma^{\alpha, \mathcal{P},l}_{ABE} = \left(\sum\limits^1_{i,j = 0} \omega_{ij}(\alpha,\mathcal{P}) \dm{B_{ij}}_{AB} \otimes \dm{\eta_{ij}}_{E} \right)  \otimes \sigma_{E}
\label{eq:asympstate}
\end{align}
where $\ket{\eta_{ij}}$ are the leaked noise transcripts of Eve, $\ket{B_{ij}} = (\id \otimes \px^j \pz^i) \ket{B_{00}}$ the Bell-basis states, and $\omega_{ij}(\alpha,\mathcal{P})$ are probabilities which depend on the noise level of the local devices and the entanglement distillation protocol $\mathcal{P}$. For instance, if the local devices are perfect, then $\omega_{ij} = 1$ if and only if  $i=j=0,$ hence $AB$ contains a perfect Bell-pair. Finally, the states $\ket{\eta_{ij}}$ specify the sequences of noise operations, and are orthogonal for different $i,j$. If the noise transcripts are not leaked to Eve, we denote the ideal protocol by $\F^{\alpha}$. In that case, $\ket{\eta_{ij}}$ in (\ref{eq:asympstate}) is not accessible to Eve, hence we replace $\sigma^{\alpha, \mathcal{P},l}_{ABE}$ by $\sigma^{\alpha, \mathcal{P}}_{ABE} = \left(\sum_{i,j} \omega_{ij}(\alpha,\mathcal{P}) \dm{B_{ij}}_{AB} \right) \otimes \sigma_{E}$ in (\ref{ideal}). Observe that the ideal map $\F^{\alpha,l}$, which mathematically defines the type of process we wish to realize, is a global operation beyond LOCC (local operations and classical communication) which can be decomposed by concatenating the real protocol $\E^{\alpha,l}$ and a replacement map $\mathcal{S}$ (which replaces the final state only if the real protocol succeeds according to the system $f$ in (\ref{ideal})), i.e. $\F^{\alpha,l} = \mathcal{S} \circ \E^{\alpha,l}$.  \newline

An entanglement distillation protocol (together with the noise maps), given as a CPTP map $\mathcal{E}^ {\alpha,(l)}$, is confidential if it is close to the ideal map:
\begin{defi}\label{def.confidentiality}
The protocol $\E^{\alpha,(l)}$ is $\varepsilon$-confidential, if
\begin{align}\label{eqn.confidentiality}
\| (\E^{\alpha,(l)} & \otimes id_{E} -\F^{\alpha,(l)} \otimes id_{E}) (\ketbra{\psi}{\psi}{ABE}) \|_{1} \leq \varepsilon
\end{align}
holds for all initial states $\ket{\psi}_{ABE},$ where $\| \rho \|_{1} = \tr{\sqrt{\rho \rho^\dagger }}$ is the operator 1-norm for a density operator $\rho$. 
\end{defi}
The system $E$ above may contain any purification of the initial states Eve provided. \newline
In this work, we use the term security in a generic sense, and the precise meaning depends on the context. For instance, in QKD applications, security means that Alice and Bob establish a perfectly random and secret key which the adversary has negligible information about \cite{Gottesman,Lo,bib.sho,bib.got,bib.renner.diss,Koenig}. In recent times, composable security definitions have become commonplace, in which, roughly speaking, security is defined via an ideal process, and security level via the amount by which the process realized by the protocol deviates from the ideal process. In the context of QKD, this distance reduces to the distance on the generated final states of the ideal vs. realized protocol. The ideal protocol outputs a completely mixed state on Alice and Bobs system which is in tensor product with Eve. More formally, see also \cite{bib.renner.diss}, a QKD protocol $\mathcal{Q}$ is said to be $\varepsilon-$secure for initial state $\rho_{ABE}$ if  
\begin{align}
\| \sigma_{S_A S_B C E} - \sigma_{SS} \otimes \sigma_{CE} \|_1 \leq \varepsilon \label{eq:def:qkd:renner}
\end{align}
holds where $\sigma_{S_A S_B C E} = (\mathcal{Q} \otimes \id_E)(\rho_{ABE})$, $S_A$ and $S_B$ denote the output systems of Alice and Bob (corresponding the generated key), $C$ denotes the classical communication and $\sigma_{SS} = 1/|S| \sum_{s \in S} \dm{s} \otimes \dm{s}$ for orthogonal states $s$. The state $\sigma_{SS} \otimes \sigma_{CE}$ corresponds to the output of the ideal protocol. \newline
The confidentiality criterion which we introduce here follows the distance-on-maps approach introduced in the context of QKD like in e.g.  \cite{bib.postselect}. Observe that such an approach is especially tailored to compose different protocols, as the confidentiality definition concerns the distance of the real process with respect to an ideal process. Therefore the real and ideal maps $\E^{\alpha,(l)}$ and $\F^{\alpha,(l)}$ respectively are motivated by abstracting the protocol in terms of processes. It is straightforward to abstract and define the ideal map in terms of input and output relations, reflecting an ideal entanglement distillation process. As we discuss above, the ideal protocol has an $ok-$ and $fail-$branch. The $fail-$branch corresponds to the case whenever Alice and Bob abort the procedure, outputting the state $\sigma^\perp_{ABE}$. However, if the procedure succeeds then we might think of the ideal map as running the entanglement distillation protocol for infinitely many initial states, ending up in the fixed state $\sigma^{\alpha, \mathcal{P},l}_{ABE}$ of the entanglement distillation protocol $\mathcal{P}$ for noise level $\alpha$. We observe two important facts regarding that particular state: first, its the best the entanglement distillation protocol $\mathcal{P}$ can do in the presence of noise of level $\alpha$, and second, as Eve is disentangled from Alice and Bob, this state is useful for applications like quantum teleportation. Hence we refer to this state also as a private state, or equivalently, Alice and Bob share private entanglement. In contrast to (\ref{eq:def:qkd:renner}), the target state $\sigma^{\alpha, \mathcal{P},l}_{ABE}$ in the $ok-$branch is only in tensor product with respect to Eve if the noise transcripts do not leak to the adversary. In that case a secure quantum channel is feasible in terms of quantum teleportation. Otherwise, that is if the noise transcripts $\ket{\eta_{ij}}$ leak to Eve, she is in a separable state with respect to Alice and Bob, but still enabling for confidential applications. By confidential we mean here that when the final state is used for quantum  teleportation no information about the teleported state is leaked, but the final state does not guarantee that Eve cannot change the teleported state. This observation motivates the term confidentiality rather than security. \newline
The classical communication is not correlated to the output of the real protocol, thus it can be ignored, see \ref{app:sec:eppiid} for details. The robustness of the protocol \footnote{The robustness is quantified by the abort probability in the all-honest, but noisy setting.} is considered in \ref{app:sec:robust}, which enables us to assume for the subsequent analysis that all basic distillation steps succeed. 

\section{Main contribution}\label{sec:contr}

We summarize the main findings of our paper as follows: recurrence-type entanglement distillation protocols prepended by a symmetrization and a system discarding step enable confidentiality, provided that the noise transcripts do not leak to the adversary for all noise levels $\alpha$ for which distillation would be possible in the i.i.d. case. We also show that this alone implies that the final state in the accepting branch, is close to a tensor product state -- Eve is factored out. The  results regarding the BBPSSW protocol \cite{Bennett} are analytic whereas for the DEJMPS protocol \cite{bib.dejmps}  the results rely on strong numerical evidence. For low noise rates, we achieve better results via the post-selection-based reduction. In that case, no system discarding step is necessary. Finally we find that if an entanglement distillation protocol is confidential when the noise transcripts do not leak, then it also confidential if they do leak to the adversary. In particular, even in the case that Eve picks up information about all the realized noise processes during the protocol, the final output system still enables confidential quantum applications like e.g. quantum teleportation. \newline
The paper proceeds as follows. We establish necessary conditions to guarantee confidentiality for recurrence-type entanglement distillation protocols restricted to i.i.d. inputs whenever the noise transcripts are not leaked to Eve. Then, we generalize this to arbitrary initial states via the de-Finetti theorem \cite{bib.oneandahalf}. Next, we use them to prove the confidentiality criterion (\ref{eqn.confidentiality}) for entanglement distillation protocols where the noise transcripts are not leaked. Finally, this will be used to derive the confidentiality bound whenever the noise transcripts are leaked. \newline

\section{Confidentiality of entanglement distillation protocols}\label{sec:conf}

\subsection{Entanglement distillation for i.i.d inputs}\label{sec:conf:iid}
The basic step of a recurrence-type entanglement distillation protocol is summarized as follows: Alice and Bob share two noisy Bell-pairs, i.e. both have two qubits, each representing a "half" of a noisy Bell pair, and they first apply local operations to their respective parts of the Bell-pairs; next, they measure one Bell-pair and classically communicate their outcomes. Depending on the entanglement distillation protocol and the outcomes they either keep or discard the unmeasured pair. The basic step is applied to all pairs of the initial states, which comprises one distillation round. This distillation round is iterated where output states of the previous round are used as inputs for the next round. In the limit, a noiseless entanglement distillation protocol outputs a perfect Bell-pair (implying that Eve is factored out). \newline
Here, we allow for any type of noise acting (independently) on the single- and two-qubit gates appearing in the protocol \footnote{We assume that the noise characteristics of the quantum gates are constant throughout the protocol.}. Using the results of \cite{DuerStd}, by utilizing random basis changes and adding additional noise, any such general noise can be brought to a standard form: depolarizing noise for imperfect single- and two-qubit CNOT-type operations, see \ref{app:sec:eppiid}. Thus, it is sufficient to address noise in such standard form.

For such noise, one can analytically show \cite{Duer} that for the BBPSSW protocol \cite{Bennett}, there exists a unique attracting fixed point of the protocol which only depends on the noise parameters. That is, whenever the fidelity of the initial states is above some minimum fidelity $F_{\text{min}}$, depending on the noise parameters, the protocol converges towards that unique fixed point which we denote by $\sigma^{\alpha;\text{B}}_{AB}$. Observe that $\sigma^{\alpha;\text{B}}_{AB}$ is related to $\sigma^{\alpha, \mathcal{P},l}_{ABE}$ of (\ref{eq:asympstate}) by letting $\mathcal{P} = \text{B}$ and tracing out Eves system, i.e. $\sigma^{\alpha;\text{B}}_{AB} = \ptr{E}{\sigma^{\alpha, \text{B},l}_{ABE}}$. In particular, we mean by $\mathcal{P} = \text{B}$ that the BBPSSW protocol is used for entanglement distillation. We find that the output state $\sigma^{N}_{AB}$, where $N = \log_2 n$ denotes the number of successfully completed distillation layers, satisfies $\| \sigma^{N}_{AB} - \sigma^{\alpha;\text{B}}_{AB} \|_{1} \leq \epsilon_{\text{B}}$, where $\epsilon_{\text{B}}$ is a function of $N$, and it holds that $\epsilon_{\text{B}} \leq F(n) \in O\left(n^{-b_{\text{B}}(\alpha)} \right)$ and $0 < b_{\text{B}}(\alpha) \leq \log_2 3 - 1$.
\newline
For the entanglement distillation protocol of Deutsch et. al. \cite{bib.dejmps} (referred to as the DEJMPS protocol) the fixed point analysis is more complicated. In the noiseless case, DEJMPS was proven to have a unique attracting fixed point \cite{Macchiavello}. For the noisy case, we can only provide extensive numerical evidence that there exists a unique attracting fixed point, depending on the noise parameters only which we denote by $\sigma^{\alpha;\text{D}}_{AB}$, see \ref{sec.sup.dejmps}. Again, observe that $\sigma^{\alpha;\text{D}}_{AB}$ is related to $\sigma^{\alpha, \mathcal{P},l}_{ABE}$ of (\ref{eq:asympstate}) by setting $\mathcal{P} = \text{D}$ and tracing out Eves system, i.e. $\sigma^{\alpha;\text{D}}_{AB} = \ptr{E}{\sigma^{\alpha, \text{D},l}_{ABE}}$. We numerically find that for the state $\sigma^{N}_{AB}$ obtained after successfully completing $N = \log_2 n$ layers of distillation that $\| \sigma^{N}_{AB} - \sigma^{\alpha;\text{D}}_{AB} \|_{1} \leq \epsilon_{\text{D}}$ where $\epsilon_{\text{D}}$ is a function of $N$, and it holds that $\epsilon_{\text{D}} \leq F(n) \in O\left(n^{-b_{\text{D}}(\alpha)} \right)$. $b_{\text{D}}(\alpha)$ is a positive function. We note that a similar analysis, but also with analytic findings for the noiseless DEJMPS protocol was first performed in \cite{Macchiavello}. \newline
We reiterate that we assume for our analysis that all basic distillation steps succeed, since we deal with failures due to the entanglement distillation protocol with a quadratic overhead in terms of initial states, see \ref{app:sec:robust}. \newline
The final state of the entanglement distillation protocol $\mathcal{P}$ in the $ok-$branch, $\sigma_{AB}$, depends on whether the parameter estimation on $\sqrt{n}$ initial states was accurate or not. The latter occurs with an exponentially small probability in terms of initial states, see the discussion of the robustness of the protocol in \ref{app:sec:robust}. This in turn implies that the parameter estimation was accurate with probability exponentially close to unity. Therefore the results  regarding $n$ i.i.d. initial states as input to the distillation protocol $\mathcal{P}$ above imply that
\begin{align}
p_{\rho} \| \sigma_{AB} - \sigma^{\alpha;\mathcal{P}}_{AB}\|_1 \leq \epsilon_{\mathcal{P}}(n) + 2p_{\mathrm{PE}} \leq \epsilon'_{\mathcal{P}}(n) =: \varepsilon_{\mathcal{P}}(n+\sqrt{n}) \label{eq:scale:iid:dist}
\end{align}
where $p_{\mathrm{PE}} \in O(\exp(-\sqrt{n}))$ for all i.i.d. inputs $\rho_{AB}^{\otimes n + \sqrt{n}}$. This equation attains exactly the same form for both protocols with the difference in the labels, so if we substitute $\mathcal{P}$ with $\mathrm{B}$ (by writing, for example $\epsilon_{\mathrm{B}}(n)$) we refer to the BBPSSW protocol, where substituting $\mathcal{P}$ with $\mathrm{D}$ refers to the DEJMPS protocol. In similar fashion we refer from now by $\epsilon_{\mathcal{P}}(n)$ to $\epsilon'_{\mathcal{P}}(n)$ for the sake of clarity. So to summarize, the distance for $n+\sqrt{n}$ i.i.d. initial states in the $ok-$branch of the protocol is bounded by $\varepsilon_{\mathcal{P}}(n+\sqrt{n})$. \newline
Since, in the abort case, the outputs of the overall protocol $\E^\alpha$ and the ideal protocol $\F^{\alpha}$ are identical we obtain that
\EQ{
\| (\E^\alpha  -\F^\alpha ) (\rho_{AB}^{\otimes n}) \|_{1} = p_{\rho} \| \sigma_{AB} - \sigma^{\alpha;\mathcal{P}}_{AB} \|_1
 \leq \varepsilon_{\mathcal{P}}(n) \label{eq:local-iid},
}
where the probability $p_{\rho}$ depends on the initial state $\rho$ for both protocols and corresponds to the probability of parameter estimation succeeding
and completing $\log_{2}(n-\sqrt{n})$ distillation layers successfully for initial state $\rho$. Hence, in both cases, the final distance to the respective fixed points scales polynomial in terms of $n$. \newline
The functions $b_{\text{B}}(\alpha)$ and $b_{\text{D}}(\alpha)$ of the local noise level $\alpha$ govern the rate of convergence of the real protocol to the ideal protocol in the i.i.d case for entanglement distillation protocols. We numerically found that these functions monotonically increase as the local noise rate $\alpha$ tends to zero \ref{app:sec:eppiid}. Thus, increasing the fidelity of local devices (through e.g. fault tolerance) directly influences the rate of convergence, which in turn governs the confidentiality level. \newline
In contrast to $b_{\text{B}}(\alpha)$, the function $b_{\text{D}}(\alpha)$ is not upper bounded, which implies that for certain noise parameters $\alpha$ the DEJMPS protocol needs to perform fewer distillation rounds than the BBPSSW protocol to achieve the required confidentiality levels. This fast convergence is crucial for the powerful post-selection technique \cite{bib.postselect} for non i.i.d. initial states, which is not applicable for the BBPSSW protocol.
\newline
Now we use the established fixed point properties of entanglement distillation protocols for i.i.d. initial states to show that similar results hold for arbitrary initial states.

\subsection{Entanglement distillation for arbitrary inputs}\label{sec:conf:arb}

In generalizing the previous results to arbitrary initial states we make use of the de Finetti theorem \cite{bib.oneandahalf}. The basic de-Finetti results guarantee that the reduced state $\text{tr}_{n-k}\left(\rho^{(n)}_{AB}\right)$ of a permutation-invariant $n-$partite state $\rho^{(n)}_{AB}$ is close to an i.i.d state $\int \sigma_{AB}^{\otimes k}\ d\sigma$, with distance which scales as $O(k/n).$ This enables the following Lemma.
\begin{lemma}\label{lem.localconvergence}
Let $n,k \in \N$ where $k \leq n$. Furthermore, let $\E^{s \& t}$ be the real protocol and $\F^{s \& t}$ the ideal protocol including symmetrization and the tracing out of $n-k$ pairs. Moreover, let $\rho_{AB}$ be a bipartite mixed state of $n$ systems shared by Alice and Bob and let $\E$ and $\F$ denote the real and ideal protocol after symmetrization and tracing out $n-k$ pairs. Then
\begin{align}\label{eq.basicdefin}
\| \E^{s \& t}(\rho_{AB}) - \F^{s \& t}(\rho_{AB}) \|_1 \leq \frac{64k}{n} + \max_{\mu_{AB}} \|\E(\mu^{\otimes k}_{AB}) - \F(\mu^{\otimes k}_{AB})\|_1
\end{align}
\end{lemma}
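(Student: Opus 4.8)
The plan is to reduce the arbitrary-input case to the i.i.d. case by inserting a symmetrization step and a de-Finetti approximation, carefully tracking the adversarial register $E$ throughout. First I would note that the real and ideal maps with symmetrization and tracing-out, $\E^{s\&t}$ and $\F^{s\&t}$, factor as $\E^{s\&t} = \E \circ \mathcal{T} \circ \mathcal{P}_{\mathrm{sym}}$ and $\F^{s\&t} = \F \circ \mathcal{T} \circ \mathcal{P}_{\mathrm{sym}}$, where $\mathcal{P}_{\mathrm{sym}}$ is the symmetrization (random permutation of the $n$ pairs, applied jointly by Alice and Bob via the authenticated classical channel) and $\mathcal{T}$ is the partial trace over $n-k$ of the pairs. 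Since $\mathcal{P}_{\mathrm{sym}}$ is a fixed CPTP map and the trace-norm is non-increasing under CPTP maps (in particular under $\mathcal{P}_{\mathrm{sym}}$ acting on both sides), it suffices to bound $\|(\E - \F)(\mathcal{T}(\rho^{\mathrm{sym}}_{AB}))\|_1$ where $\rho^{\mathrm{sym}}_{AB} = \mathcal{P}_{\mathrm{sym}}(\rho_{AB})$ is now permutation-invariant on the $n$ pairs.

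Next I would apply the de-Finetti theorem of \cite{bib.oneandahalf} to the permutation-invariant state $\rho^{\mathrm{sym}}_{AB}$: its $k$-pair reduction $\mathcal{T}(\rho^{\mathrm{sym}}_{AB}) = \text{tr}_{n-k}(\rho^{\mathrm{sym}}_{AB})$ is $\varepsilon_{\mathrm{dF}}$-close in trace norm to a convex mixture $\int \mu_{AB}^{\otimes k}\, d\nu(\mu)$ of i.i.d. states, with $\varepsilon_{\mathrm{dF}} \le 64k/n$ (this is the constant quoted in the statement; the factor $64$ comes from the standard bound, e.g.\ $4 d^2 k/n$ with the relevant local dimension absorbed, or directly from the cited version). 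The subtlety here is that the de-Finetti statement must be applied at the level that includes a purifying system so that the approximation survives application of $\id_E$; since $\rho^{\mathrm{sym}}_{AB}$ is obtained from Eve's state by a fixed isometry/partial trace and symmetrization commutes appropriately with the $AB$/$E$ split, one works with the purified permutation-invariant state and takes $E$ to carry the purification, and the de-Finetti bound applies to the $AB$-marginal while $E$ is left untouched. Then by triangle inequality and the fact that $\E - \F$ (tensored with $\id_E$) is a difference of CPTP maps, hence has induced trace-norm at most $2$ — but more usefully is \emph{linear}, so it passes through the convex integral — I get
\begin{align}
\|(\E - \F)(\mathcal{T}(\rho^{\mathrm{sym}}_{AB}))\|_1 &\le \|(\E-\F)\|_{\diamond}\cdot \varepsilon_{\mathrm{dF}} + \Big\| (\E-\F)\Big(\int \mu_{AB}^{\otimes k}\, d\nu(\mu)\Big)\Big\|_1 \nonumber \\
&\le 2\cdot\frac{64k}{n}\cdot\frac12 + \int \|(\E-\F)(\mu_{AB}^{\otimes k})\|_1\, d\nu(\mu) \le \frac{64k}{n} + \max_{\mu_{AB}}\|\E(\mu_{AB}^{\otimes k}) - \F(\mu_{AB}^{\otimes k})\|_1. \nonumber
\end{align}
(One must be mildly careful about constants: either the de-Finetti bound is stated with the $64$ already including the diamond-norm factor of $2$, or the factor $2$ is absorbed — in any case the final constant quoted is $64$, so I would cite the precise version used and not re-derive it.)

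The main obstacle I anticipate is not the de-Finetti estimate itself but the commutation/purification bookkeeping: one must verify that the symmetrization step can genuinely be realized by Alice and Bob using only LOCC (shared randomness over the authenticated classical channel suffices, and the flag/communication registers can be ignored as argued around \cite{bib.renner.diss} in the paper), and that inserting $\mathcal{P}_{\mathrm{sym}}$ and $\mathcal{T}$ before the protocol is compatible with the definition of the ideal map $\F$ — i.e.\ that $\F^{s\&t}$ is still of the required form with success probability determined as in \eqref{ideal}. A second delicate point is that the de-Finetti approximation is for the reduced state, but the protocol $\E$ and $\F$ act on exactly those $k$ remaining pairs together with $\id_E$ on a system that still purifies the \emph{full} $n$-pair state; one resolves this by observing that tracing out $n-k$ pairs and then applying $\E\otimes\id_E$ is the same as applying $\E$ to the $k$ pairs with $E$ enlarged to include the traced-out pairs, so the de-Finetti bound on the $AB$-part controls the whole expression. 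Once these structural points are in place, the inequality \eqref{eq.basicdefin} follows, and the remaining work — bounding $\max_{\mu_{AB}}\|\E(\mu_{AB}^{\otimes k}) - \F(\mu_{AB}^{\otimes k})\|_1$ — is exactly the i.i.d.\ estimate \eqref{eq:local-iid} established in Section~\ref{sec:conf:iid}, to be invoked in the subsequent sections.
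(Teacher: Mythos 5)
Your proposal follows essentially the same route as the paper: symmetrize, trace out $n-k$ pairs, invoke Theorem II.7 of the cited de-Finetti paper (which gives the bound $32k/n$ on the reduced state), and then use the triangle inequality together with contractivity of the CPTP maps $\E$ and $\F$ and linearity through the convex integral, so that the constant $64 = 2\cdot 32$ arises from the two de-Finetti error terms rather than from any diamond-norm factor. One remark: the lemma as stated involves only the $AB$ systems, so your bookkeeping about the purifying register $E$ is unnecessary here, and your side claim that the de-Finetti bound on the $AB$ marginal would control the expression with a purifier attached is exactly what does \emph{not} hold in general — handling the purifying system is deferred to Lemma \ref{thm.globalclosenessfactor}.
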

\begin{proof}
Let $\rho_{AB}$ be a mixed state. After Alice and Bob apply a symmetrization they share a permutation invariant state $\tilde{\rho}_{AB}$. Thus we can apply Theorem II.7 of \cite{bib.oneandahalf} and have for $\xi^k_{AB} := \ptr{n-k}{\tilde{\rho}_{AB}}$ the inequality $\|\xi^k_{AB} - \int \mu^{\otimes k}_{AB} dm(\mu_{AB})\|_1 \leq 32k/n$ for some probability measure $m$ on the set of mixed states on $AB$. Moreover we note that $\E$ and $\F$ are CPTP maps. We define $\tau_k := \int \mu^{\otimes k}_{AB} dm(\mu_{AB})$. A straightforward computation shows
\begin{align*}
\|\E^{s \& t}(\rho_{AB}) - \F^{s \& t}(\rho_{AB})\|_1 & = \|\E(\xi^k_{AB}) - \F(\xi^k_{AB})\|_1 \leq \|\E(\xi^k_{AB}) - \E(\tau_k)\|_1 + \|\E(\tau_k) - \F(\xi^k_{AB})\|_1 \\
& \leq \|\E(\xi^k_{AB}) - \E(\tau_k)\|_1 + \|\E(\tau_k) - \F(\tau_k)\|_1 + \|\F(\tau_k) - \F(\xi^k_{AB})\|_1 \\
& \leq 2 \|\tau_k - \xi^k_{AB} \|_1 + \|\E(\tau_k) - \F(\tau_k)\|_1 \leq \frac{64k}{n} + \left\|(\E-\F)\left(\int \mu^{\otimes k}_{AB} dm(\mu_{AB}) \right) \right\|_1 \\
& \leq \frac{64k}{n} + \max_{\mu_{AB}} \left\|(\E-\F)\left(\mu^{\otimes k}_{AB} \right) \right\|_1
\end{align*}
which completes the proof.
\end{proof}
Therefore the application of the de-Finetti theorem introduces an additive term $\frac{64k}{n}$ when reducing arbitrary initial states to i.i.d. initial states. As the right hand side of (\ref{eq.basicdefin}) is independent of the initial state $\rho_{AB}$, (\ref{eq.basicdefin}) holds for all initial states $\rho_{AB}$. \newline
In (\ref{eq.basicdefin}) we have omitted the superscript $\alpha$ characterizing the noise level, and we will use it only if it is specifically needed. Inequality (\ref{eq.basicdefin}) implies that the properties of the fixed point (uniqueness, attractivity, noise-dependence) also hold for arbitrary initial states, if the protocol is prepended by symmetrization and a trace-out step. This enables us to prove the confidentiality criterion of Definition \ref{def.confidentiality} for entanglement distillation protocols, where the noise transcripts of L are not leaked, which will, in turn, imply the confidentiality criterion (\ref{eqn.confidentiality}) whenever the noise transcripts are leaked.

\subsection{Confidentiality of entanglement distillation protocols}\label{sec:conf:1}

The inequality in (\ref{eq:local-iid}) establishes the local properties of the protocol, and is more-or-less typical for studies of the convergence of entanglement distillation protocols in the i.i.d. case. However, it falls short of the complete characterization captured by the confidentiality criterion (\ref{eqn.confidentiality}) in two ways: first, the input states are restricted (i.i.d.); second, it fails to consider the purifying system of Eve \footnote{Technically, inequality (\ref{eq:local-iid}) is a statement about the operator norm-induced distance on maps, where expression of (\ref{eqn.confidentiality}) is the completely bounded diamond norm, relevant for security statements.}, vital in cryptographic contexts.
While the prior issue is the subject of de-Finetti and post-selection-type reductions, the latter issue can be a problem in general, as small distance of corresponding subsystems does not imply a small distance of the total systems.

However, we can resolve this issue by using the fixed point properties of entanglement distillation protocols. More precisely, we relate the two distances by the following general Lemma, proven in \ref{app:sec:defin}.
\begin{lemma}\label{thm.globalclosenessfactor}
Let $\rho$ be an arbitrary mixed state shared by Alice and Bob and let $\ket{\psi}_{ABE}$ be a purification thereof held by Eve. Furthermore, let $\mathcal{P}_1$ correspond to a (distillation-type) real protocol and $\mathcal{P}_2$ correspond to the associated (distillation-type) ideal protocol, i.e. 
\begin{align*}
\mathcal{P}_1 (\rho) &= p_{\rho} \sigma_{AB} \otimes \dm{ok} + (1-p_\rho) \sigma^{\perp}_{AB} \otimes \dm{fail}, \\
\mathcal{P}_2(\rho) &= p_{\rho} \sigma^{\alpha}_{AB} \otimes \dm{ok} + (1-p_\rho) \sigma^{\perp}_{AB} \otimes \dm{fail}. \notag
\end{align*}
where $\alpha$ characterizes the level of the noise, $\sigma^{\alpha}_{AB}$, and $\sigma^{\perp}_{AB}$ are two fixed two qubit states. Furthermore, let $\mathcal{P}_1$ and $\mathcal{P}_2$ satisfy the following properties:
\begin{enumerate}
	\item The noise transcripts do not leak to Eve.
	\item The protocol $\mathcal{P}_1$ guarantees to converge towards some state $\sigma^{\alpha}_{AB}$ within the ok-branch of the protocol and $\max_{\mu_{AB}} \|(\mathcal{P}_1 - \mathcal{P}_2)(\mu_{AB})\|_1 \leq \varepsilon$.
\end{enumerate}
Then it holds that
\begin{align}
\| (\mathcal{P}_1 \otimes id_{E} -\mathcal{P}_2 \otimes id_{E} )(\ketbra{\psi}{\psi}{ABE})\|_1 \leq (34 \cdot 4^8 +1) \varepsilon. \label{eq.globalclosefactor}
\end{align}
\end{lemma}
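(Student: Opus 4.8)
The plan is, first, to remove the abort branch. Since $\mathcal{P}_2=\S\circ\mathcal{P}_1$ with $\S$ acting only on the accepting branch, the two protocols share the same abort probability $p_{\rho}$, the same abort mechanism, and the same fixed abort output $\sigma^{\perp}_{AB}$, so the abort components of $(\mathcal{P}_1\otimes\id_E)(\dm{\psi}_{ABE})$ and $(\mathcal{P}_2\otimes\id_E)(\dm{\psi}_{ABE})$ coincide exactly and cancel. By Property (1) (no leakage), the ideal accepting output is the product $\sigma^{\alpha}_{AB}\otimes\sigma_E$ with $\sigma_E=\ptr{AB}{\sigma_{ABE}}$ and $\sigma_{ABE}$ the accepting output of $\mathcal{P}_1$ (indeed $\mathcal{P}_2\otimes\id_E=(\S\otimes\id_E)\circ(\mathcal{P}_1\otimes\id_E)$, and $\S$ discards $AB$ and re-prepares $\sigma^{\alpha}_{AB}$). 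Hence the left-hand side of (\ref{eq.globalclosefactor}) equals $p_{\rho}\|\sigma_{ABE}-\sigma^{\alpha}_{AB}\otimes\sigma_E\|_1$, which by the triangle inequality is at most $p_{\rho}\|\sigma_{ABE}-\sigma_{AB}\otimes\sigma_E\|_1+p_{\rho}\|\sigma_{AB}-\sigma^{\alpha}_{AB}\|_1$; the second term equals $\|(\mathcal{P}_1-\mathcal{P}_2)(\rho)\|_1\le\varepsilon$ and is the origin of the ``$+1$'' in the constant, so everything reduces to the decoupling estimate: the accepting output of $\mathcal{P}_1$ is, up to $O(\varepsilon)$, a tensor product of a state on $AB$ with a state on $E$.

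For the decoupling estimate I would use Property (2) in its full strength, i.e. that $\mathcal{P}_1$ outputs something $\varepsilon$-close to the \emph{one and the same} state $\sigma^{\alpha}_{AB}$ for \emph{every} input, not merely for $\rho$. Let $\Q$ be the CP, trace-non-increasing map describing the accepting branch of $\mathcal{P}_1$, so $(\Q\otimes\id_E)(\dm{\psi}_{ABE})=p_{\rho}\,\sigma_{ABE}$. Probing $\Q$ with the computational basis $\{\ket k\}$ and with the superpositions $\tfrac1{\sqrt2}(\ket k+\ket l)$ and $\tfrac1{\sqrt2}(\ket k+i\ket l)$ — all legitimate inputs — and invoking linearity, one controls every ``matrix element'' $\Q(\ket k\bra l)$: the diagonal ones directly, $\|\Q(\dm k)-p_{k}\,\sigma^{\alpha}_{AB}\|_1\le\varepsilon$, and the off-diagonal ones by combining the four estimates, $\|\Q(\ket k\bra l)-\tr(\Q(\ket k\bra l))\,\sigma^{\alpha}_{AB}\|_1=O(\varepsilon)$. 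Equivalently, in terms of the Choi operator $J_{\Q}=(\Q\otimes\id)(\dm{\Omega})$, this is $\|J_{\Q}-\sigma^{\alpha}_{AB}\otimes G^{\mathsf T}\|_1=O(\varepsilon)$, where $G$ is the positive operator representing the success functional $\tau\mapsto p_{\tau}=\tr\Q(\tau)$; writing the purification as a filtered maximally entangled vector $\ket{\psi}_{ABE}=(\one_{AB}\otimes M)\ket{\Omega}$ with $\|M\|_{\infty}\le1$, and noting $p_{\rho}\,\sigma^{\alpha}_{AB}\otimes\sigma_E=(\one\otimes M)(\sigma^{\alpha}_{AB}\otimes G^{\mathsf T})(\one\otimes M^{\dagger})$, contraction of the trace norm under conjugation by $\one\otimes M$ transfers the estimate to the purified level. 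The finitely many triangle-inequality and block steps — the registers in play being the two-qubit output and the two-qubit-pair registers of a basic distillation step — then yield the concrete (and deliberately un-optimized) constant $34\cdot4^{8}$.

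The main obstacle is precisely that $\varepsilon$-closeness of the reduced maps, $\max_{\mu}\|(\mathcal{P}_1-\mathcal{P}_2)(\mu)\|_1\le\varepsilon$, does \emph{not} in general upgrade to $\varepsilon$-closeness of $\mathcal{P}_i\otimes\id_E$ on purified inputs — the difference can grow with the dimension of the purifying system, and, concretely, a channel whose $AB$-marginal is close to a \emph{fixed} $\sigma^{\alpha}_{AB}$ on a single input may still be strongly entangled with $E$. What rescues the statement is exactly the fixed-point hypothesis: constraining the output for \emph{all} inputs forces not just the diagonal but also the coherences of $\Q$ to be locked to $\sigma^{\alpha}_{AB}$, which is what factors $E$ out. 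A secondary point to carry carefully through the block estimates is that the success probability $p_{\mu}$ genuinely depends on $\mu$, so $\Q$ cannot be treated as close to a strictly constant channel; tracking the functional $G$ (equivalently the matrix $G^{\mathsf T}$) rather than a scalar is what keeps all errors linear in $\varepsilon$.
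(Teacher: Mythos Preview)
Your opening reduction is fine and matches the paper exactly: cancel the fail branch, then split via the triangle inequality into $p_\rho\|\sigma_{ABE}-\sigma_{AB}\otimes\sigma_E\|_1$ plus $p_\rho\|\sigma_{AB}-\sigma^\alpha_{AB}\|_1\le\varepsilon$, the latter giving the ``$+1$''.

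The gap is in your decoupling estimate. Your Choi/input-probing argument incurs a constant that scales with the \emph{input} dimension of $\mathcal{Q}$, not a universal one. The map $\mathcal{P}_1$ takes as input the full $n$-pair ensemble (dimension $4^n$), so the expansion $J_{\mathcal{Q}}=\sum_{k,l}\mathcal{Q}(\ket{k}\bra{l})\otimes\ket{k}\bra{l}$ has $16^n$ blocks; controlling each block to $O(\varepsilon)$ via four probe states and then summing gives only $\|J_{\mathcal{Q}}-\sigma^\alpha_{AB}\otimes G^{\mathsf T}\|_1\le O(16^n)\,\varepsilon$, which is useless. Your parenthetical that ``the registers in play'' are just the two-qubit output and ``the two-qubit-pair registers of a basic distillation step'' conflates $\mathcal{P}_1$ with a single recurrence step; the lemma is about the whole protocol, and nothing in your argument prevents the input dimension from entering.

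The paper's route avoids this by working on the \emph{output} side. It observes that any rank-one measurement $\dm{\phi}$ Eve makes on $E$ commutes with $\mathcal{P}_1\otimes\id_E$, hence is equivalent to Eve steering the \emph{input} to some $\rho^\phi_{AB}$ and then running the protocol; the hypothesis then bounds the steered output $\sigma^\phi_{AB}$ near $\sigma^\alpha_{AB}$, and a triangle inequality through $\sigma^\alpha_{AB}$ gives $\|\sigma^\phi_{AB}-\sigma_{AB}\|_1\le(1/p_\rho+1/p_{\rho^\phi})\varepsilon$. Using $p_\rho p_E(\phi)=p_{\rho^\phi}p'_E(\phi)$ and a preliminary lemma showing one may restrict to Eve-measurements with $p_E(\phi)\ge1/16$, this becomes $\le 17\varepsilon/p_\rho$. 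A separate ``steering lemma'' (essentially tomography on the $4$-dimensional $AB$ output together with an effectively $4$-dimensional $E$) converts uniform non-steerability into $\|\sigma_{ABE}-\sigma_{AB}\otimes\sigma_E\|_1\le 2\cdot4^8\cdot 17\,\varepsilon/p_\rho$, yielding the $34\cdot4^8$. The whole point is that only output-side dimensions enter, which is what your Choi argument does not achieve.
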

The factor $34 \cdot 4^8 +1$ arises as an upper bound on the distance of the given states from states in product form based on the notion of non-steerability we introduce (see \ref{app:sec:defin} for details). In our computations we managed to prove the key lemma in a manner which is proportional to the dimension of the systems, more precisely, the overall size of the corresponding density matrix. It may be the case that the bound of Lemma \ref{thm.globalclosenessfactor} could hold without the dependence on the system size (and indeed, with smaller constants), however this was not necessary for our purposes. \newline
Lemma \ref{thm.globalclosenessfactor} is vital as it allows us to employ the de-Finetti theorem \cite{bib.oneandahalf}. Hence, for the protocols $\E^{s\&t}$ and $\F^{s\&t}$, by combining Lemma \ref{lem.localconvergence} with Lemma \ref{thm.globalclosenessfactor}, we obtain the following Theorem.
\begin{theo}[de-Finetti-based reduction technique]
Let $\E^{s \& t}$ be the real protocol and $\F^{s \& t}$ the ideal protocol including symmetrization and the tracing out of $n-k$ pairs, taking $n$ input pairs and $k \leq n$ and utilizing entanglement distillation protocol $\mathcal{P}$. Then we have
\begin{align}
\max_{\ket{\psi}_{ABE}} \|(\E^{s \& t} \otimes id_{E})(\ketbra{\psi}{\psi}{}) - (\F^{s \& t} \otimes id_{E})(\ketbra{\psi}{\psi}{})\|_1 \leq (34 \cdot 4^8 +1) \left(\frac{64k}{n} + \varepsilon_{\mathcal{P}}(k) \right)  \label{eq:defin:tech}
\end{align}
where $\varepsilon_{\mathcal{P}}(k)$ denotes the maximum distance of the real and ideal protocol without symmetrization and tracing out step using entanglement distillation protocol $\mathcal{P}$ in the $ok-$branch for $k$ i.i.d. initial states, i.e. Eq. (\ref{eq:local-iid}).
\end{theo}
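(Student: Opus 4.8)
The plan is to derive (\ref{eq:defin:tech}) by composing the two lemmas just established, in the right order: use Lemma \ref{lem.localconvergence} first, to remove the i.i.d.\ restriction on the input at the level of the Eve-free $AB$-maps, and then use Lemma \ref{thm.globalclosenessfactor} to upgrade the resulting operator-norm-on-maps bound to one that accounts for Eve's purifying system.

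First I would apply Lemma \ref{lem.localconvergence} to the pair $\E^{s \& t}$, $\F^{s \& t}$. The second term on the right-hand side of (\ref{eq.basicdefin}), $\max_{\mu_{AB}}\|\E(\mu_{AB}^{\otimes k}) - \F(\mu_{AB}^{\otimes k})\|_1$, is by definition the maximal i.i.d.\ distance, and by (\ref{eq:local-iid}) (read with $k$ in place of $n$) it is bounded by $\varepsilon_{\mathcal{P}}(k)$. Since the whole bound is uniform in $\rho_{AB}$, this yields
\[
\max_{\rho_{AB}}\ \|(\E^{s \& t} - \F^{s \& t})(\rho_{AB})\|_1 \ \le\ \frac{64k}{n} + \varepsilon_{\mathcal{P}}(k) \ =:\ \varepsilon .
\]

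Next I would check that $\E^{s \& t}$ and $\F^{s \& t}$ fit the template of Lemma \ref{thm.globalclosenessfactor} with this $\varepsilon$, and then invoke it. Hypothesis (1) holds because we are in the non-leaking setting. For hypothesis (2): by the remark following (\ref{eq.basicdefin}), prepending symmetrization and the trace-out step preserves the uniqueness, attractivity and noise-dependence of the fixed point, so $\F^{s \& t}$ outputs in its ok-branch the single two-qubit state $\sigma^{\alpha;\mathcal{P}}_{AB}$ (a function of $\alpha$ and $\mathcal{P}$ only), $\E^{s \& t}$ converges towards it, both maps share the fixed abort state $\sigma^{\perp}_{AB}$ and --- since the ideal map is a post-processing of the real one --- the same success probability $p_\rho$; and the quantitative condition $\max_{\mu_{AB}}\|(\E^{s \& t} - \F^{s \& t})(\mu_{AB})\|_1 \le \varepsilon$ is exactly what the previous step provides. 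Lemma \ref{thm.globalclosenessfactor} then gives, for every purification $\ket{\psi}_{ABE}$,
\[
\|(\E^{s \& t}\otimes \id_E - \F^{s \& t}\otimes \id_E)(\ketbra{\psi}{\psi}{ABE})\|_1 \ \le\ (34 \cdot 4^8 +1)\,\varepsilon ,
\]
and taking the maximum over $\ket{\psi}_{ABE}$ produces (\ref{eq:defin:tech}).

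The bookkeeping above is routine; the substantive point --- and the place where the argument would fail for a badly-behaved protocol --- is the verification of hypothesis (2) of Lemma \ref{thm.globalclosenessfactor}, namely that $\F^{s \& t}$ really has the rigid structure ``single fixed state $\sigma^{\alpha;\mathcal{P}}_{AB}$ in the ok-branch, depending on the noise only''. This rests on the fixed-point results of Sec.~\ref{sec:conf:iid} (analytic for BBPSSW, numerical for DEJMPS) surviving the de-Finetti reduction, and on the robustness and parameter-estimation analysis of the appendices, which is what lets us treat all basic distillation steps as succeeding and absorb the $O(\exp(-\sqrt{n}))$ parameter-estimation error into $\varepsilon_{\mathcal{P}}$. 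A second thing to be careful about is the order of composition: Lemma \ref{thm.globalclosenessfactor} consumes the Eve-free distance as a scalar, so one must first turn the i.i.d.\ bound (\ref{eq:local-iid}) into a bound valid for all symmetrized $n$-partite inputs via Lemma \ref{lem.localconvergence}, and only then feed it into the ``add-Eve'' lemma; doing it the other way round does not work.
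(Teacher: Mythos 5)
Your proposal is correct and follows essentially the same route as the paper: apply Lemma \ref{lem.localconvergence} together with (\ref{eq:local-iid}) to obtain the uniform bound $\frac{64k}{n}+\varepsilon_{\mathcal{P}}(k)$ for all inputs, note that this transfers the fixed-point (convergence) property to arbitrary initial states so that the hypotheses of Lemma \ref{thm.globalclosenessfactor} are met, and then invoke that lemma and maximize over purifications. The points you flag as requiring care --- verifying hypothesis (2) after symmetrization and trace-out, and doing the de-Finetti reduction before the ``add-Eve'' step --- are exactly the steps emphasized in the paper's own argument.
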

\begin{proof}
Suppose Eve prepares a purification $\ket{\psi}_{ABE}$ of the state $\rho_{AB}$ shared by Alice and Bob. Recall that the real and ideal protocol including symmetrization and the tracing out of $n-k$ pairs applied to initial state $\rho_{AB}$ read as 
\begin{align*}
\E^{s \& t} (\rho_{AB}) = p_{\rho} \sigma_{AB} \otimes \dm{ok} + (1-p_{\rho}) \sigma^{\perp}_{AB} \otimes \dm{fail}, \\
\F^{s \& t} (\rho_{AB}) = p_{\rho} \sigma^{\alpha,\mathcal{P}}_{AB} \otimes \dm{ok} + (1-p_{\rho}) \sigma^{\perp}_{AB} \otimes \dm{fail}
\end{align*}
and observe that we have for the initial state $\rho_{AB}$ by Lemma \ref{lem.localconvergence} that
\begin{align}
\| (\E^{s \& t} -\F^{s \& t})(\rho_{AB})\|_1 = p_{\rho} \| \sigma_{AB} - \sigma^{\alpha,\mathcal{P}}_{AB} \|_1 \leq \left(\frac{64k}{n} + \max_{\mu_{AB}} \left\|(\E-\F)\left(\mu^{\otimes k}_{AB} \right) \right\|_1 \right) \label{eq:defin:proof:2}
\end{align}
where $\E$ and $\F$ denote the real and ideal protocol after symmetrization and tracing out $n-k$ pairs. Since the right-hand side of (\ref{eq:defin:proof:2}) is independent of the initial state $\rho_{AB}$ it holds for all initial states of the protocol. Therefore, the properties of the fixed point (unique, attracting and depending on the noise parameters only) translate from i.i.d. initial states to arbitrary initial states. Hence the protocol guarantees that it converges towards the fixed point of the entanglement distillation protocol. \newline
Additionally, by inserting (\ref{eq:local-iid}) in (\ref{eq:defin:proof:2}) we find
\begin{align}
\| (\E^{s \& t} -\F^{s \& t})(\rho_{AB})\|_1 \leq \left(\frac{64k}{n} + \varepsilon_{\mathcal{P}}(k) \right) \label{eq:defin:proof:3}.
\end{align}
This implies that the real protocol indeed converges towards the fixed point, and, thus we can apply Lemma \ref{thm.globalclosenessfactor} to the protocols $\E^{s \& t}$ and $\F^{s \& t}$ for the purification $\ket{\psi}_{ABE}$ of $\rho_{AB}$ and we find by using (\ref{eq:defin:proof:3}) that
\begin{align}
\|(\E^{s \& t} \otimes id_{E})(\ketbra{\psi}{\psi}{}) - (\F^{s \& t} \otimes id_{E})(\ketbra{\psi}{\psi}{})\|_1 \leq (34 \cdot 4^8 +1) \left(\frac{64k}{n} + \varepsilon_{\mathcal{P}}(k) \right). \label{eq:defin:proof:4}
\end{align}
Taking the maximum in (\ref{eq:defin:proof:4}) completes the proof.
\end{proof}
Thus, we can reach arbitrary confidentiality levels, however at the cost of wasting some pairs. The scaling of the confidentiality parameter, i.e. the right-hand side of (\ref{eq:defin:tech}), is linear in the number of initial states $n$, due to the use of the ``basic'' de Finetti approach. 
\newline
If the local noise is low, we can do better in terms of scaling and efficiency, using the post-selection technique \cite{bib.postselect}. For that purpose, we first establish a result similar to (\ref{eq.globalclosefactor}) by using the fact that the resulting state of the protocol, including L, is pure, see \ref{app:sec:eppiid}. More precisely, we have the following Lemma, proven in \ref{app:sec:post}.
\begin{lemma}\label{lem:forpost}
Let $\E$ be the real protocol which guarantees to converge towards a unique and attracting fixed point depending on the noise parameter only and let $\F$ be the ideal protocol. Furthermore let $\rho$ be a mixed state (consisting of $n$ systems) shared by Alice and Bob. If the extension of $\E$ and $\F$ to the system of L satisfies $\|\E_{\text{L}}(\rho) - \F_{\text{L}}(\rho)\|_1 \leq \varepsilon(n)$, then
\begin{align*}
\|(\E \otimes id_{E'})(\ketbra{\psi}{\psi}{ABE'}) &- (\F \otimes id_{E'})(\ketbra{\psi}{\psi}{ABE'})\|_1 \leq 4 \sqrt{\varepsilon(n)}
\end{align*}
for all purifications $\ket{\psi}_{ABE'}$ of $\rho$.
\end{lemma}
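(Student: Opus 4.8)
The plan is to lift the hypothesis, which only controls the marginal on Alice's, Bob's and the lab-demon's registers $ABL$, to the full state that also carries Eve's purifying system $E'$. Two structural facts carry the whole argument, both available here and recorded just above (cf. \ref{app:sec:eppiid}): (i) on the accepting branch the real protocol together with the \emph{coherent} lab demon $L$ acts as a partial isometry, so that applied to the pure input $\dm{\psi}_{ABE'}$ with $L$ retained it outputs a \emph{pure} state, $(\E_{\text{L}}\otimes\id_{E'})(\dm{\psi}_{ABE'})=\dm{\chi}_{ABLE'}$, with $\tr_{L}\dm{\chi}_{ABLE'}=(\E\otimes\id_{E'})(\dm{\psi}_{ABE'})$, $\tr_{E'}\dm{\chi}_{ABLE'}=\E_{\text{L}}(\rho)$ and $\tr_{ABL}\dm{\chi}_{ABLE'}=\rho_{E'}$; and (ii) the $L$-extended ideal map prepares the fixed point \emph{together with its purification in $L$}, i.e.\ $\F_{\text{L}}(\rho)$ is a fixed \emph{pure} state $\dm{\Omega}_{ABL}$ with $\tr_{L}\dm{\Omega}_{ABL}=\sigma^{\alpha}_{AB}$, so that on the accepting branch $(\F\otimes\id_{E'})(\dm{\psi}_{ABE'})=\sigma^{\alpha}_{AB}\otimes\rho_{E'}$ (Eve factored out).

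\textbf{Steps.} First I would combine the hypothesis with H\"older's inequality to obtain $\langle\Omega|\E_{\text{L}}(\rho)|\Omega\rangle\ge\langle\Omega|\F_{\text{L}}(\rho)|\Omega\rangle-\|\E_{\text{L}}(\rho)-\F_{\text{L}}(\rho)\|_{1}\ge 1-\varepsilon(n)$. Since $\langle\Omega|\E_{\text{L}}(\rho)|\Omega\rangle=\big\|(\langle\Omega|_{ABL}\otimes\id_{E'})\ket{\chi}_{ABLE'}\big\|^{2}$, I would then write $\sqrt{p}\,\ket{e}_{E'}:=(\langle\Omega|_{ABL}\otimes\id_{E'})\ket{\chi}_{ABLE'}$ for a unit vector $\ket{e}_{E'}$ and $p\ge 1-\varepsilon(n)$; because $|\langle\Omega,e|\chi\rangle|^{2}=p$, this yields the key factorisation estimate $\|\dm{\chi}_{ABLE'}-\dm{\Omega}_{ABL}\otimes\dm{e}_{E'}\|_{1}=2\sqrt{1-p}\le 2\sqrt{\varepsilon(n)}$. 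Next I would trace out subsystems and invoke monotonicity of the trace norm: tracing out $L$ gives $\|(\E\otimes\id_{E'})(\dm{\psi}_{ABE'})-\sigma^{\alpha}_{AB}\otimes\dm{e}_{E'}\|_{1}\le 2\sqrt{\varepsilon(n)}$, while tracing out $ABL$ gives $\|\rho_{E'}-\dm{e}_{E'}\|_{1}\le 2\sqrt{\varepsilon(n)}$, hence $\|\sigma^{\alpha}_{AB}\otimes\dm{e}_{E'}-\sigma^{\alpha}_{AB}\otimes\rho_{E'}\|_{1}\le 2\sqrt{\varepsilon(n)}$. Finally, adding the last two bounds via the triangle inequality and inserting $(\F\otimes\id_{E'})(\dm{\psi}_{ABE'})=\sigma^{\alpha}_{AB}\otimes\rho_{E'}$ gives $\|(\E\otimes\id_{E'})(\dm{\psi}_{ABE'})-(\F\otimes\id_{E'})(\dm{\psi}_{ABE'})\|_{1}\le 4\sqrt{\varepsilon(n)}$. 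The aborting branch is dispatched trivially since the two maps agree there, the success probability $p_{\rho}$ factors out of both sides, and nothing is used about $\ket{\psi}_{ABE'}$ beyond its being a purification of $\rho$, so the bound holds for all purifications.

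\textbf{Main obstacle.} The genuinely protocol-specific point — everything else is routine manipulation — is the one behind the factorisation estimate: one must exploit \emph{both} the purity of the $L$-extended real output \emph{and} the fact that the $L$-extended ideal state is the pure state $\ket{\Omega}_{ABL}$ whose $AB$-reduction is the fixed point, so that mere $\varepsilon(n)$-closeness of the $ABL$-marginals forces $\ket{\chi}_{ABLE'}$ itself to be within $\sqrt{\varepsilon(n)}$ of the product $\ket{\Omega}_{ABL}\otimes\ket{e}_{E'}$. Without the purity of $\F_{\text{L}}(\rho)$ this step collapses: closeness of $ABL$-marginals of two states then only pins $\tr_{L}\dm{\chi}_{ABLE'}$ down up to a unitary on $E'$ and up to the (possibly strong) $AB$--$E'$ correlations carried by a generic purification of $\F_{\text{L}}(\rho)$, which is not the asserted statement. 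A secondary, purely technical point is to check that $E'$ is large enough to carry the vectors used; if not, one first enlarges it, which does not affect the bound.
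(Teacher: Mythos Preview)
Your argument is correct and arrives at the same constant $4\sqrt{\varepsilon(n)}$, but the route differs from the paper's in one structural respect worth noting. The paper isolates the key estimate as an auxiliary lemma (Lemma~\ref{lem.localstates}): if a bipartite state $\rho_{A'B'}$ has $A'$-marginal $\varepsilon$-close to a pure state $\dm{\varphi}$ and $B'$-marginal equal to $\mu_{B'}$, then $\|\rho_{A'B'}-\dm{\varphi}\otimes\mu_{B'}\|_1\le 4\sqrt{\varepsilon}$. This lemma is proved via Uhlmann-type purification bounds (Renner's Lemmas A.2.3 and A.2.7) and does \emph{not} assume that $\rho_{A'B'}$ itself is pure; the paper then applies it with $A'=ABEL$, $B'=E'$. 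You instead exploit directly that the $L$-extended real output on the accepting branch is pure (the coherent lab-demon model), turning the marginal bound into an overlap bound $|\langle\Omega,e|\chi\rangle|^2\ge 1-\varepsilon$ and then using the exact fidelity/trace-distance relation for pure states. Your route is more elementary and self-contained; the paper's route is slightly more general in that it would still go through if the $L$-dilated output were mixed, at the price of invoking the external purification lemmas.

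One minor imprecision: the hypothesis $\|\E_L(\rho)-\F_L(\rho)\|_1\le\varepsilon(n)$ includes the flag, so after restricting to the accepting branch it reads $p_\rho\|\sigma_{ABL}-\dm{\Omega}\|_1\le\varepsilon(n)$, giving $p\ge 1-\varepsilon(n)/p_\rho$ rather than $1-\varepsilon(n)$. Your factorisation estimate then yields $2\sqrt{\varepsilon(n)/p_\rho}$ on each piece, and only after multiplying back by $p_\rho$ do you recover $4\sqrt{p_\rho\,\varepsilon(n)}\le 4\sqrt{\varepsilon(n)}$. This is exactly how the paper handles it; your ``$p_\rho$ factors out'' remark is right in the end but hides this square-root gain.
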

This Lemma allows us to prove the closeness on any purification from the closeness of the reduced systems, and finally to derive confidentiality from the performance of the ideal protocol via the following Theorem.
\begin{theo}[Post-selection-based reduction technique]
Let $\E^{s}$ be the real protocol and $\F^{s}$ the ideal protocol preceded by a symmetrization step operating on $n$ input pairs. Furthermore let $\max_{\mu_{AB}} \|\E(\mu^{\otimes n}_{AB}) - \F(\mu^{\otimes n}_{AB})\|_1 \leq \varepsilon_{\mathcal{P}}(n)$, see (\ref{eq:local-iid}), where $\E$ and $\F$ denote the sub-protocols after symmetrization (i.e. the protocols without the symmetrization step) and $\mathcal{P}$ the entanglement distillation protocol. Then we have
\begin{align}
\max_{\ket{\psi}_{ABE'}} \|(\E^{s} \otimes id_{E'})(\ketbra{\psi}{\psi}{}) - (\F^{s} \otimes id_{E'})(\ketbra{\psi}{\psi}{})\|_1 \leq 4 \sqrt{2} g_{n,d} \sqrt[4]{\varepsilon_{\mathcal{P}}(n)} \label{eq:post:redudction}
\end{align}
where $g_{n,d} = {n+15 \choose n}$.
\end{theo}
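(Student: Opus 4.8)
The plan is to combine the post-selection technique of \cite{bib.postselect} with Lemma \ref{lem:forpost}. First I would recall the structure of the post-selection technique: for a permutation-covariant map $\Delta = \E^{s} - \F^{s}$ acting on $n$ systems, the diamond-norm-type quantity $\max_{\ket{\psi}}\|(\Delta \otimes id_{E'})(\dm{\psi})\|_1$ is bounded by $g_{n,d}$ times the distance $\|(\Delta \otimes id_{R})(\dm{\zeta})\|_1$ evaluated on the single fixed de Finetti reference state $\ket{\zeta}_{AB^nR} = \int \ket{\mu}_{ABR}^{\otimes n} d\mu$, where the integral is over a suitable measure on pure states on $AB$ and its purifying system $R$, and $g_{n,d} = \binom{n+d^2-1}{n}$ with $d^2 = 16$ here since each pair lives on a two-qubit (dimension $4$) space, giving $g_{n,d} = \binom{n+15}{n}$. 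Permutation covariance of $\E^{s}$ and $\F^{s}$ is exactly what the prepended symmetrization step buys us, so this application is legitimate.

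Next I would estimate the reference-state distance. Writing $\rho_\mu = \dm{\mu}_{AB}$, the state $\ket{\zeta}$ is a purification of the i.i.d.-mixture $\int \rho_\mu^{\otimes n} d\mu$; more usefully, conditioned on the purifying register it is $\rho_\mu^{\otimes n}$ with a purification whose purifying part we may absorb into $E'$. So it suffices to control $\|(\E \otimes id)(\dm{\psi_\mu^{\otimes n}}) - (\F \otimes id)(\dm{\psi_\mu^{\otimes n}})\|_1$ for purifications of $\rho_\mu^{\otimes n}$, uniformly in $\mu$. This is precisely the setting of Lemma \ref{lem:forpost}: the hypothesis there requires the reduced (L-extended) distance $\|\E_{\text{L}}(\rho_\mu^{\otimes n}) - \F_{\text{L}}(\rho_\mu^{\otimes n})\|_1 \leq \varepsilon(n)$, which is supplied by the i.i.d. bound (\ref{eq:local-iid}), i.e. $\varepsilon(n) = \varepsilon_{\mathcal{P}}(n)$ after accounting for the fact that including L does not change the reduced-state distance (the L-register output is pure and factorizes in the ok-branch, as noted before Lemma \ref{lem:forpost}, so the trace-distance bound is unaffected). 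Lemma \ref{lem:forpost} then gives $4\sqrt{\varepsilon_{\mathcal{P}}(n)}$ on every such purification, hence on $\ket{\zeta}$ conditioned on the reference register; convexity of the trace norm under the $d\mu$ average preserves this. Feeding $4\sqrt{\varepsilon_{\mathcal{P}}(n)}$ through the post-selection bound yields $g_{n,d} \cdot 4\sqrt{\varepsilon_{\mathcal{P}}(n)}$.

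The remaining gap is cosmetic: the claimed bound is $4\sqrt{2}\,g_{n,d}\sqrt[4]{\varepsilon_{\mathcal{P}}(n)}$, with a fourth root rather than a square root. This arises because the post-selection technique, when the target is a \emph{distance} on maps (rather than a statement that a single map is close to a fixed map), is applied in a ``doubled'' form: one writes the trace distance via an optimal measurement, applies the post-selection inequality to the resulting (positive) quantity, and then the square-root from the measurement/fidelity bookkeeping compounds with the square-root already produced by Lemma \ref{lem:forpost}, producing $\sqrt[4]{\cdot}$ and the extra $\sqrt{2}$. Concretely I would invoke the variant of \cite{bib.postselect} that bounds $\|\Delta\|_\diamond$-type quantities for differences of CPTP maps and track the constants. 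So the key steps in order are: (i) verify permutation covariance of $\E^s,\F^s$ from the symmetrization step; (ii) apply the post-selection inequality to reduce to the de Finetti reference state, introducing $g_{n,d} = \binom{n+15}{n}$; (iii) decompose the reference state over $d\mu$ and apply Lemma \ref{lem:forpost} conditioned on the reference register, using (\ref{eq:local-iid}) as the input hypothesis; (iv) assemble the constants to obtain $4\sqrt{2}\,g_{n,d}\sqrt[4]{\varepsilon_{\mathcal{P}}(n)}$. The main obstacle is step (iii)/(iv): ensuring that Lemma \ref{lem:forpost}'s hypothesis is genuinely met on the reference state (in particular that the convergence-to-unique-fixed-point property, which Lemma \ref{lem:forpost} assumes, holds for the i.i.d.\ mixture and survives the purification), and correctly bookkeeping how the square root of Lemma \ref{lem:forpost} interacts with the post-selection reduction to land on the fourth root with the stated constant.
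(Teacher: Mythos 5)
Your steps (i) and (ii) match the paper: symmetrization gives permutation covariance, the post-selection theorem reduces the maximization over purifications to a single purification $\ket{\tau}_{ABE'}$ of the de Finetti state $\tau'=\int \mu_{AB}^{\otimes n}d\eta(\mu)$, at the cost of the factor $g_{n,d}=\binom{n+15}{n}$, and Lemma \ref{lem:forpost} is indeed the tool applied to that purification. The genuine gap is in step (iii)/(iv). You assert that ``including L does not change the reduced-state distance'' because the L-register ``factorizes in the ok-branch,'' and hence feed $\varepsilon(n)=\varepsilon_{\mathcal{P}}(n)$ into Lemma \ref{lem:forpost}. This is false: the fixed point including L is $\sum_{ij}\omega_{ij}\dm{B_{ij}}_{AB}\otimes\dm{\eta_{ij}}_{L}$, i.e.\ L is \emph{correlated} with $AB$ (it records which Bell error occurred), and the hypothesis of Lemma \ref{lem:forpost} concerns exactly the maps extended to this correlated register, whose ok-branch ideal output is the \emph{pure} state $\ket{\psi^f}_{ABEL}$. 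Closeness of the $AB$ marginals does not give the same closeness of these extended outputs; what the paper uses (appendix \ref{sec:det:l}) is the purification square-root relation $\|(\E_{\mathrm{L}}-\F_{\mathrm{L}})(\mu^{\otimes n}_{AB})\|_1\le 2\sqrt{\|(\E-\F)(\mu^{\otimes n}_{AB})\|_1}\le 2\sqrt{\varepsilon_{\mathcal{P}}(n)}$. Feeding this into Lemma \ref{lem:forpost} gives $4\sqrt{2\sqrt{\varepsilon_{\mathcal{P}}(n)}}=4\sqrt{2}\,\sqrt[4]{\varepsilon_{\mathcal{P}}(n)}$, and the post-selection step then contributes only the multiplicative $g_{n,d}$. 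In other words, the fourth root and the $\sqrt{2}$ come from the composition of two purification-type square roots (extension to L, then extension to an arbitrary $E'$ via Lemma \ref{lem:forpost}), not from any ``doubled'' application of the post-selection technique to a distance of maps; the latter mechanism you invoke does not exist in the argument and would not produce these constants.

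A secondary, smaller issue: your plan to ``condition on the purifying register'' of $\ket{\tau}$ and treat it as $\rho_\mu^{\otimes n}$ uniformly in $\mu$, then average, is not sound as stated, because a purification of the mixture $\tau'$ is not a mixture of purifications of the $\rho_\mu^{\otimes n}$, and Lemma \ref{lem:forpost} is not stated in a conditioned form. The paper avoids this entirely: it bounds $\|\E^{s}_{\mathrm{L}}(\tau')-\F^{s}_{\mathrm{L}}(\tau')\|_1\le\max_{\mu}\|(\E_{\mathrm{L}}-\F_{\mathrm{L}})(\mu^{\otimes n}_{AB})\|_1$ by convexity of the trace norm applied to the unpurified de Finetti mixture, and then applies Lemma \ref{lem:forpost} once, directly to the purification $\ket{\tau}_{ABE'}$ of $\tau'$.
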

\begin{proof}
We observe that $\E^{s}$ and $\F^{s}$ are permutation invariant maps due to the symmetrizazion step. Thus we can apply the post-selection technique of \cite{bib.postselect} which implies
\begin{align}\label{inequ.post.1}
\max_{\ket{\psi}_{ABE'}} \|(\E^{s} \otimes id_{E'})(\ketbra{\psi}{\psi}{}) - (\F^{s} \otimes id_{E'})(\ketbra{\psi}{\psi}{})\|_1 \leq g_{n,d} \|(\E^{s} \otimes id_{E'})(\ketbra{\tau}{\tau}{ABE'}) - (\F^{s} \otimes id_{E'})(\ketbra{\tau}{\tau}{ABE'})\|_1
\end{align}
where $\ket{\tau}_{ABE'}$ is a purification of the de-Finetti Hilbert-Schmidt state, hence $\ptr{E'}{\ketbra{\tau}{\tau}{ABE'}} = \int \mu^{\otimes n}_{AB} d \eta(\mu) =: \tau'$ where $\eta$ is the measure induced by the Hilbert-Schmidt metric on $\text{End}(\C^4)$. Furthermore, we note that we have for the extensions of $\E^{s}$ and $\F^{s}$ to L, i.e. the maps $\E^{s}_{\mathrm{L}}$ and $\F^{s}_{\mathrm{L}}$, that
\begin{align}
\|\E^{s}_{\mathrm{L}}(\tau') - \F^{s}_{\mathrm{L}}(\tau')\|_1 = \left\|(\E^{s}_{\mathrm{L}} - \F^{s}_{\mathrm{L}})\left(\int \mu^{\otimes n}_{AB} d \eta(\mu) \right) \right\|_1 \leq  \max_{\mu_{AB}} \left\|(\E_{\mathrm{L}}-\F_{\mathrm{L}})\left(\mu^{\otimes n}_{AB} \right) \right\|_1. \label{eq:post:proof:1}
\end{align}
According to \ref{sec:det:l}, which implies that the distance including L  scales as the square root of the $1-$norm induced distance without L, i.e. Alice and Bob only, we find for (\ref{eq:post:proof:1}) by using the assumption $\max_{\mu_{AB}} \|\E(\mu^{\otimes n}_{AB}) - \F(\mu^{\otimes n}_{AB})\|_1 \leq \varepsilon_{\mathcal{P}}(n)$ that
\begin{align}
\left\|(\E_{\mathrm{L}}-\F_{\mathrm{L}})\left(\mu^{\otimes n}_{AB} \right) \right\|_1 \leq 2 \sqrt{\left\|(\E-\F)\left(\mu^{\otimes n}_{AB} \right) \right\|_1} \leq 2 \sqrt{\varepsilon_{\mathcal{P}}(n)}.
\end{align}
As $\ket{\tau}_{ABE'}$ is a purification of $\tau'$ we can apply Lemma \ref{lem:forpost} which gives, for (\ref{inequ.post.1}),
\begin{align*}
\max_{\ket{\psi}_{ABE'}} \|(\E^{s} \otimes id_{E'})(\ketbra{\psi}{\psi}{}) - (\F^{s} \otimes id_{E'})(\ketbra{\psi}{\psi}{})\|_1 & \leq 4 g_{n,d} \sqrt{\max_{\mu_{AB}} \left\|(\E_{\mathrm{L}}-\F_{\mathrm{L}})\left(\mu^{\otimes n}_{AB} \right) \right\|_1} \\
& \leq 4 g_{n,d} \sqrt{2 \sqrt{\varepsilon_{\mathcal{P}}(n)}} \\
& = 4 \sqrt{2} g_{n,d} \sqrt[4]{\varepsilon_{\mathcal{P}}(n)}
\end{align*}
which completes the proof.
\end{proof}
Observe that $\varepsilon_{\mathcal{P}}(n)$, which governs the rate of convergence of the overall protocol, relates to the rate of convergence of the entanglement distillation protocol $\mathcal{P}$ via $\varepsilon_{\mathcal{P}}(n) = \epsilon_{\mathcal{P}}(n - \sqrt{n})$, as $\sqrt{n}$ initial states  are used for parameter estimation. \newline
We remind the reader that the preprocessing steps (symmetrization, tracing out) of the entanglement distillation protocol and the Lemmas of this section are non-trivial and crucial for the proof of the de-Finetti-based and post-selection-based reduction technique. \newline
Furthermore we point out that the proof regarding the BBPSSW protocol is analytic and necessarily relies on the de-Finetti-based reduction technique because of its slow convergence rate. The rate of convergence for the BBPSSW protocol can easily be derived, see \ref{app:sec:eppiid} for details. For the DEJMPS protocol it turns out that we have polynomial scaling depending on the noise parameter $\alpha$, i.e. $\max_{\sigma_{AB}} \left\|(\E-\F)\left(\sigma^{\otimes n}_{AB} \right) \right\|_1 \leq \varepsilon_{\text{D}}(n) \leq O(n^{-b_{\mathrm{D}}(\alpha)})$, see (\ref{eq:local-iid}). \newline
However, the protocol needs to converge sufficiently quickly, as the post-selection technique incurs a multiplicative increase in the effective distance between real and ideal protocols, which scales as a (15 degree) polynomial in $n$, see (\ref{eq:post:redudction}). The resulting confidentiality level scales therefore as $O(n^{15-b_{\mathrm{D}}(\alpha)/4})$, which leads to an acceptable noise level that is rather low, e.g. about $10^{-19}$ for the DEJMPS protocol in the setting of binary pairs \footnote{For this simplified analysis we assumed that no parameter estimation is necessary.}, see \ref{sec:det:l}. This very low rate is due to the polynomial factor introduced by applying the post-selection technique, i.e. $g_{n,d}$ in (\ref{eq:post:redudction}) with $d=4$. Observe that these small rates are determined by properties of recurrence-type entanglement distillation protocols, i.e. $b(\alpha)$ for the recurrence-type entanglement distillation protocols studied here, and may be improved by either considering hashing-type protocols \cite{BennettHash} or through fault-tolerant constructions. Indeed, the noise threshold for fault-tolerant quantum computation also applies to this case, yielding a tolerable noise level of about $10^{-4}$. We reiterate that the post-selection technique is not applicable to the BBPSSW protocol, due to its slow convergence.

\subsection{Confidentiality of entanglement distillation protocols when the noise transcripts leak}\label{sec:conf:2}

Finally, we provide confidentiality guarantees for entanglement distillation protocols when the noise transcripts are leaked to Eve. For that purpose, we relate the confidentiality criterion (\ref{eqn.confidentiality}) for protocols where the noise transcripts are leaked to the earlier results. More formally, we have the following Theorem.
\begin{theo}\label{lem.leaklesstoleak}
Let $\E$ be the real protocol and $\F$ be the ideal protocol satisfying the assumptions of Lemma \ref{thm.globalclosenessfactor}. Furthermore, let $\E^l$ denote the real and $\F^l$ the ideal protocol when the noise transcripts leak to Eve. Then
\EQ{\|(\E \otimes \id_E - \F \otimes \id_E) (\dm{\psi})\|_1 \leq \varepsilon(n) \ \textup{, implies} \label{eq.leakreduction}}
\begin{align*}
\|(\E^l \otimes \id_E - \F^l \otimes \id_E) (\dm{\psi}) \|_1 \leq 2 \sqrt{\varepsilon(n)}
\end{align*}
for all purifications $\ket{\psi}_{ABE}$ of initial state $\rho_{AB}$ consisting of $n$ systems.
\end{theo}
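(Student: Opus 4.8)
I would prove this by exhibiting the leaked‑transcript protocols as the leakless ones with one extra register adjoined, and then lifting the closeness of the reduced states to closeness of the full states via Uhlmann's theorem, paying the square–root price that appears in the claimed bound.

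First I would set up the reduction. The only difference between $\E^l$ (resp.\ $\F^l$) and $\E$ (resp.\ $\F$) is that, in the non‑aborting branch, Eve additionally receives the transcript register $T$ recording the Pauli corrections applied by $\mathrm{L}$: on the real side $T$ carries the actual noise history, while on the ideal side it carries the orthonormal, classically pinned states $\ket{\eta_{ij}}$ of (\ref{eq:asympstate}). Consequently $\ptr{T}{\,\cdot\,}\circ(\E^l\otimes\id_E)=\E\otimes\id_E$ and $\ptr{T}{\,\cdot\,}\circ(\F^l\otimes\id_E)=\F\otimes\id_E$, so writing $\rho^l:=(\E^l\otimes\id_E)(\dm{\psi})$ and $\sigma^l:=(\F^l\otimes\id_E)(\dm{\psi})$, the hypothesis (\ref{eq.leakreduction}) reads exactly $\|\ptr{T}{\rho^l-\sigma^l}\|_1\le\varepsilon(n)$. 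The aborting branches of $\rho^l$ and $\sigma^l$ coincide (both output the fixed $\sigma^\perp$, which carries no transcript), so only the non‑aborting contribution has to be estimated; the classical communication can be dropped since it is uncorrelated with the output (see \ref{app:sec:eppiid}).

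Next I would import the structural properties that the statement inherits through the assumptions of Lemma \ref{thm.globalclosenessfactor}. Because the real protocol, once the register of $\mathrm{L}$ is retained, sends pure inputs to pure outputs (see \ref{app:sec:eppiid}), the state $\rho^l$ admits a purification whose purifying system is built from $T$ together with the traced‑out lab register; and because $\E$ converges to the unique noise‑dependent fixed point, $\ptr{T}{\rho^l}=(\E\otimes\id_E)(\dm\psi)$ is within $\varepsilon(n)$ of $\sigma^{\alpha,\mathcal{P}}_{AB}\otimes\sigma_E$, while $\sigma^l$ is the rigid state $\sigma^{\alpha,\mathcal{P},l}_{ABE}$ in which $T$ holds the orthonormal $\ket{\eta_{ij}}$ perfectly correlated to the Bell index of $AB$ and $\sigma_E$ is in tensor product. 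The role of this step is that the transcript register on the ideal side is rigid: its correlation with $AB$ is forced, so it cannot be rotated away independently of $AB$.

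Then comes the lift. From $\|\ptr{T}{\rho^l-\sigma^l}\|_1\le\varepsilon(n)$ and Fuchs--van de Graaf one gets $F\big(\ptr{T}{\rho^l},\ptr{T}{\sigma^l}\big)\ge 1-\varepsilon(n)/2$; by Uhlmann's theorem there is a purification of $\ptr{T}{\sigma^l}$ whose overlap with the purification of $\ptr{T}{\rho^l}$ coming from $\rho^l$ is at least $1-\varepsilon(n)/2$, hence by the pure‑state form of Fuchs--van de Graaf these two purifications are $2\sqrt{\varepsilon(n)}$‑close in trace norm, and tracing out the extra purifying ancilla gives $\|\rho^l-\sigma^l\|_1\le 2\sqrt{\varepsilon(n)}$ — \emph{provided} the purification supplied by Uhlmann can be taken to be an extension of $\sigma^l$ itself. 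This provision is the main obstacle, and it is precisely where the rigidity of $\sigma^{\alpha,\mathcal{P},l}_{ABE}$ and the pure‑output property of the real protocol are used: one must show that the transcript register $T$ is a sufficient (recoverable) subsystem simultaneously for the real and the ideal states — equivalently that $F(\rho^l,\sigma^l)=F\big(\ptr{T}{\rho^l},\ptr{T}{\sigma^l}\big)$ — so that the unitary freedom in Uhlmann's theorem acts only on the genuine purifying ancilla and leaves $T$ untouched. The remaining ingredients (the $p_\rho$/flag bookkeeping, and the reduction of the claim to a single purification $\ket{\psi}_{ABE}$, which then holds uniformly) are routine, and the constant $2$ rather than a larger one is exactly what the pure‑state Fuchs--van de Graaf bound yields with no additional slack.
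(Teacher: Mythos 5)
Your reduction of the statement to a ``lifting'' problem is set up correctly, but the proposal stops exactly at the step that carries all the content. You note yourself that the argument goes through only \emph{provided} the Uhlmann purification of $\ptr{T}{\sigma^l}$ can be taken to be an extension of $\sigma^l$ itself, equivalently that $F(\rho^l,\sigma^l)=F\bigl(\ptr{T}{\rho^l},\ptr{T}{\sigma^l}\bigr)$ — and you never prove this. That provision is not a routine bookkeeping point: in general, two \emph{given} extensions of marginals that are $\varepsilon$-close in trace norm can be far apart (take two classical registers perfectly correlated versus uncorrelated; the marginals coincide while the joint states are a constant distance apart), so ``trace out $T$, apply Fuchs--van de Graaf and Uhlmann, re-attach $T$'' does not by itself yield any bound on $\|\rho^l-\sigma^l\|_1$. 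Appealing to the ``rigidity'' of $\sigma^{\alpha,\mathcal{P},l}_{ABE}$ and to purity of the output including L names the ingredients one would need, but the proposal contains no argument that the unitary freedom in Uhlmann's theorem can be pushed off the transcript register, so as it stands the proof has a genuine gap precisely at its central step.

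The paper closes this gap by running the argument in the opposite direction, and you could repair your proposal along the same lines. Instead of trying to lift closeness from the $T$-reduced states to the leaked states, the paper represents \emph{both} leaked outputs as images of purified leakless outputs under one fixed CPTP map: it fixes a particular purification $\ket{\psi_{\F}}_{ABL_1L_2E}$ of the ideal ok-branch output $\sigma^{\alpha}_{AB}\otimes\sigma_E$ (legitimate, since the L register is inaccessible and purifications are unitarily equivalent), invokes Lemma A.2.7 of \cite{bib.renner.diss} (Uhlmann) to obtain a purification $\ket{\psi_{\E}}$ of the real ok-branch output $\sigma_{ABE}$ with $\|\dm{\psi_{\E}}-\dm{\psi_{\F}}\|_1\le 2\sqrt{\varepsilon(n)/p_\rho}$, and then models the leakage as a copy unitary $U_M$ from $L_1$ to Eve followed by $\ptr{L_1L_2}{\cdot}$. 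Monotonicity of the trace norm under this map immediately gives the $2\sqrt{\varepsilon(n)}$ bound after multiplying by $p_\rho$. In that formulation the Uhlmann freedom is exercised on the real side, where it is available (the real L-purification is only fixed up to a unitary on the inaccessible register), while the ideal leaked state is by construction derived from the fixed $\ket{\psi_{\F}}$; no fidelity-preservation claim of the kind your proposal requires ever has to be proven.
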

The proof, see \ref{app:sec:leak}, uses the unitary equivalence of purifications. Theorem \ref{lem.leaklesstoleak} establishes via (\ref{eq.leakreduction}) that if an entanglement distillation protocol is $\varepsilon-$confidential according to Definition \ref{def.confidentiality} then the protocol is $2\sqrt{\varepsilon}-$confidential if the noisy apparatus leaks the noise transcripts. \newline

\section{Discussion}\label{sec:disc}

We have shown that recurrence-type entanglement distillation protocols ensure private entanglement without referring to the asymptotic limit. This holds true even when the local devices are noisy, and when the potential eavesdropper is able to completely monitor the operation of these devices in run-time (i.e., the noisy apparatus leaks information about the realized noise processes).  If the noise transcripts are not leaked, Eve is ``factored out'' -- in tensor product with Alice and Bob, and only classically correlated otherwise. Our protocol can, for instance, be used to realize confidential quantum channels by means of teleportation - the only information that may leak to Eve after teleportation is which noise map was applied to the sent state, but nothing about the state itself (see \ref{app:sqc} for details).
More generally, our results imply the confidentiality of the protocols in arbitrary settings (beyond the application to quantum channels), thus opening the way for the confidential realization of various quantum tasks: from establishing quantum channels and quantum networks, to applications such as distributed quantum computation. Aside from cryptographic aspects, the proposed protocol can be used to generate high quality entanglement from non-iid sources.

\noindent\textbf{Acknowledgments:\\}
We acknowledge the support by the Austrian Science Fund (FWF) through the SFB FoQuS F 4012 and project P28000-N27. AP and VD are grateful to Christopher Portmann for useful discussions, comments, and advice concerning technical aspects of this work.

\appendix 
%


\section{Entanglement distillation for i.i.d. inputs}\label{app:sec:eppiid}

\subsection{The DEJMPS protocol}\label{sec.sup.dejmps}
We first provide an overview of the DEJMPS protocol \cite{bib.dejmps} and then extend the description incrementally to our proposed setting (including L and Eve).

The DEJMPS protocol is a recurrence-type entanglement distillation protocol which combines several noisy copies of a  mixed state $\rho$ to distill a state arbitrarily close to the maximally entangled state $\ket{B_{00}}{}$, where $\ket{B_{ij}} = (\id \otimes \px^j \pz^i) (\ket{00} + \ket{11})/\sqrt{2}$ for $i \in \lbrace 0,1 \rbrace$ and $j \in \lbrace 0,1 \rbrace$, provided that the fidelity $F=\bra{B_{00}}{} \rho \ket{B_{00}}{}$ satisfies $F > 1/2$ for the noiseless case. If the apparatus is noisy, then the minimal required fidelity $F$ needs to satisfy $F > F_{min}$ (where $F_{min}$ depends on the noise level of the apparatus) to achieve distillation. For more details on recurrence-type entanglement distillation protocols in general we refer the interested reader to \cite{bib.reviewepp}. A basic step of the DEJMPS protocol is as follows:

\begin{protocol}[h!]
\caption{Basic step of the DEJMPS protocol}
\label{protocol.rd}
\begin{algorithmic}[1]
    \REQUIRE Input state of Alice and Bob: $\rho^{(a_1, b_1)} \otimes \rho^{(a_2, b_2)}$
    \STATE \label{enu.oxford.step1} Alice and Bob apply the local basis change $U_x = e^{-i \pi/4 \px^{(a_1)}} \otimes e^{i \pi/4 \px^{(b_1)}} \otimes e^{-i \pi/4 \px^{(a_2)}} \otimes e^{i \pi/4 \px^{(b_2)}}$:
	\begin{align*}
	 U_x \left(\rho^{(a_1, b_1)} \otimes \rho^{(a_2, b_2)}\right) U_x^\dagger.
	\end{align*}
    \STATE \label{enu.oxford.bcnot} Alice and Bob apply a bilateral CNOT (BCNOT):
	\begin{align*}
	\left(\text{CNOT}_{a_1 \to a_2} \otimes \text{CNOT}_{b_1 \to b_2} \right) \rho^{(a_1, b_1)} \otimes \rho^{(a_2, b_2)} \left(\text{CNOT}_{a_1 \to a_2} \otimes \text{CNOT}_{b_1 \to b_2} \right)^\dagger.
	\end{align*}
    \STATE \label{enu.oxford.meas} Alice and Bob apply a $ \sigma_z^{(a_2)} = \sigma_z \otimes id$ and a $\sigma_z^{(b_2)} = id \otimes \sigma_z$ measurement
    \STATE \label{enu.oxford.steplast} Alice and Bob communicate their measurement outcomes, $z_a$ and $z_b$ respectively, over a classical authentic channel
    \IF{$z_a = z_b$}
        \STATE Alice and Bob keep the subsystems $a_1$ and $b_1$ of step \ref{enu.oxford.bcnot}
        \STATE Alice and Bob discard the measured subsystems $a_2$ and $b_2$
    \ELSE
        \STATE Alice and Bob discard both pairs
    \ENDIF
\end{algorithmic}
\end{protocol}

Hence, we can write one basic distillation step of the DEJMPS protocol as the linear map $O_{\text{2-EPP}} (\rho \otimes \rho) = O'_{\text{2-EPP}} (\rho \otimes \rho) O'^\dagger_{\text{2-EPP}}$ where

\begin{align*}
O'_{\text{2-EPP}} = \left(id_{a_1,b_1} \otimes P^{(a_2)}_{z} \otimes P^{(b_2)}_{z} \right) \left(\text{CNOT}_{a_1 \to a_2} \otimes \text{CNOT}_{b_1 \to b_2} \right) U_x
\end{align*}

modulo a normalization factor and where $P_{z} = \ketbra{z}{z}{}, \, z \in \lbrace 0,1 \rbrace$ denotes the respective outcome of step \ref{enu.oxford.meas} of Protocol \ref{protocol.rd}.

The basic step is applied to all initial pairs, which comprises one distillation round. This distillation round is iterated where output states of the previous round are used as inputs for the next round. So we summarize the DEJMPS protocol as follows:

\begin{protocol}[h!]
\caption{DEJMPS protocol}
\label{protocol.rd.complete}
\begin{algorithmic}[1]
    \REQUIRE Input state of Alice and Bob: $\bigotimes^{2^n}_{i = 1} \rho^{(a_i,b_i)}$ where $F = \bra{B_{00}}{} \rho^{(a_i,b_i)} \ket{B_{00}}{} > 1/2$ for all $i \in \lbrace 1,..,2^n \rbrace$
    \WHILE{Pairs left for distillation}
		\STATE Apply Protocol \ref{protocol.rd} to all pairs
		\STATE Use the outputs of the previous step as input for the next distillation round
    \ENDWHILE
\end{algorithmic}
\end{protocol}

We remind the reader that the recurrence relations of the protocol (i.e. update functions of the coefficients of an ensemble) are central for the convergence analysis of the DEJMPS protocol. For Bell-diagonal states, i.e. states of the form
\begin{align*}
\rho = p_{00} \ketbra{B_{00}}{B_{00}}{} + p_{11} \ketbra{B_{11}}{B_{11}}{} + p_{01} \ketbra{B_{01}}{B_{01}}{} + p_{10} \ketbra{B_{10}}{B_{10}}{}
\end{align*}
where $\sum_{ij} p_{ij} = 1, \, p_{ij} \geq 0$, a straightforward computation yields the recurrence relations for the DEJMPS protocol to be
\begin{align}\label{eqn.recurrencedejmps}
\tilde{p}_{00} = \frac{p_{00}^2 + p_{11}^2}{N}, \qquad \tilde{p}_{11} = \frac{2 p_{01} p_{10}}{N}, \notag \\
\tilde{p}_{01} = \frac{p_{01}^2 + p_{10}^2}{N}, \qquad \tilde{p}_{10} = \frac{2 p_{00} p_{11}}{N}
\end{align}
where $N = (p_{00} + p_{11})^2 + (p_{01} + p_{10})^2$, see e.g. \cite{bib.dejmps}.

In \cite{Macchiavello} it has been shown analytically that the recurrence relations (\ref{eqn.recurrencedejmps}) converge towards a unique and attracting fixed point provided the initial fidelity with $\ket{B_{00}}$, $p_{00}$, is above $1/2$.

The recurrence relations of the DEJMPS protocol taking independent single qubit white noise, i.e. noise of the form $N \rho = f \rho + (1-f)/4 (\rho + \px \rho \px + \py \rho \py + \pz \rho \pz)$ acting on each qubit of Alice into account, read far more complex. In the presence of noise we have strong numerical evidence that the DEJMPS protocol converges towards a unique and attracting fixed point depending on the noise level $f$ only.

\begin{figure}[htb]
\centering
\hspace*{1.5em}\raisebox{\dimexpr-.5\height-1em}
  {\includegraphics{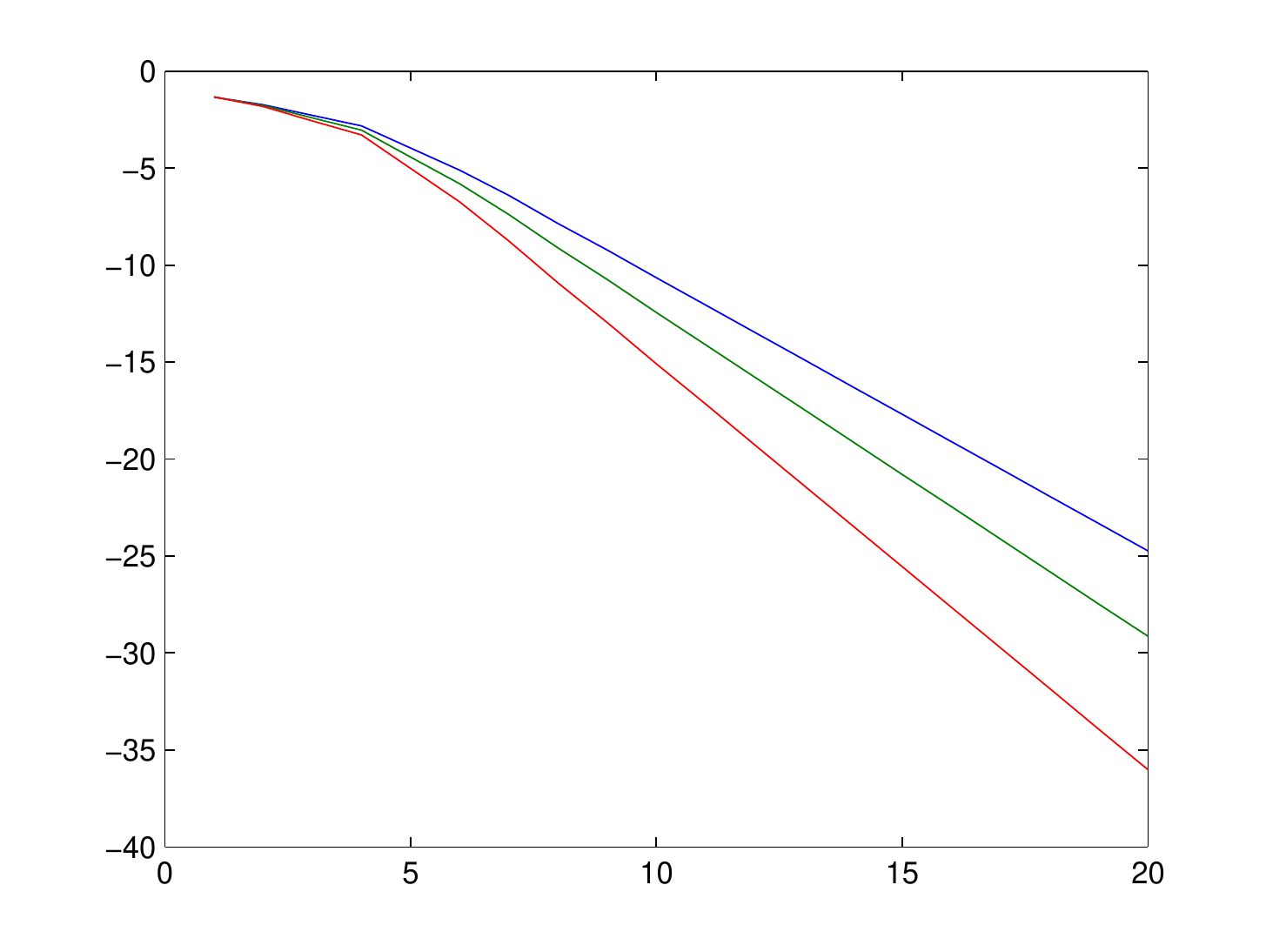}
}%

\leavevmode\smash{\makebox[0pt]{\hspace{-45em}
  \rotatebox[origin=l]{90}{\hspace{15em}
    $\log \|\rho_{fix} - \rho_{n} \|_{1}$}%
}}

\hspace{0pt plus 1filll}\null

$n$

\medskip

\caption[h!]{\label{fig:dejmps} The figure illustrates $\log \|\rho_{\text{fix}} - \rho_{n} \|_{1}$ for different noise parameters $f = 0.97$ (blue), $f = 0.98$ (green) and $f = 0.99$ (red). The fixed point $\rho_{\text{fix}}$ was evaluated for $500$ iterations of the DEJMPS protocol.}
\end{figure}
From figure \ref{fig:dejmps} we suggest a linear relationship between $\log \|\rho_{\text{fix}} - \rho_{n} \|_{1}$ (where $\rho_{\text{fix}}$ and $\rho_n$ denote the fixed point and the state after successfully completing $n$ distillation rounds respectively) and the number of successful distillation rounds $n$. We immediately observe that the slope only depends on the noise parameter $f$, i.e. we have that

\begin{align*}
\log \|\rho_{\text{fix}} - \rho_{n} \|_{1} = a(f) - n b(f).
\end{align*}

Using $\log_2 N = n$, where $N$ denotes the number of input pairs, this implies $\|\rho_{\text{fix}} - \rho_{n} \|_{1} = e^{a(f)} e^{-b(f) \log_2 N} = a'(f) N^{-b'(f)}$, i.e. $\|\rho_{\text{fix}} - \rho_{n} \|_{1}$ scales as $F(N) \in O(N^{-b'(f)})$ as mentioned in the main text. Furthermore we numerically find that the function $b'(f)$ monotonically grows for $f \to 1$.

For two qubit correlated noise, we refer the reader to the analysis including L, as the fixed point and the scaling can be recovered from that analysis by tracing out the system of L.

\subsubsection{Detailed analysis including L}\label{sec:det:l}

We outline the remainder of this section as follows: First we derive the recurrence relations of the DEJMPS protocol in the most general setting, taking the noise applied by L into account as well as assuming that Eve receives the leaked noise transcripts of L. We use those recurrence relations in the next subsection to provide analytical results regarding the fixed point of the recurrence relations, where the inputs are binary pairs and L only applies either $id$ or $\px$ operators. We close the section with numerical results for general i.i.d. Bell-diagonal pairs and the most general noise maps of L.


\vspace*{1cm}
\underline{The recurrence relations}
\vspace*{1cm}

For i.i.d. input states the state of each system subject to distillation at an intermediate distillation round of the DEJMPS protocol is of the form $\ket{\Psi}_{ABEL} = \sum_{i,j,k,l} P_{ijkl} \ket{B_{ij}}_{AB} \ket{kl}_{L} \ket{ijkl}_{E}$, where $P_{ijkl}$ are probability amplitudes, if we assume the noise is leaked to Eve after every distillation round. The system $AB$ models the pair of Alice and Bob, $L$ the system of L (where the content of the register corresponds to the effective noise introduced to $AB$) and $E$ the system of Eve. L applies the noise processes before a basic protocol step to the systems of Alice. Moreover, L keeps track of the effective noise introduced using its system in a sense we clarify later.

In the following we use the notation

\begin{align*}
\sigma_{0,0} = id, \quad \sigma_{0,1} = \px, \quad \sigma_{1,0} = \pz, \quad \sigma_{1,1} = \py
\end{align*}

for the four Pauli-operators. Furthermore we denote by superscripts in brackets particle labels and by superscripts without brackets the power of an operator.

L introduces the noise maps $U_{\alpha_1,\beta_1, \alpha_2, \beta_2} = U^{(a_1)}_{\alpha_1, \beta_1} \otimes U^{(a_2)}_{\alpha_2,\beta_2}$ where  $U^{(a_k)}_{\alpha,\beta} = \sigma^{(a_k)}_{\alpha,\beta} \otimes \left((\px^\alpha) \otimes (\px^\beta)\right)^{(L_k)}$. We observe that applying the noise map $U_{\alpha_1,\beta_1, \alpha_2, \beta_2}$ might flip the contents of the registers $L_1$ and $L_2$ depending on the values of $\alpha_1,\beta_1, \alpha_2$ and $\beta_2$. This enables L to keep track of the noise introduced to a pair.

There are two approaches how L can apply the noise maps $U_{\alpha_1,\beta_1, \alpha_2, \beta_2}$: stochastically in terms of CPTP maps, or coherently in terms of unitaries acting on an enlarged Hilbert space. Here we assume the latter approach, but provide the analysis of the noisy DEJMPS protocol in terms of CPTP maps and purifications.

To show that these are equivalent, first suppose that L owns a register $H$ set to the state $\sum_{\alpha_1, \beta_1, \alpha_2, \beta_2} \sqrt{\tilde{f}_{\alpha_1, \beta_1, \alpha_2, \beta_2}} \ket{\alpha_1 \beta_1 \alpha_2 \beta_2}_{H}$ where $\tilde{f}_{\alpha_1, \beta_1, \alpha_2, \beta_2}$ are the probabilities of applying the respective noise map $U_{\alpha_1,\beta_1, \alpha_2, \beta_2}$. L uses the register $H$ to apply the noise maps $U_{\alpha_1,\beta_1, \alpha_2, \beta_2}$ coherently controlled to the input state $\ket{\Psi}_{ABEL}$. We observe that tracing out $H$ after applying all the noise maps $U_{\alpha_1,\beta_1, \alpha_2, \beta_2}$ in a controlled fashion yields
\begin{align*}
\sum\limits_{\alpha_1, \beta_1, \alpha_2, \beta_2}  \tilde{f}_{\alpha_1, \beta_1, \alpha_2, \beta_2} U_{\alpha_1,\beta_1, \alpha_2, \beta_2} \left(\ketbra{\Psi}{\Psi}{} \otimes \ketbra{\Psi}{\Psi}{}\right) U^\dagger_{\alpha_1,\beta_1, \alpha_2, \beta_2}.
\end{align*}
On the other hand, assume that L applies the noise process in terms of a CPTP map $N$, i.e.
\begin{align*}
N \rho = \sum\limits_{\alpha_1, \beta_1, \alpha_2, \beta_2}  \tilde{f}_{\alpha_1, \beta_1, \alpha_2, \beta_2} U_{\alpha_1,\beta_1, \alpha_2, \beta_2} \left(\ketbra{\Psi}{\Psi}{} \otimes \ketbra{\Psi}{\Psi}{}\right) U^\dagger_{\alpha_1,\beta_1, \alpha_2, \beta_2}.
\end{align*}
We observe that $N \rho$ will be, in general, a mixed state, thus there exists a purification on a larger Hilbert space. As all purifications are unitarily equivalent, see e.g. \cite{Nielsen}, we choose the purification
\begin{align*}
\ket{\Phi} = \sum\limits_{\alpha_1, \beta_1, \alpha_2, \beta_2}  \sqrt{\tilde{f}_{\alpha_1, \beta_1, \alpha_2, \beta_2}} U_{\alpha_1,\beta_1, \alpha_2, \beta_2} \ket{\Psi} \otimes \ket{\Psi} \otimes \ket{\alpha_1 \beta_1 \alpha_2 \beta_2}_{H}.
\end{align*}
Hence $\ptr{H}{\dm{\Phi}} = N \rho$. Furthermore, we observe that the pure state $\ket{\Phi}$ can be generated by applying the unitaries $U_{\alpha_1,\beta_1, \alpha_2, \beta_2}$, coherently controlled by the register $H$, to $\ket{\Psi} \otimes \ket{\Psi} \otimes \left(\sum_{\alpha_1, \beta_1, \alpha_2, \beta_2} \sqrt{\tilde{f}_{\alpha_1, \beta_1, \alpha_2, \beta_2}} \ket{\alpha_1 \beta_1 \alpha_2 \beta_2}_{H} \right)$.

This equivalence allows us to assume that L introduces the noise as a CPTP map, applying $U_{\alpha_1,\beta_1, \alpha_2, \beta_2}$ with respective probabilities $f_{\alpha_1, \beta_1, \alpha_2, \beta_2}$ and purifying the state after the basic distillation step is executed by Alice and Bob.

Since the noise of L is applied before the basic distillation step is executed by Alice and Bob, the result of one noisy distillation step reads as
\begin{align}\label{eqn.overall}
\rho' & = \sum\limits_{\alpha_1, \beta_1, \alpha_2, \beta_2}  \tilde{f}_{\alpha_1, \beta_1, \alpha_2, \beta_2} U_u O'_{\text{2-EPP}} (U^{(a_1)}_{\alpha_1, \beta_1} \otimes  U^{(a_2)}_{\alpha_2, \beta_2}) \left(\ketbra{\Psi}{\Psi}{} \otimes \ketbra{\Psi}{\Psi}{}\right) (U^{(a_1)}_{\alpha_1, \beta_1} \otimes  U^{(a_2)}_{\alpha_2, \beta_2})^\dagger O'^\dagger_{\text{2-EPP}} U_u^\dagger
\end{align}
which needs finally to be purified.

In order to evaluate (\ref{eqn.overall}), we proceed as follows: \newline
\begin{itemize}
\item Step 1: We first compute
\begin{align*}
O'_{\text{2-EPP}} (U^{(a_1)}_{\alpha_1, \beta_1} \otimes  U^{(a_2)}_{\alpha_2, \beta_2}) \ket{\Psi} \otimes \ket{\Psi}.
\end{align*}
which corresponds to the state after the noise map $U^{(a_1)}_{\alpha_1, \beta_1} \otimes  U^{(a_2)}_{\alpha_2, \beta_2}$ is applied by L and the basic distillation step of the entanglement distillation protocol is executed by Alice and Bob.
\item Step 2: We apply the unitary $U_u$, which acts only on L's systems and whose purpose we clarify later, to the previous equality.
\item Step 3: We have to determine the purification held by Eve if the noise is leaked to her. In doing so, we trace out Eve and then provide her with the purification of the resulting state (which corresponds to leaking the noise transcripts to Eve).
\end{itemize}
\underline{Step 1:} We observe that applying the noise map $U^{(a_1)}_{\alpha,\beta}$ to $\ket{\Psi}{}$ yields
\begin{align}\label{eqn.proofrec1}
U^{(a_1)}_{\alpha,\beta} \ket{\Psi}{} & = U^{(a_1)}_{\alpha,\beta} \sum\limits_{i,j,k,l} P_{ijkl} \ket{B_{ij}}_{AB} \ket{kl}_{L} \ket{ijkl}_{E} \\ \notag
& = \sum\limits_{i,j,k,l} P_{ijkl} \ket{B_{(i \oplus \alpha)(j \oplus \beta)}}_{AB} \ket{(k \oplus \alpha)(l \oplus \beta)}_{L} \ket{ijkl}_{E} \\ \notag
& = \sum\limits_{i,j,k,l} P_{(i \oplus \alpha)(j \oplus \beta)(k \oplus \alpha)(l \oplus \beta)} \ket{B_{ij}}_{AB} \ket{kl}_{L} \ket{(i \oplus \alpha)(j \oplus \beta)(k \oplus \alpha)(l \oplus \beta)}_{E}.
\end{align}
This observation suggests the following notational simplifications:
\begin{align*}
P^{\alpha\beta}_{ijkl} = P_{(i \oplus \alpha)(j \oplus \beta)(k \oplus \alpha)(l \oplus \beta)} \quad \text{ and } \quad \ket{e^{\alpha\beta}_{ijkl}}_{E} = \ket{(i \oplus \alpha)(j \oplus \beta)(k \oplus \alpha)(l \oplus \beta)}_{E}.
\end{align*}
Using this notation we rewrite (\ref{eqn.proofrec1}) as $U^{(a_1)}_{\alpha,\beta} \ket{\Psi}{} = \sum_{i,j,k,l} P^{\alpha\beta}_{ijkl} \ket{B_{ij}}_{AB} \ket{kl}_{L} \ket{e^{\alpha\beta}_{ijkl}}_{E}$. This is the state of Alice, Bob, L, and Eve after the  noise map $U^{(a_1)}_{\alpha,\beta}$ is applied by L to the first pair. In order to compute (\ref{eqn.overall}) we define
\begin{align*}
\ket{\Psi''_{\alpha_1, \beta_1, \alpha_2, \beta_2}}{} & = (U^{(a_1)}_{\alpha_1, \beta_1} \otimes  U^{(a_2)}_{\alpha_2, \beta_2})  \ket{\Psi}{} \ket{\Psi}{} \\
& = \sum\limits_{i_1, j_1, i_2, j_2} \sum\limits_{k_1, l_1, k_2, l_2} A^{\alpha_1\beta_1}_{i_1 j_1 k_1 l_1} P^{\alpha_2 \beta_2}_{i_2 j_2 k_2 l_2} \ket{B_{i_1 j_1}}_{AB_1} \ket{B_{i_2 j_2}}_{AB_2} \ket{k_1 l_1}_{L_1} \ket{k_2 l_2}_{L_2} \\
& \hspace*{0.4cm} \otimes \ket{e^{\alpha_1 \beta_1}_{i_1 j_1 k_1 l_1}}_{E_1} \ket{e^{\alpha_2 \beta_2}_{i_2 j_2 k_2 l_2}}_{E_2}
\end{align*}
which corresponds to the state after the noise map $U^{(a_1)}_{\alpha_1, \beta_1} \otimes  U^{(a_2)}_{\alpha_2, \beta_2}$ is applied and
\begin{align}\label{eqn.proofrec00}
\ket{\Psi'_{\alpha_1, \beta_1, \alpha_2, \beta_2}}{} = U_u O_{\text{2-EPP}} \ket{\Psi''_{\alpha_1, \beta_1, \alpha_2, \beta_2}}{}
\end{align}
which is the state after the noise map $U^{(a_1)}_{\alpha_1, \beta_1} \otimes  U^{(a_2)}_{\alpha_2, \beta_2}$, one basic distillation step and the update of L's noise register by $U_u$. Thus we rewrite (\ref{eqn.overall}) as
\begin{align}\label{eqn.proofrec0}
\rho' = \sum\limits_{\alpha_1, \beta_1, \alpha_2, \beta_2}  \tilde{f}_{\alpha_1, \beta_1, \alpha_2, \beta_2} \ketbra{\Psi'_{\alpha_1, \beta_1, \alpha_2, \beta_2}}{\Psi'_{\alpha_1, \beta_1, \alpha_2, \beta_2}}{}.
\end{align}
According to (\ref{eqn.proofrec00}) Alice and Bob apply one basic distillation step of the DEJMPS protocol to the state $\ket{\Psi''_{\alpha_1, \beta_1, \alpha_2, \beta_2}}{}$. Recall that step \ref{enu.oxford.step1} of Protocol \ref{protocol.rd} maps $\ket{B_{ij}}{}$ to $\ket{B_{i (i \oplus j)}}{}$ and that step \ref{enu.oxford.bcnot} maps $\ket{B_{ij}}{} \ket{B_{i'j'}}{}$ to $\ket{B_{(i \oplus i')j}}{} \ket{B_{i'(j \oplus j')}}{}$. Thus we conclude that after step \ref{enu.oxford.step1} and \ref{enu.oxford.bcnot} of Protocol \ref{protocol.rd} the state of Alice, Bob, L, and Eve is
\begin{align}\label{eqn.proofrec2}
\sum\limits_{i_1, j_1, i_2, j_2} \sum\limits_{k_1, l_1, k_2, l_2} P^{\alpha_1 \beta_1}_{i_1 j_1 k_1 l_1} P^{\alpha_2 \beta_2}_{i_2 j_2 k_2 l_2} & \ket{B_{(i_1 \oplus i_2)(i_1 \oplus j_1)}}_{AB_1} \ket{B_{i_2 (i_1 \oplus j_1 \oplus i_2 \oplus j_2)}}_{AB_2} \ket{k_1 l_1}_{L_1} \ket{k_2 l_2}_{L_2} \notag \\
& \ket{e^{\alpha_1 \beta_1}_{i_1 j_1 k_1 l_1}}_{E_1} \ket{e^{\alpha_2 \beta_2}_{i_2 j_2 k_2 l_2}}_{E_2}
\end{align}
Following Protocol \ref{protocol.rd}, a $\pz$-measurement of the target pair of the BCNOT, i.e. the subsystem $AB_2$, is applied to (\ref{eqn.proofrec2}). Next Alice and Bob communicate their respective measurement outcomes over a classic authentic channel. If the measurement outcomes coincide, Alice and Bob keep the source pair, i.e. subsystem $AB_1$ of step \ref{enu.oxford.bcnot}, else they discard both subsystems $AB_1$ and $AB_2$. We assume that both measurements yield the outcome $1$. If both measurement outcomes yield $0$, no phase factor $(-1)^{i_2}$ would be required in the expression (\ref{eqn.proofrec3}). The coinciding measurement outcomes imply $i_1 \oplus j_1 \oplus i_2 \oplus j_2 = 0$. To summarize, the state post-selected on the measurement outcomes $1$ of Alice and Bob is
\begin{align}\label{eqn.proofrec3}
& \sum\limits_{i_1, j_1, i_2, j_2} \sum\limits_{k_1, l_1, k_2, l_2} (-1)^{i_2} P^{\alpha_1 \beta_1}_{i_1 j_1 k_1 l_1} P^{\alpha_2 \beta_2}_{i_2 (i_1 \oplus j_1 \oplus i_2) k_2 l_2} \ket{B_{(i_1 \oplus i_2)(i_1 \oplus j_1)}}_{AB_1} \ket{k_1 l_1}_{L_1} \ket{k_2 l_2}_{L_2} \ket{e^{\alpha_1 \beta_1}_{i_1 j_1 k_1 l_1}}_{E_1} \ket{e^{\alpha_2 \beta_2}_{i_2(i_1 \oplus j_1 \oplus i_2) k_2 l_2}}_{E_2}.
\end{align}
\underline{Step 2:} Recall that L stores in its register attached to the pair of Alice and Bob the effective noise introduced. For that purpose we introduce the unitary $U_u$ as well as an ancilla system $L_3$ set to the state $\ket{00}_{L_3}$. Applying $U_u$ to all three registers of L yields $U_u \ket{00}{}\ket{i}{}\ket{j}{} \ket{i'}{}\ket{j'}{} = \ket{u(i,j,i',j')}{}\ket{i}{}\ket{j}{} \ket{i'}{}\ket{j'}{}$ where $u$ is the so called flag update function defined in \cite{bib.aschauer}. The function $u$ returns the effective noise introduced on the source pair of step \ref{enu.oxford.bcnot} of Protocol \ref{protocol.rd}. Applying $U_u$ to (\ref{eqn.proofrec3}) gives
\begin{align*}
\ket{\Psi'_{\alpha_1, \beta_1, \alpha_2, \beta_2}}{} = \sum\limits_{i_1, j_1, i_2, j_2} \sum\limits_{k_1, l_1, k_2, l_2} (-1)^{i_2} P^{\alpha_1 \beta_1}_{i_1 j_1 k_1 l_1} P^{\alpha_2 \beta_2}_{i_2 (i_1 \oplus j_1 \oplus i_2) k_2 l_2} & \ket{B_{(i_1 \oplus i_2)(i_1 \oplus j_1)}}_{AB_1} \ket{k_1 l_1}_{L_1} \ket{k_2 l_2}_{L_2} \ket{u(k_1, l_1, k_2, l_2)}_{L_3} \\ \notag
& \otimes \ket{e^{\alpha_1 \beta_1}_{i_1 j_1 k_1 l_1}}_{E_1} \ket{e^{\alpha_2 \beta_2}_{i_2(i_1 \oplus j_1 \oplus i_2) k_2 l_2}}_{E_2}.
\end{align*}
We remind the reader that $\ket{\Psi'_{\alpha_1, \beta_1, \alpha_2, \beta_2}}{}$ is the state after the application of i) the noise map $U^{(a_1)}_{\alpha_1, \beta_1} \otimes  U^{(a_2)}_{\alpha_2, \beta_2}$, ii) a basic distillation step, and iii) the update of L's noise register by $U_u$.

\underline{Step 3:} Since the noise transcripts - by assumption for this analysis - leak to Eve, we attribute the systems $L_1$ and $L_2$ to Eve. In order to treat the most general situation, we assume that Eve holds a purification of $\ptr{L_1,L_2,E_1,E_2}{\rho'}$. We determine this purification by computing $\rho'_{1} = \ptr{L_1,L_2}{\rho'}$ and $\rho'_{2} = \ptr{E_1,E_2}{\rho'_{1}}$ and attribute the purification of $\rho'_{2}$ to Eve.

By the linearity of the partial trace we have
\begin{align*}
\rho'_{1} &= \ptr{L_1,L_2}{\rho'} = \sum\limits_{\alpha_1, \beta_1, \alpha_2, \beta_2}  \tilde{f}_{\alpha_1, \beta_1, \alpha_2, \beta_2} \ptr{L_1,L_2}{\ketbra{\Psi'_{\alpha_1, \beta_1, \alpha_2, \beta_2}}{\Psi'_{\alpha_1, \beta_1, \alpha_2, \beta_2}}{}}.
\end{align*}
It is useful to define $\rho'_{\alpha_1, \beta_1, \alpha_2, \beta_2} = \ptr{L_1,L_2}{\ketbra{\Psi'_{\alpha_1, \beta_1, \alpha_2, \beta_2}}{\Psi'_{\alpha_1, \beta_1, \alpha_2, \beta_2}}{}}$ which evaluates to
\begin{align*}
\rho'_{\alpha_1, \beta_1, \alpha_2, \beta_2} &= \ptr{L_1,L_2}{\ketbra{\Psi'_{\alpha_1, \beta_1, \alpha_2, \beta_2}}{\Psi'_{\alpha_1, \beta_1, \alpha_2, \beta_2}}{}} \\
& = \sum (-1)^{i_2 \oplus i'_2} P^{\alpha_1 \beta_1}_{i_1 j_1 k_1 l_1} P^{\alpha_2 \beta_2}_{i_2 (i_1 \oplus j_1 \oplus i_2) k_2 l_2} (P^{\alpha_1 \beta_1}_{i'_1 j'_1 k_1 l_1} P^{\alpha_2 \beta_2}_{i'_2 (i'_1 \oplus j'_1 \oplus i'_2) k_2 l_2})^{*} \ketbra{B_{(i_1 \oplus i_2)(i_1 \oplus j_1)}}{B_{(i'_1 \oplus i'_2)(i'_1 \oplus j'_1)}}{} \\
& \hspace*{0.4cm} \otimes \ketbra{u(k_1, l_1, k_2, l_2)}{u(k_1, l_1, k_2, l_2)}{} \otimes \ketbra{e^{\alpha_1 \beta_1}_{i_1 j_1 k_1 l_1}}{e^{\alpha_1 \beta_1}_{i'_1 j'_1 k_1 l_1}}{} \otimes \ketbra{e^{\alpha_2 \beta_2}_{i_2 (i_1 \oplus j_1 \oplus i_2) k_2 l_2}}{e^{\alpha_2 \beta_2}_{i'_2 (i'_1 \oplus j'_1 \oplus i'_2) k_2 l_2}}{}.
\end{align*}
In the previous expression we neglected the indices appearing in the sum for simplicity, but it is understood that the sum ranges over all indices except $\alpha_1, \beta_1, \alpha_2$ and $\beta_2$.

In order to determine the state of Alice, Bob, and L which Eve finally purifies we have to compute $\rho'_{2} = \ptr{E_1,E_2}{\rho'_{1}}$. Again, the linearity of the partial trace yields
\begin{align}\label{eqn.proofrec6}
\rho'_{2} = \ptr{E_1,E_2}{\rho'_{1}} = \sum\limits_{\alpha_1, \beta_1, \alpha_2, \beta_2}  \tilde{f}_{\alpha_1, \beta_1, \alpha_2, \beta_2} \ptr{E_1,E_2}{\rho'_{\alpha_1, \beta_1, \alpha_2, \beta_2}}.
\end{align}
We remind the reader that $\ket{e^{\alpha \beta}_{i j k l}}_{E_1} = \ket{(i \oplus \alpha)(j \oplus \beta)(k \oplus \alpha)(l \oplus \beta)}_{E_1}$. Hence, for fixed $\alpha_1$ and $\beta_1$, we have $\tr{\ketbra{e^{\alpha_1 \beta_1}_{i_1 j_1 k_1 l_1}}{e^{\alpha_1 \beta_1}_{i'_1 j'_1 k_1 l_1}}{}} = \delta_{i_1  i'_1} \delta_{j_1  j'_1}$, which implies that $i'_1 = i_1$ and $j'_1 = j_1$. Thus, we  also have
\begin{align*}
\tr{\ketbra{e^{\alpha_2 \beta_2}_{i_2 (i_1 \oplus j_1 \oplus i_2) k_2 l_2}}{e^{\alpha_2 \beta_2}_{i'_2 (i'_1 \oplus j'_1 \oplus i'_2) k_2 l_2}}{}} = \tr{\ketbra{e^{\alpha_2 \beta_2}_{i_2 (i_1 \oplus j_1 \oplus i_2) k_2 l_2}}{e^{\alpha_2 \beta_2}_{i'_2 (i_1 \oplus j_1 \oplus i'_2) k_2 l_2}}{}} = \delta_{i_2 i'_2}.
\end{align*}
Hence
\begin{align}\label{eqn.proofrec5}
\ptr{E_1,E_2}{\rho'_{\alpha_1, \beta_1, \alpha_2, \beta_2}} &= \notag \\[3ex]
&=  \sum\limits_{i_1, i_2,j_1} \sum\limits_{k_1, l_1, k_2, l_2} P^{\alpha_1 \beta_1}_{i_1 j_1 k_1 l_1} P^{\alpha_2 \beta_2}_{i_2 (i_1 \oplus j_1 \oplus i_2) k_2 l_2} (P^{\alpha_1 \beta_1}_{i_1 j_1 k_1 l_1} P^{\alpha_2 \beta_2}_{i_2 (i_1 \oplus j_1 \oplus i_2) k_2 l_2})^{*} \notag \\
& \hspace*{0.4cm} \ketbra{B_{(i_1 \oplus i_2) (i_1 \oplus j_1)}}{B_{(i_1 \oplus i_2) (i_1 \oplus j_1)}}{} \otimes \ketbra{u(k_1, l_1, k_2, l_2)}{u(k_1, l_1, k_2, l_2)}{} \notag \\[3ex]
&=  \sum\limits_{i_1, i_2,j_1} \sum\limits_{k_1, l_1, k_2, l_2} \left|P^{\alpha_1 \beta_1}_{i_1 j_1 k_1 l_1} P^{\alpha_2 \beta_2}_{i_2 (i_1 \oplus j_1 \oplus i_2) k_2 l_2} \right|^2 \notag \\
& \hspace*{0.4cm} \ketbra{B_{(i_1 \oplus i_2)(i_1 \oplus j_1)}}{B_{(i_1 \oplus i_2)(i_1 \oplus j_1)}}{} \otimes \ketbra{u(k_1, l_1, k_2, l_2)}{u(k_1, l_1, k_2, l_2)}{}.
\end{align}
By inserting (\ref{eqn.proofrec5}) in (\ref{eqn.proofrec6}) we get
\begin{align*}
\rho'_{2} & = \ptr{E_1,E_2}{\rho'_{1}} \\[3ex]
&= \sum\limits_{\alpha_1, \beta_1, \alpha_2, \beta_2}  \tilde{f}_{\alpha_1, \beta_1, \alpha_2, \beta_2} \ptr{E_1,E_2}{\rho'_{\alpha_1, \beta_1, \alpha_2, \beta_2}} \\[3ex]
& = \sum\limits_{\alpha_1, \beta_1, \alpha_2, \beta_2}  \tilde{f}_{\alpha_1, \beta_1, \alpha_2, \beta_2} \sum\limits_{i_1, i_2,j_1} \sum\limits_{k_1, l_1, k_2, l_2} \left|P^{\alpha_1 \beta_1}_{i_1 j_1 k_1 l_1} P^{\alpha_2 \beta_2}_{i_2 (i_1 \oplus j_1 \oplus i_2) k_2 l_2} \right|^2 \\
& \hspace*{0.4cm} \ketbra{B_{(i_1 \oplus i_2)(i_1 \oplus j_1)}}{B_{(i_1 \oplus i_2)(i_1 \oplus j_1)}}{} \otimes \ketbra{u(k_1, l_1, k_2, l_2)}{u(k_1, l_1, k_2, l_2)}{} \\[3ex]
& = \sum_{i_1, i_2, j_1}\ketbra{B_{(i_1 \oplus i_2)(i_1 \oplus j_1)}}{B_{(i_1 \oplus i_2)(i_1 \oplus j_1)}}{} \otimes \sum_{\gamma_0, \gamma_1} \left(\sum_{\stackrel{\alpha_1, \beta_1, \alpha_2, \beta_2, k_1, l_1, k_2, l_2}{u(k_1, l_1, k_2, l_2) = (\gamma_0, \gamma_1)}} \tilde{f}_{\alpha_1, \beta_1, \alpha_2, \beta_2} \left|P^{\alpha_1 \beta_1}_{i_1 j_1 k_1 l_1} P^{\alpha_2 \beta_2}_{i_2 (i_1 \oplus j_1 \oplus i_2) k_2 l_2}\right|^2 \right) \\
& \hspace*{0.4cm} \ketbra{\gamma_0 \gamma_1}{\gamma_0 \gamma_1}{}.
\end{align*}
Rearranging the sum over $i_1, i_2$ and $j_1$ in the previous equation gives
\begin{align}\label{eqn.proofrec7}
\sum_{\delta_0, \delta_1}\ketbra{B_{\delta_0 \delta_1}}{B_{\delta_0 \delta_1}}{} \otimes \sum_{\gamma_0, \gamma_1} & \left(\sum_{\stackrel{i_1, i_2, j_1}{i_1 \oplus i_2 = \delta_0, i_1 \oplus j_1 = \delta_1}}\sum_{\stackrel{\alpha_1, \beta_1, \alpha_2, \beta_2, k_1, l_1, k_2, l_2}{u(k_1, l_1, k_2, l_2) = (\gamma_0, \gamma_1)}} \tilde{f}_{\alpha_1, \beta_1, \alpha_2, \beta_2} \left|P^{\alpha_1 \beta_1}_{i_1 j_1 k_1 l_1} P^{\alpha_2 \beta_2}_{i_2 (i_1 \oplus j_1 \oplus i_2) k_2 l_2}\right|^2 \right) \\ \notag
& \ketbra{\gamma_0 \gamma_1}{\gamma_0 \gamma_1}{}.
\end{align}
Using the definition
\begin{align}\label{eq.proof.recurrencerelation}
|\tilde{P}_{\delta_0  \delta_1  \gamma_0  \gamma_1}|^2 = \sum_{\stackrel{i_1, i_2, j_1}{i_1 \oplus i_2 = \delta_0, i_1 \oplus j_1 = \delta_1}}\sum_{\stackrel{\alpha_1, \beta_1, \alpha_2, \beta_2, k_1, l_1, k_2, l_2}{u(k_1, l_1, k_2, l_2) = (\gamma_0, \gamma_1)}} \tilde{f}_{\alpha_1, \beta_1, \alpha_2, \beta_2} \left| P^{\alpha_1 \beta_1}_{i_1 j_1 k_1 l_1} P^{\alpha_2 \beta_2}_{i_2 (i_1 \oplus j_1 \oplus i_2) k_2 l_2}\right|^2
\end{align}
where $\delta_0, \delta_1, \gamma_0, \gamma_1 \in \lbrace 0,1 \rbrace$ and omitting the normalization factor for clarity, (\ref{eqn.proofrec7}) simplifies to
\begin{align*}
\sum_{\delta_0, \delta_1}\ketbra{B_{\delta_0 \delta_1}}{B_{\delta_0 \delta_1}}{} \otimes \sum_{\gamma_0, \gamma_1} |\tilde{P}_{\delta_0  \delta_1  \gamma_0  \gamma_1}|^2 \ketbra{\gamma_0 \gamma_1}{\gamma_0 \gamma_1}{}
\end{align*}
which is the state of Alice, Bob, and L after one noisy distillation step. Since this final state is purified by Eve with the leaked noise transcripts and all purifications are unitarily equivalent, the state of Alice, Bob, L, and Eve after one noisy distillation step can be written without loss of generality as
\begin{align*}
\ket{\psi^{DEJMPS}} = \sum\limits_{\delta_0, \delta_1, \gamma_0, \gamma_1} \tilde{P}_{\delta_0, \delta_1, \gamma_0, \gamma_1} \ket{B_{\delta_0,\delta_1}}_{AB} \ket{\gamma_0 \gamma_1}_{L} \ket{\delta_0 \delta_1 \gamma_0 \gamma_1}_{E}.
\end{align*}
This also implies that (\ref{eq.proof.recurrencerelation}) are the recurrence relations of the noisy DEJMPS protocol.

\vspace*{1cm}
\underline{Fixed point and convergence - Binary pairs}
\vspace*{1cm}

First we study the scaling of the systems of Alice, Bob, and L and extend those results then to the (possibly leaked) noise transcripts of Eve in terms of purifications.

Suppose that the initial i.i.d. pairs of Alice and Bob are mixtures of $\ket{B_{00}}$ and $\ket{B_{01}}$ and that L applies either the identity or a $\px$-operator with respective probabilities $\tilde{f}_0$ and  $\tilde{f}_1 = 1-\tilde{f}_0$ independently to each pair. We remind the reader that Eve purifies the state of Alice, Bob, and L with the leaked noise transcripts, i.e. each individual state taking Eve into account at an intermediate round of the DEJMPS protocol reads as $\sum_{i,j} P_{ij} \ket{B_{0i}}_{AB} \otimes \ket{\eta_j}_{L} \otimes \ket{\eta_{ij}}_{E}$. Using $p_{ij} = |P_{ij}|^2$, the recurrence relations (\ref{eq.proof.recurrencerelation}) for the setting we are concerned with here simplify to
\begin{align}\label{eq.recurrencerelation.bin1}
\tilde{p}_{00} & = 1/N(\tilde{f}^2_0 \left( p^2_{00} + 2 p_{00} p_{01}  \right) + \tilde{f}^2_1 \left(p^2_{11} + 2 p_{10} p_{11} \right) + 2 \tilde{f}_0 \tilde{f}_1 \left(p_{11} p_{00} + p_{10} p_{00} + p_{11} p_{01} \right)), \\
\tilde{p}_{01} & = 1/N(\tilde{f}^2_0 p^2_{01} + 2 \tilde{f}_0 \tilde{f}_1 p_{10} p_{01} + \tilde{f}^2_1 p^2_{10}), \\
\tilde{p}_{10} & = 1/N(\tilde{f}^2_0 \left( p^2_{10} + 2 p_{10} p_{11}  \right) + \tilde{f}^2_1 \left(p^2_{01} + 2 p_{00} p_{01} \right) + 2 \tilde{f}_0 \tilde{f}_1 \left(p_{01} p_{10} + p_{00} p_{10} + p_{01} p_{11} \right)), \\ \label{eq.recurrencerelation.bin4}
\tilde{p}_{11} & = 1/N(\tilde{f}^2_0 p^2_{11} + 2 \tilde{f}_0 \tilde{f}_1 p_{00} p_{11} + \tilde{f}^2_1 p^2_{00}).
\end{align}
where $N = (\tilde{f}^2_0 + \tilde{f}^2_1)((p_{00} + p_{01})^2 + (p_{10} + p_{11})^2) + 4 \tilde{f}_0 \tilde{f}_1 (p_{00}+p_{01})(p_{10}+p_{11})$. In the following we denote the recurrence relations (\ref{eq.recurrencerelation.bin1})--(\ref{eq.recurrencerelation.bin4}) by the vector-valued mapping $\mathbf{f}$, i.e. $\mathbf{p} \stackrel{\mathbf{f}}{\to} \mathbf{\tilde{p}},$ where $\mathbf{p}=(p_{00}, p_{01}, p_{10}, p_{11})$. A simple computation yields the following fixed points of $\mathbf{f}$:
\begin{align}\label{eq.fixedpoint.bin}
& p^\infty_{00} = 1/2 + \sqrt{4 \tilde{f}_0 - 3}/(4\tilde{f}_0 -2) \quad p^\infty_{01} = p^\infty_{10} = 0 \quad p^\infty_{11} = 1-p^\infty_{00}, \\
& p^\infty_{00} = 1/2 - \sqrt{4 \tilde{f}_0 - 3}/(4\tilde{f}_0 -2) \quad p^\infty_{01} = p^\infty_{10} = 0 \quad p^\infty_{11} = 1-p^\infty_{00}, \\
& p^\infty_{00}= p^\infty_{11} = 1/2 \quad p^\infty_{01} = p^\infty_{10} = 0.
\end{align}
The parameter estimation phase guarantees that the fidelity $F$ with $\ket{B_{00}}$ is sufficiently high for distillation. Hence the fixed point of interested is (\ref{eq.fixedpoint.bin}), i.e.
\begin{align}\label{eqn.bin.fixedpoints}
\mathbf{p}^\infty = (1/2 + \sqrt{4 \tilde{f}_0 - 3}/(4\tilde{f}_0 -2), 0, 0, 1/2 - \sqrt{4 \tilde{f}_0 - 3}/(4\tilde{f}_0 -2)).
\end{align}
From (\ref{eqn.bin.fixedpoints}) we observe that in the limit the `cross-probabilities' $p_{01}$ and $p_{10}$, vanish, hence $L$ is fully correlated to $AB$.

It is of central importance, regarding convergence that the fixed point $\mathbf{p}^\infty$ is an attractor, as only this ensures convergence towards that fixed point. Note that $\mathbf{p}^\infty$ is an attractor if and only if the largest eigenvalue $\lambda_{max}$ of $\mathbf{f'(p^\infty)}$ satisfies $\lambda_{max} < 1$. We easily find that $\lambda_{max} = (\tilde{f}_0 \sqrt{4 \tilde{f}_0 - 3} - \tilde{f}_0)/(2 \tilde{f}_0 - 1) < 1$ for $0.78 \leq \tilde{f}_0 \leq 1$. 

The fixed point $\mathbf{p}^\infty$ enables us to determine the rate of convergence. For that purpose, we expand $\mathbf{f}$ in terms of its Taylor series around the fixed point $\mathbf{p}^\infty$, i.e. $\mathbf{\tilde{p}} = \mathbf{f}(\mathbf{p}) \approx \mathbf{f}(\mathbf{p}^\infty) + \mathbf{f'}(\mathbf{p}^\infty)(\mathbf{p} - \mathbf{p}^\infty).$ Hence by defining $\mathbf{e} = \mathbf{p} - \mathbf{p^\infty}$ we find $\mathbf{\tilde{e}} = \mathbf{f'(p^\infty)} \mathbf{e}$, providing an estimate of the error propagation for one successful distillation round. The state of Alice, Bob, and L after $n$ successful distillation rounds and at the fixpoint read as $\rho_n = \sum_{ij} p^{(n)}_{ij} \dm{B_{0i}}_{AB} \otimes \dm{\eta_j}_{L}$ and $\rho_{\text{fix}} = \sum_{i} p^\infty_{ii} \dm{B_{0i}}_{AB} \otimes \dm{\eta_i}_{L}$ respectively, which implies for their distance induced by the $1$-norm
\begin{align}\label{eq:scaling:binary:local}
\epsilon_{n} = \| \rho_n - \rho_{\text{fix}} \|_{1} = \left\| \sum_{i,j} (p^{(n)}_{ij} - p^\infty_{ij}) \dm{B_{0i}}_{AB} \otimes \dm{\eta_j}_{L} \right\|_{1} = \underbrace{\sum_{i,j} |p^{(n)}_{ij} - p^\infty_{ij}|}_{\|\mathbf{e_{n}} \|_{1;\text{v}}} \leq \|\mathbf{f'}(\mathbf{p^\infty})^{n-1} \| \| \mathbf{e_{1}}\|_{1;\text{v}}.
\end{align}
where $\| \mathbf{x} \|_{1;\text{v}} = \sum^k_{i=1} |x_i|$ denotes the $1$-norm of vectors in $\C^k$.

Eq. (\ref{eq:scaling:binary:local}) only concerns the systems of Alice, Bob, and L. To complete the analysis we recall that Eve purifies $\rho_{n}$ and $\rho_{\text{fix}}$ with the leaked noise transcripts of L. If we take this purifying system, $E$, into account, i.e. consider $\|\dm{\psi^{n}}_{ABEL} - \dm{\psi^\alpha}_{ABEL}\|_{1}$ where $\rho_n = \ptr{E}{\dm{\psi^{n}}_{ABEL}}$, $\ket{\psi^\alpha}_{ABEL} = \sum_{i,j} P^{\infty}_{ij} \ket{B_{0i}}_{AB} \otimes \ket{\eta_j}_{L} \otimes \ket{\eta_{ij}}_{E}$ with $|P^{\infty}_{ij}|^2 = p^\infty_{ij}$ and $\rho_{\text{fix}} = \ptr{E}{\dm{\psi^\alpha}_{ABEL}}$, we find
\begin{align}
\|\dm{\psi^{n}}_{ABEL} - \dm{\psi^\alpha}_{ABEL}\|_{1} \leq \sqrt{\epsilon_{n}} \label{eq:scaling:bin:1}
\end{align}
since purifications scale with a square root.

In order to apply the post-selection-based reduction, we need to relate the previously obtained results for i.i.d. input pairs to general ensembles. As stated in the main text, we exclude the parameter estimation step on $\sqrt{n}$ initial states for simplicity. We remind the reader, as we have stated in the main text, that for all purifications $\ket{\psi}_{ABE'}$ of a $n$-partite input state $\rho_{AB}$ we have
\begin{align}\label{eq:postselection:bin}
\| (\E \otimes id_{E'})(\ketbra{\psi}{\psi}{ABE'}) &- (\F \otimes id_{E'})(\ketbra{\psi}{\psi}{ABE'})\|_1 \leq 4 g_{n,d} \sqrt{\max_{\sigma_{AB}} \left\|(\E_{\text{L}}-\F_{\text{L}})\left(\sigma^{\otimes n}_{AB} \right) \right\|_1}
\end{align}
where $g_{n,d} = {n + d^2 -1 \choose n}$. Thus, inserting the previous result for $2^n$ i.i.d. input states (necessary to achieve $n$ rounds of distillation) in (\ref{eq:postselection:bin}) yields
\begin{align*}
\|(\E \otimes id_{E'})(\ketbra{\psi}{\psi}{ABE'}) &- (\F \otimes id_{E'})(\ketbra{\psi}{\psi}{ABE'})\|_1 \leq 4 g_{2^n,d} \epsilon_{n}^{1/4}.
\end{align*}
One square root in the expression above arises from inequality (\ref{eq:scaling:bin:1}) and the other square root appears from inequality $(\ref{eq:postselection:bin})$.

Hence, for confidentiality we necessarily need $g_{2^n,d} \epsilon_{n}^{1/4} \to 0$ for $n \to \infty$. Thus $\epsilon_{n}^{1/4}$ should decay faster than $g_{2^n,d}$ grows in $n$. Numerical simulations suggest that, for $\tilde{f}_0 = 1-10^{-19}$, this turns out to be true, i.e. the post-selection-based reduction is applicable (see Figure \ref{fig:binpairspostselect}). As stated in the main text such rates are unlikely to be achievable on the physical level, but they are, at least in principle, possible through fault-tolerant constructions.

\begin{figure}[htb]
\centering
\hspace{0pt plus 1filll}\null

$n$

\medskip
\hspace*{1.5em}\raisebox{\dimexpr-.5\height-1em}
  {\includegraphics[scale=0.8]{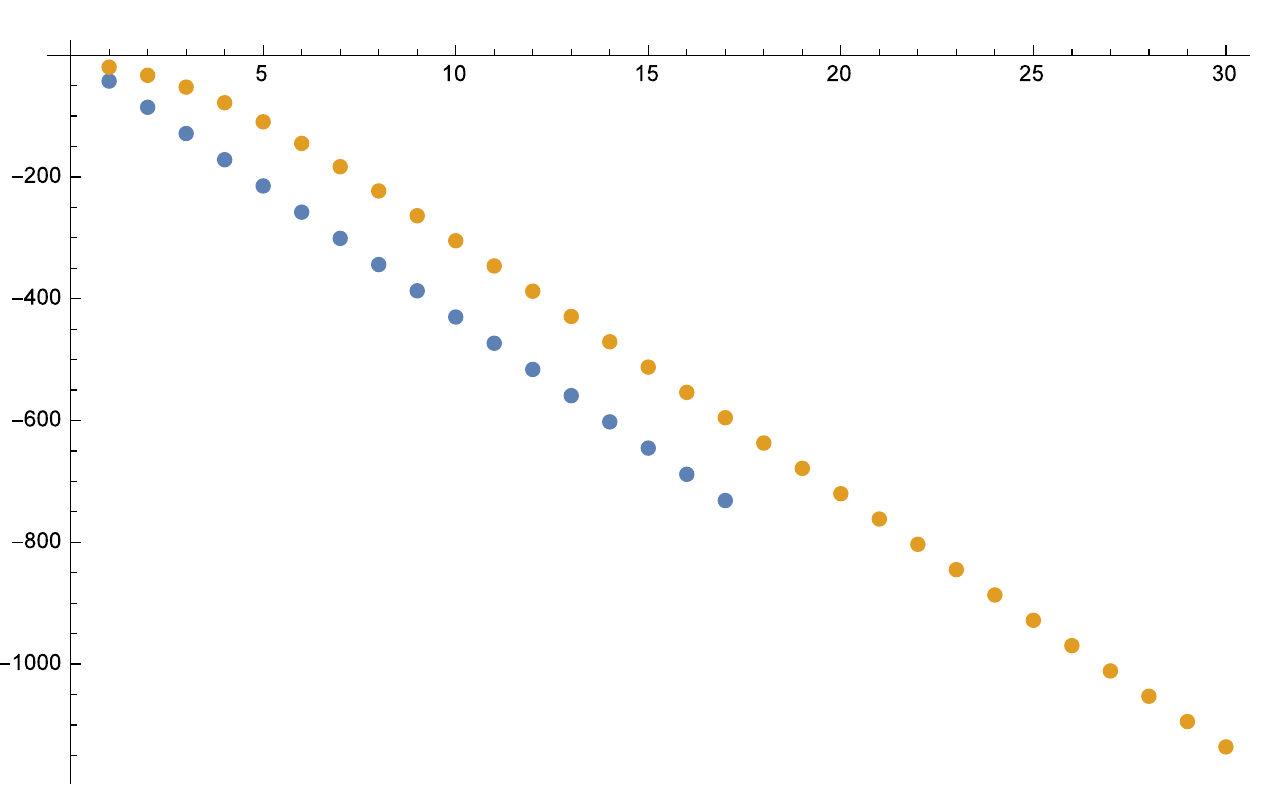}}%

\leavevmode\smash{\makebox[0pt]{\hspace{-35em}
  \rotatebox[origin=l]{90}{\hspace{4em}
    $\log \| \mathbf{f'}(\mathbf{p^\infty})^{n}\|, -4 \log g_{2^n,d}$}%
}}

\caption[h!]{\label{fig:binpairspostselect} The figure illustrates $\log \| \mathbf{f'}(\mathbf{p^\infty})^{n}\|$ (blue) and $-4 \log g_{2^n,d}$ (yellow) for the binary pairs and $\tilde{f}_0 = 1-10^{-19}$.}
\end{figure}

\vspace*{1cm}
\underline{Fixed point and convergence - General pairs}
\vspace*{1cm}

In the following we show that the previous established results also hold true for the general i.i.d. setting where L applies all four Pauli operators and each individual pair is arbitrary. We remind the reader that the recurrence relations for states $\sum_{i,j,k,l} P_{ijkl} \ket{B_{ij}}_{AB} \otimes \ket{\eta_{kl}}_{L} \otimes \ket{\eta_{ijkl}}_{E}$ (i.e. Eve purifies $\rho_n = \sum_{i,j,k,l} |P_{ijkl}|^2 \ketbra{B_{ij}}{B_{ij}}{AB} \otimes \ketbra{\eta_{kl}}{\eta_{kl}}{L}$ with the leaked noise transcripts) read (by denoting $|P_{ijkl}|^2 = p_{ijkl}$) as
\begin{align*}
\tilde{p}_{\delta_0  \delta_1  \gamma_0  \gamma_1} = \sum_{\stackrel{i_1, i_2, j_1}{i_1 \oplus i_2 = \delta_0, i_1 \oplus j_1 = \delta_1}}\sum_{\stackrel{\alpha_1, \beta_1, \alpha_2, \beta_2, k_1, l_1, k_2, l_2}{u(k_1, l_1, k_2, l_2) = (\gamma_0, \gamma_1)}} \tilde{f}_{\alpha_1, \beta_1, \alpha_2, \beta_2} p_{(i_1 \oplus \alpha_1) (j_1\oplus \beta_1) (k_1 \oplus \alpha_1) (l_1 \oplus \beta_1)} p_{(i_2 \oplus \alpha_2) (i_1 \oplus j_1 \oplus i_2 \oplus \beta_2) (k_2 \oplus \alpha_2) (l_2 \oplus \beta_2)}
\end{align*}
modulo the normalization factor $\sum_{\delta_0  \delta_1  \gamma_0  \gamma_1} \tilde{p}_{\delta_0  \delta_1  \gamma_0  \gamma_1}$.

For simplicity we assume independent single qubit white noise, i.e. $\tilde{f}_{\alpha_1, \beta_1, \alpha_2, \beta_2} = \tilde{f}_{\alpha_1, \beta_1} \tilde{f}_{\alpha_2, \beta_2}$ as well as $\tilde{f}_{\alpha_1, \beta_1} = f$ if $\alpha_1=\beta_1=0$ and $(1-f)/3$ otherwise. Furthermore, we assume that the initial fidelity $F$ with $\ket{B_{00}}$ is sufficiently high for distillation. Numerically iterating the recurrence relations (which we again denote by $\mathbf{p} \stackrel{\mathbf{f}}{\to} \mathbf{\tilde{p}}$) reveal that, for a sufficiently large number of iterations, the `cross-probabilities' vanish, i.e. $p^{\infty}_{ijkl} = 0 \Leftrightarrow i \neq k$ or $j \neq l$. Hence, to obtain a fixed point $\mathbf{p^{\infty}} = (p^{\infty}_{ijkl})^1_{i,j,k,l = 0}$ of $\mathbf{f}$, it is reasonable to assume that $p^{\infty}_{ijkl} =0 \Leftrightarrow i \neq k$ or $j \neq l$.

Thus the fixed point $\mathbf{p^{\infty}}$ is determined by four equations in four unknowns, namely the equations

\begin{align*}
p_{\delta_0  \delta_1  \delta_0  \delta_1} = \frac{1}{N} \sum_{\stackrel{i_1, i_2, j_1}{i_1 \oplus i_2 = \delta_0, i_1 \oplus j_1 = \delta_1}}\sum_{\stackrel{\alpha_1, \beta_1, \alpha_2, \beta_2}{u(i_1, j_1, i_2, i_1 \oplus j_1 \oplus i_2) = (\delta_0, \delta_1)}} \tilde{f}_{\alpha_1, \beta_1} \tilde{f}_{\alpha_2, \beta_2} & p_{(i_1 \oplus \alpha_1) (j_1 \oplus \beta_1) (i_1 \oplus \alpha_1) (j_1 \oplus \beta_1)} \\
& \cdot p_{(i_2 \oplus \alpha_2) (i_1 \oplus j_1 \oplus i_2 \oplus \beta_2) (i_2 \oplus \alpha_2) (i_1 \oplus j_1 \oplus i_2 \oplus \beta_2)}.
\end{align*}

where $\delta_0, \delta_1 \in \lbrace 0,1 \rbrace$ and $N = \sum_{\delta_0,  \delta_1} p_{\delta_0  \delta_1  \delta_0  \delta_1 }$. Figure \ref{fig:p0000:fixed} illustrates the numerical estimate of $p^{\infty}_{0000}$ as a function of $f$.

\begin{figure}[htb]
\centering
\hspace*{1.5em}\raisebox{\dimexpr-.5\height-1em}
  {\includegraphics{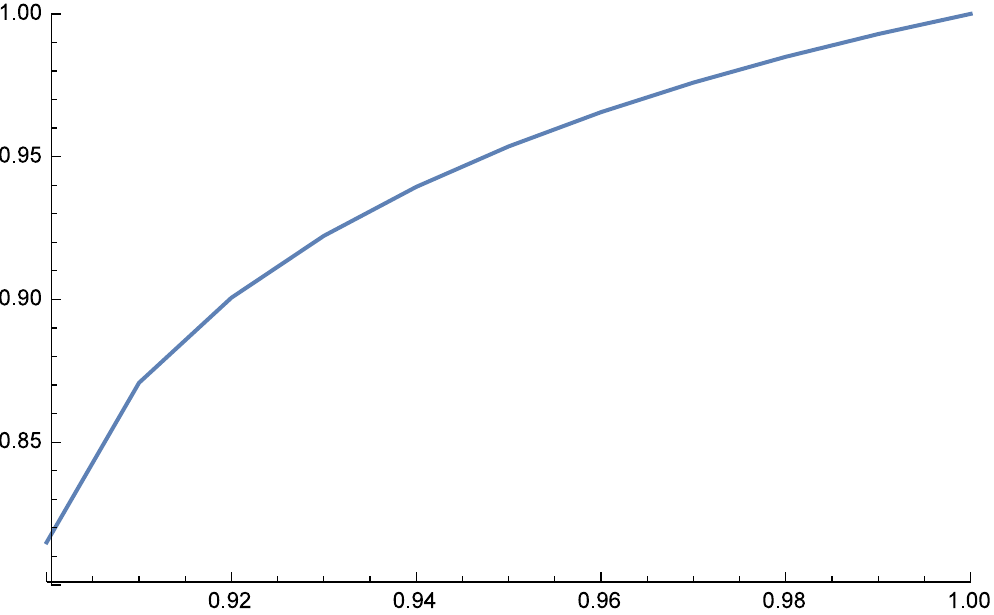}
}%

\leavevmode\smash{\makebox[0pt]{\hspace{-35em}
  \rotatebox[origin=l]{90}{\hspace{10em}
    $p^\infty_{0000}$}%
}}

\hspace{0pt plus 1filll}\null

$f$

\medskip
\caption[h!]{\label{fig:p0000:fixed} The figure illustrates $p^{\infty}_{0000}$ as a function of $f$. The fidelity with $\ket{B_{00}}$ of the asymptotic state is equal to unity for a perfect apparatus.}
\end{figure}

Similar to the case of binary pairs, we can write the recurrence relations $\mathbf{f}$ in terms of its Taylor series expansion around the fixed point $\mathbf{p}^\infty$, i.e. $\mathbf{\tilde{p}} = \mathbf{f}(\mathbf{p}) \approx \mathbf{f}(\mathbf{p}^\infty) + \mathbf{f'}(\mathbf{p}^\infty)(\mathbf{p} - \mathbf{p}^\infty).$ Hence by defining $\mathbf{e} = \mathbf{p} - \mathbf{p^\infty}$ we have $\mathbf{\tilde{e}} = \mathbf{f'(p^\infty)} \mathbf{e}$, i.e. as for binary pairs, the error induced by the $1-$norm of the state of Alice, Bob, and L after $n$ successful distillation rounds satisfies
\begin{align}
\| \rho_n - \rho_{\text{fix}} \|_{1} = \left\| \sum_{i,j,k,l} \left(p^{(n)}_{ijkl} - p^\infty_{ijkl} \right) \dm{B_{ij}}_{AB} \otimes \dm{\eta_{kl}}_{L} \right\|_{1} \leq  \sum_{i,j,k,l} \left|p^{(n)}_{ijkl} - p^\infty_{ijkl} \right| \leq \|\mathbf{f'}(\mathbf{p^\infty})^{n-1} \| \| \mathbf{e_{1}}\|_{1;\text{v}}. \label{eq:scaling:general}
\end{align}

\begin{figure}[htb]
\centering
\hspace{0pt plus 1filll}\null

$n$

\medskip
\hspace*{1.5em}\raisebox{\dimexpr-.5\height-1em}
  {\includegraphics{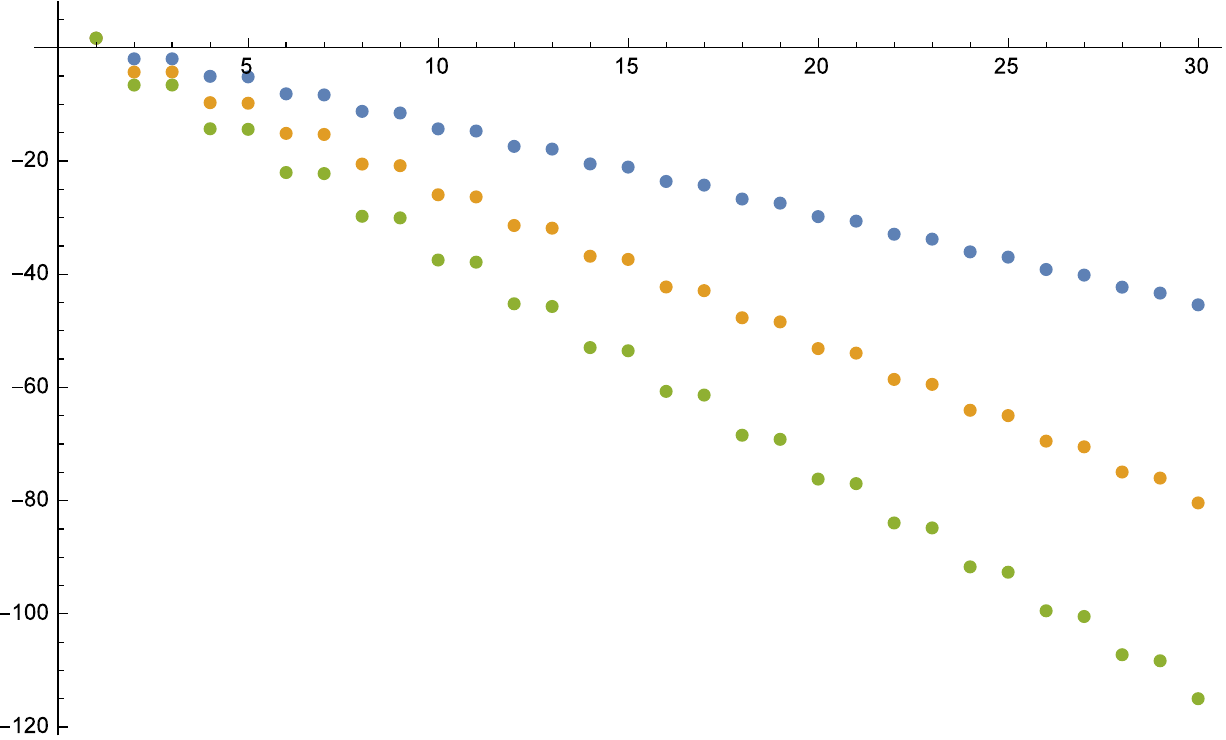}}%

\leavevmode\smash{\makebox[0pt]{\hspace{-40em}
  \rotatebox[origin=l]{90}{\hspace{8em}
    $\log \|\mathbf{f'}(\mathbf{p^\infty})^{n} \|$}%
}}

\caption[h!]{\label{fig:specnormjac} The figure illustrates the value of $\log \|\mathbf{f'}(\mathbf{p^\infty})^{n} \|$ as a function of successful distillation rounds for single qubit white noise $10^{-2}$ (blue), $10^{-3}$ (yellow) and $10^{-4}$ (green).}
\end{figure}

Figure \ref{fig:specnormjac} suggests a linear relationship between the number of successful distillation rounds $n$ and $\log \|\mathbf{f'}(\mathbf{p^\infty})^{n-1} \|$ for each noise level $f$, i.e. $b(f) n + a(f) = \log \|\mathbf{f'}(\mathbf{p^\infty})^{n-1} \|$. As the number $N$ of pairs necessary to achieve $n$ distillation rounds is $N = 2^n$ ($\Leftrightarrow n = \log_2 N$) we have $b(f) \log_2 N + a(f) = \log \|\mathbf{f'}(\mathbf{p^\infty})^{n-1} \|$, which is equivalent to
\begin{align*}
\|\mathbf{f'}(\mathbf{p^\infty})^{n-1} \| = e^{a(f)} e^{b(f) \log_2 N} = a'(f) N^{b'(f)}.
\end{align*}
Hence, $\|\mathbf{f'}(\mathbf{p^\infty})^{n-1} \|$ scales as $F(N) \in O(N^{b'(f)})$ where $b'(f) < 0$ and $b'(f)$ decays for $f \to 1$.

What is left to show, is that the fixed point $\mathbf{p^\infty}$ is an attracting fixed point. For that purpose we numerically compute the largest eigenvalue of $\mathbf{f'}(\mathbf{p^\infty})$, see Fig. \ref{fig:lambdamaxjac}, and observe that, for noise below $10^{-1}$, i.e. $1-f < 10^{-1}$, the largest eigenvalue $\lambda_{max}$ of $\mathbf{f'}(\mathbf{p^\infty})$ fulfills $\lambda_{max} < 1$, proving that $\mathbf{p^\infty}$ is an attracting fixed point.

\begin{figure}[htb]
\centering
\hspace*{1.5em}\raisebox{\dimexpr-.5\height-1em}
  {\includegraphics{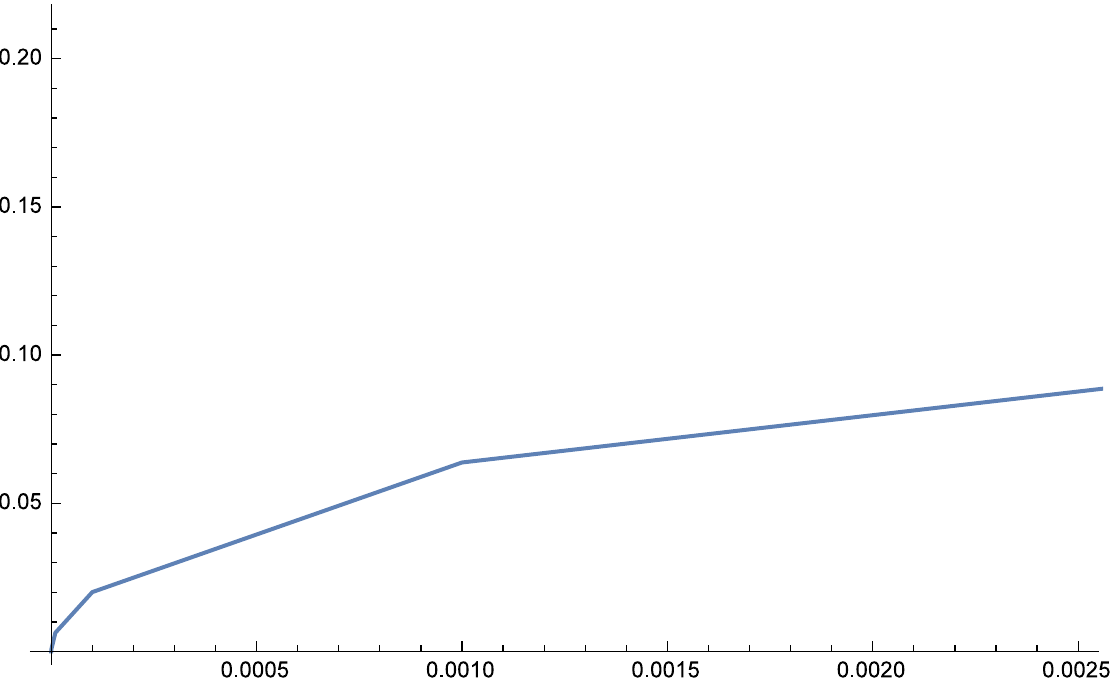}
}%

\leavevmode\smash{\makebox[0pt]{\hspace{-38em}
  \rotatebox[origin=l]{90}{\hspace{10em}
    $\lambda_{max}$}%
}}

\hspace{0pt plus 1filll}\null

$1-f$

\medskip
\caption[h!]{\label{fig:lambdamaxjac} The figure shows the largest eigenvalue of $\mathbf{f'}(\mathbf{p^\infty})$ (y-axis) for single qubit white noise (x-axis)}
\end{figure}

This implies that, if the initial fidelity $F$ with $\ket{B_{00}}$ is sufficiently large for distillation, the DEJMPS protocol necessarily converges towards the fixed point $\mathbf{p^\infty}$ where the `cross-probabilities' vanish.

The analysis so far still lacks Eve's system $E$ for the leaked noise transcripts. Suppose $\ket{\psi^{n}}_{ABEL}$ and $\ket{\psi^f}_{ABEL}$ are purifications of $\rho_n$ and $\rho_{\text{fix}}$, i.e. $\rho_n = \ptr{E}{\dm{\psi^{n}}}$ and $\rho_{\text{fix}} = \ptr{E}{\dm{\psi^f}}$ respectively. This implies $\epsilon_{n} = \|\dm{\psi^{n}} - \dm{\psi^f}\|_{1} \leq \sqrt{F(N)}$, i.e. $\epsilon_{n} \in O(N^{b'(f)/2})$ which we also confirmed with our numeric results.

It is straightforward to extend the analysis above to two-qubit correlated noise introduced by L on the system of Alice and Bob. For that purpose we assume that $\tilde{f}_{\alpha_1, \beta_1, \alpha_2, \beta_2} = \tilde{f} + (1-\tilde{f})/16$ if $\alpha_1=\beta_1=\alpha_2=\beta_2=0$ and $(1-\tilde{f})/16$ otherwise. Also in that case we numerically observe that $p^{\infty}_{ijkl} = 0 \Leftrightarrow i \neq k$ or $j \neq l$. Hence it is reasonable to assume that $p^{\infty}_{ijkl} =0 \Leftrightarrow i \neq k$ or $j \neq l$ in order to obtain a fixed point $\mathbf{p^{\infty}} = (p^{\infty}_{ijkl})^1_{i,j,k,l = 0}$ of $\mathbf{f}$.

The fixed point $\mathbf{p^{\infty}}$ is determined by four equations in four unknowns, namely the equations

\begin{align*}
p_{\delta_0  \delta_1  \delta_0  \delta_1} = \frac{1}{N} \sum_{\stackrel{i_1, i_2, j_1}{i_1 \oplus i_2 = \delta_0, i_1 \oplus j_1 = \delta_1}}\sum_{\stackrel{\alpha_1, \beta_1, \alpha_2, \beta_2}{u(i_1, j_1, i_2, i_1 \oplus j_1 \oplus i_2) = (\delta_0, \delta_1)}} \tilde{f}_{\alpha_1, \beta_1, \alpha_2, \beta_2} & p_{(i_1 \oplus \alpha_1) (j_1 \oplus \beta_1) (i_1 \oplus \alpha_1) (j_1 \oplus \beta_1)} \\
& \cdot p_{(i_2 \oplus \alpha_2) (i_1 \oplus j_1 \oplus i_2 \oplus \beta_2) (i_2 \oplus \alpha_2) (i_1 \oplus j_1 \oplus i_2 \oplus \beta_2)}.
\end{align*}

where $\delta_0, \delta_1 \in \lbrace 0,1 \rbrace$ and $N = \sum_{\delta_0,  \delta_1} p_{\delta_0  \delta_1  \delta_0  \delta_1 }$. Figure \ref{fig:p0000:fixed:2qu} illustrates the numerical estimate of $p^{\infty}_{0000}$ as a function of $\tilde{f}$.

\begin{figure}[htb]
\centering
\hspace*{1.5em}\raisebox{\dimexpr-.5\height-1em}
  {\includegraphics[scale=0.7]{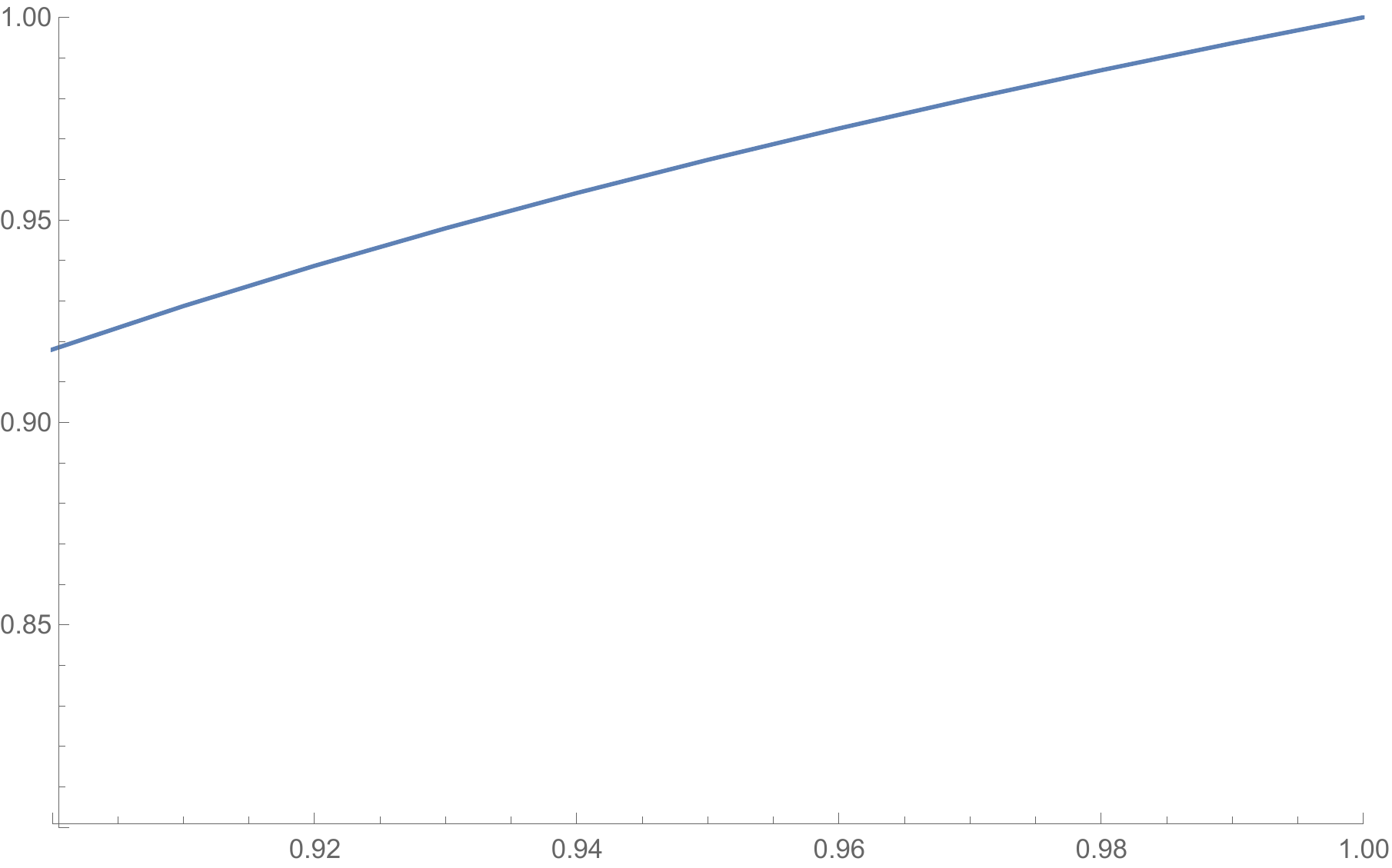}
}%

\leavevmode\smash{\makebox[0pt]{\hspace{-45em}
  \rotatebox[origin=l]{90}{\hspace{15em}
    $p^\infty_{0000}$}%
}}

\hspace{0pt plus 1filll}\null

$\tilde{f}$

\medskip
\caption[h!]{\label{fig:p0000:fixed:2qu} The figure illustrates $p^{\infty}_{0000}$ as a function of $\tilde{f}$ for two qubit correlated noise. The fidelity with $\ket{B_{00}}$ of the asymptotic state is equal to unity for a perfect apparatus.}
\end{figure}

Furthermore we numerically compute the largest eigenvalue of $\mathbf{f'}(\mathbf{p^\infty})$ and observe that if $\tilde{f} > 0.8284$, the largest eigenvalue $\lambda_{max}$ of $\mathbf{f'}(\mathbf{p^\infty})$ fulfills $\lambda_{max} < 1$, hence $\mathbf{p^\infty}$ is an attracting fixed point, see Fig. \ref{fig:lambdamaxjac_2qu}.

\begin{figure}[htb]
\centering
\hspace*{1.5em}\raisebox{\dimexpr-.5\height-1em}
  {\includegraphics[scale=0.7]{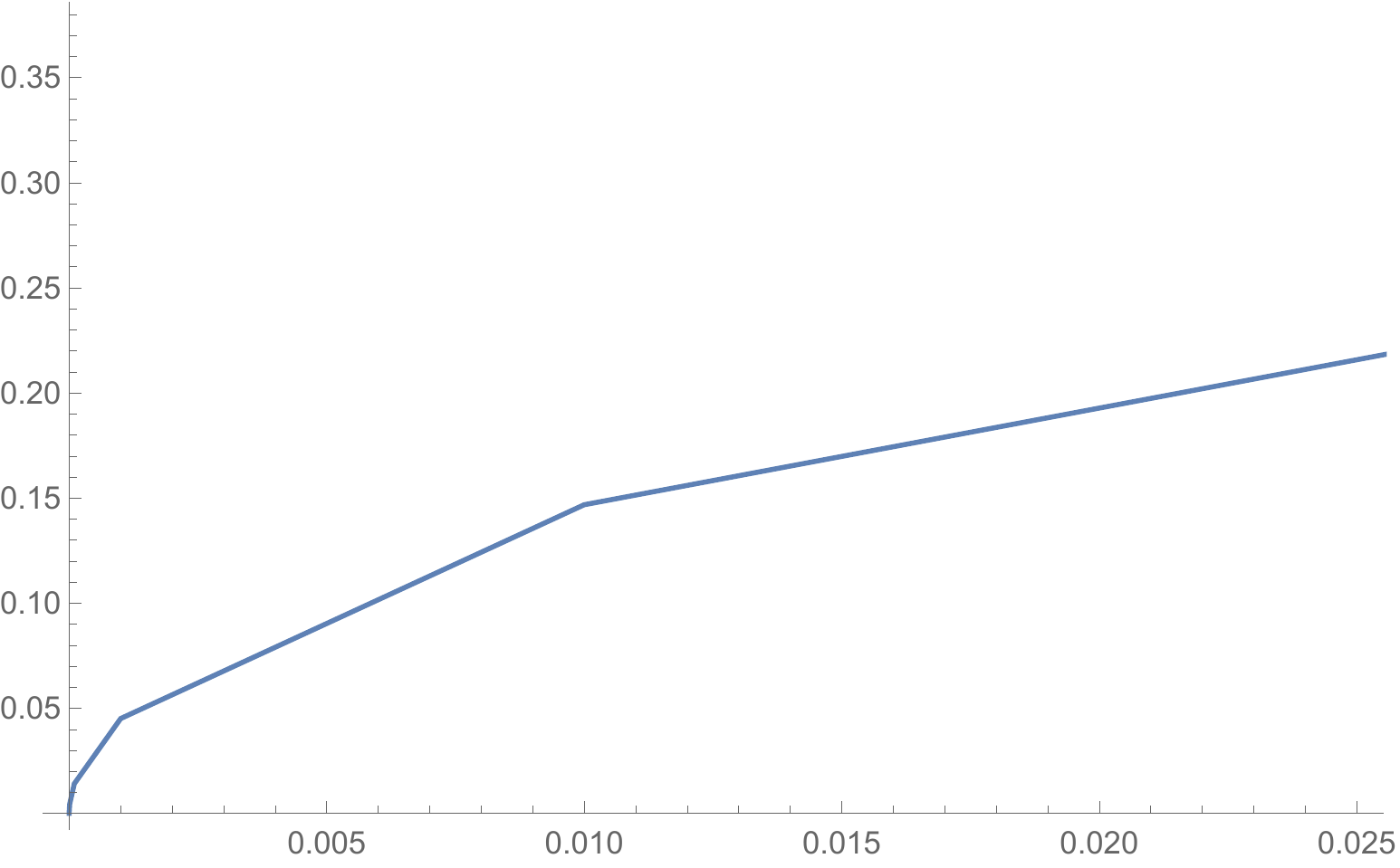}
}%

\leavevmode\smash{\makebox[0pt]{\hspace{-43em}
  \rotatebox[origin=l]{90}{\hspace{10em}
    $\lambda_{max}$}%
}}

\hspace{0pt plus 1filll}\null

$1-\tilde{f}$

\medskip
\caption[h!]{\label{fig:lambdamaxjac_2qu} The figure shows the largest eigenvalue of $\mathbf{f'}(\mathbf{p^\infty})$ (y-axis) for correlated two qubit noise (x-axis)}
\end{figure}

Finally, we obtain again a linear relationship between the number of successful distillation rounds $n$ and $\log \|\mathbf{f'}(\mathbf{p^\infty})^{n-1} \|$ for each noise level $\tilde{f}$, i.e. $b_2(\tilde{f}) n + a_2(\tilde{f}) = \log \|\mathbf{f'}(\mathbf{p^\infty})^{n-1} \|$, see Fig. \ref{fig:specnormjac:2qu}. This implies, similar to the case of single qubit white noise, that the right-hand-side of (\ref{eq:scaling:general}) converges polynomial fast towards zero in terms of initial states. The rate of convergence is governed by $\tilde{f}$, i.e. $\| \rho_n - \rho_{\text{fix}} \|_{1} \leq F_2(N)$ where $F_2(N) \in O(N^{b_2(\tilde{f})})$ and $b_2(\tilde{f}) < 0$ with $b_2(\tilde{f})$ decays for $\tilde{f} \to 1$.

\begin{figure}[htb]
\centering
\hspace{0pt plus 1filll}\null

$n$

\medskip
\hspace*{1.5em}\raisebox{\dimexpr-.5\height-1em}
  {\includegraphics[scale=0.7]{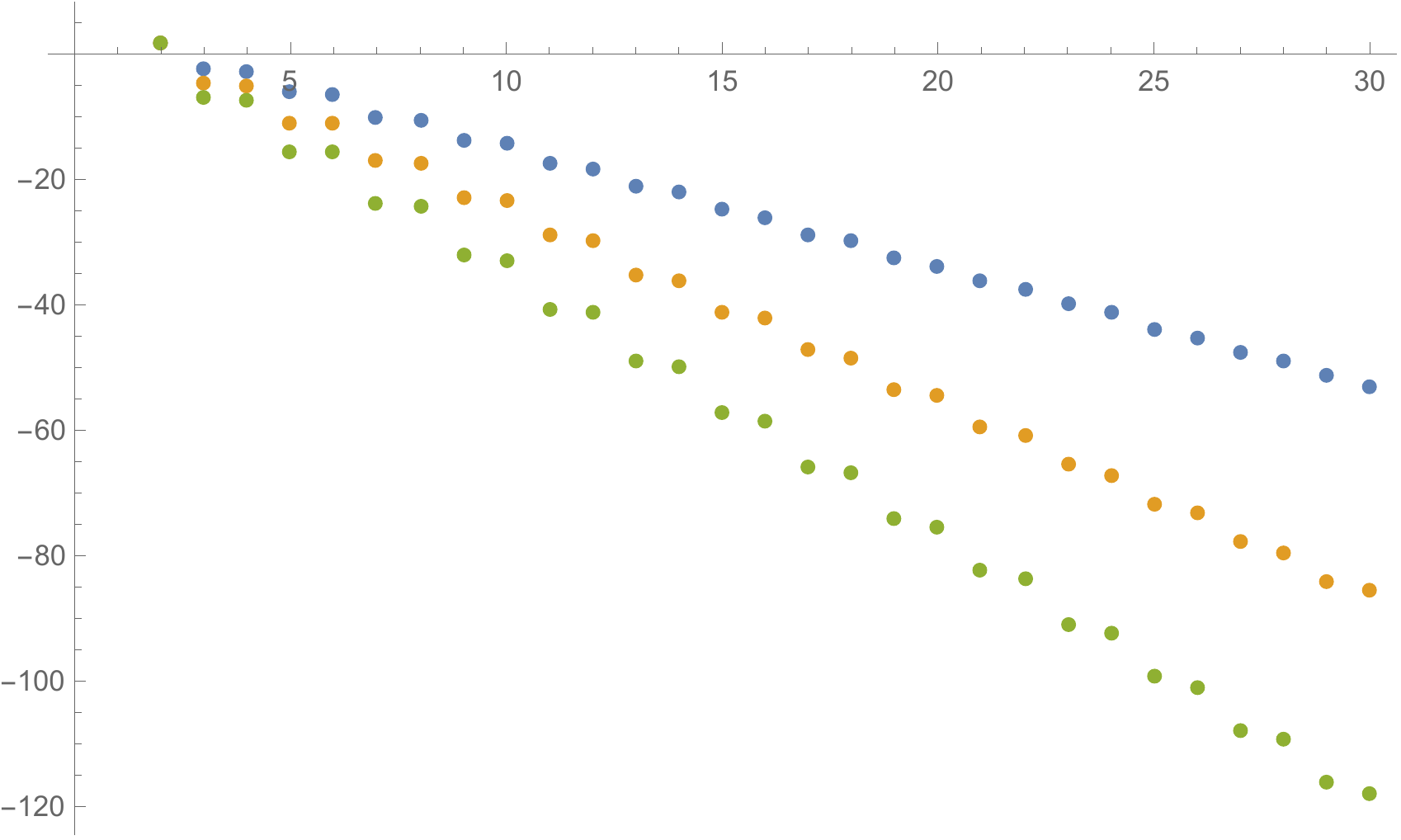}}%

\leavevmode\smash{\makebox[0pt]{\hspace{-40em}
  \rotatebox[origin=l]{90}{\hspace{8em}
    $\log \|\mathbf{f'}(\mathbf{p^\infty})^{n} \|$}%
}}

\caption[h!]{\label{fig:specnormjac:2qu} The figure illustrates the value of $\log \|\mathbf{f'}(\mathbf{p^\infty})^{n} \|$ as a function of successful distillation rounds for two qubit correlated noise $10^{-2}$ (blue), $10^{-3}$ (yellow) and $10^{-4}$ (green).}
\end{figure}

Taking the system of leaking noise transcripts into account,this implies that $\epsilon_{n} = \left\|\dm{\psi^{n}} - \dm{\psi^{\tilde{f}}} \right\|_{1} \leq \sqrt{F_2(N)}$, i.e. $\epsilon_{n} \in O(N^{b_2(\tilde{f})/2})$.

To conclude the analysis, we now show that the noise model of two-qubit depolarizing noise is actually sufficient to cover {\em any} noise process for two-qubit operations. This is the case because for any CNOT-type gate (which we need to apply in the case of both recurrence-type entanglement distillation protocols we consider), one can depolarize these gates to a standard form \cite{DuerStd}. This is done by randomly applying single-qubit operations before and after the application of the gate, which allows one to reduce any noise characteristics to a specific form with 8 parameters without altering the fidelity of the gate. A further simplification is possible if the noise characteristic of the apparatus is known \cite{DuerStd}, which could in some cases be achieved through quantum process tomography. In this case, one can add additional (local) noise by randomly choosing to apply the gate, or some other (separable) operation. This allows one to bring any CNOT-type gate (i.e. any two-qubit gate that is equivalent to a CNOT gate up to single qubit unitary operations that are applied before and after the gate) to the standard form
\begin{align}
\E(\rho) = \tilde{f} U \rho U^{\dagger} + \frac{1-\tilde{f}}{16}\sum\limits^1_{\alpha_1, \beta_1, \alpha_2, \beta_2 = 0} \sigma_{\alpha_1, \beta_1} \sigma_{\alpha_2, \beta_2} \rho \sigma_{\alpha_1, \beta_1} \sigma_{\alpha_2, \beta_2} \label{eq:ox:gen:1}
\end{align}
As outlined in \cite{DuerStd} this depolarization procedure causes a change in the gate fidelity of the utilized quantum gates. More precisely, if the fidelity of the quantum gate before the depolarization was $F_{g} = 1-x$ then the gate fidelity after the depolarization is $F'_{g} > 1- 17x$. Thus one reduces the quality of the gate by about an order of magnitude in the worst case by depolarizing to this standard form. \newline
We observe that (\ref{eq:ox:gen:1}) can be rewritten as
\begin{align}
\E(\rho) &= \tilde{f} U \rho U^{\dagger} + \frac{1-\tilde{f}}{16}\sum\limits^1_{\alpha_1, \beta_1, \alpha_2, \beta_2 = 0} \sigma_{\alpha_1, \beta_1} \sigma_{\alpha_2, \beta_2} \rho \sigma_{\alpha_1, \beta_1} \sigma_{\alpha_2, \beta_2} \notag \\
&= U\left(\tilde{f} \rho + \frac{1-\tilde{f}}{16}\sum\limits^1_{\alpha_1, \beta_1, \alpha_2, \beta_2 = 0} \sigma_{\alpha_1, \beta_1} \sigma_{\alpha_2, \beta_2} \rho \sigma_{\alpha_1, \beta_1} \sigma_{\alpha_2, \beta_2} \right) U^\dagger \notag \\
&= U \left(\sum\limits^1_{\alpha_1, \beta_1, \alpha_2, \beta_2 = 0} \tilde{f}_{\alpha_1, \beta_1, \alpha_2, \beta_2} \sigma_{\alpha_1, \beta_1} \sigma_{\alpha_2, \beta_2} \rho \sigma_{\alpha_1, \beta_1} \sigma_{\alpha_2, \beta_2} \right) U^\dagger \label{eq:ox:gen:2}
\end{align}
where $\tilde{f}_{0,0,0,0} = \tilde{f} + (1-\tilde{f})/16$ and $\tilde{f}_{\alpha_1, \beta_1, \alpha_2, \beta_2} = (1-\tilde{f})/16$ otherwise. Recall, that one noisy distillation step of the DEJMPS protocol including L is given by (\ref{eqn.overall}). By introducing $O_{\mathrm{D}} = U_{u} O'_{\mathrm{2-EPP}}$ we rewrite (\ref{eqn.overall}) as
\begin{align}
\rho' & = \sum\limits_{\alpha_1, \beta_1, \alpha_2, \beta_2}  \tilde{f}_{\alpha_1, \beta_1, \alpha_2, \beta_2} O_{\mathrm{D}} (U^{(a_1)}_{\alpha_1, \beta_1} \otimes  U^{(a_2)}_{\alpha_2, \beta_2}) \left(\ketbra{\Psi}{\Psi}{} \otimes \ketbra{\Psi}{\Psi}{}\right) (U^{(a_1)}_{\alpha_1, \beta_1} \otimes  U^{(a_2)}_{\alpha_2, \beta_2})^\dagger O^\dagger_{\mathrm{D}}. \label{eq:ox:gen:3}
\end{align}
We observe that the noise maps $U^{(a_1)}_{\alpha_1, \beta_1} \otimes  U^{(a_2)}_{\alpha_2, \beta_2}$ in (\ref{eq:ox:gen:3}) act on Alice's part of the systems only. But this is sufficient due to the symmetry of Bell-states - noise on Bobs side can be moved to the other side. Furthermore the additional $\px$-flips introduced on the system(s) of L by the unitaries $U^{(a_1)}_{\alpha_1, \beta_1} \otimes  U^{(a_2)}_{\alpha_2, \beta_2}$ are used to keep track of the noise map applied. Because Alice and Bob apply the depolarization procedure as described in \cite{DuerStd} and L keeps track of the effective error introduced, we can safely assume that the additional $\px$-flips will be introduced \emph{after} Alice and Bob complete the depolarization procedure, hence it is sufficient to consider two qubit correlated noise introduced at Alice's part of the systems.

\subsection{The BBPSSW protocol}

The protocol proposed in \cite{Bennett} (also referred to as BBPSSW protocol) is very similar to the DEJMPS protocol. Instead of step \ref{enu.oxford.step1} of Protocol \ref{protocol.rd} Alice and Bob apply a correlated depolarization procedure (twirl) to their input states which brings them to Werner form.

For the subsequent analysis, suppose that each pair of Alice and Bob is of the form $\rho(p) = p \ketbra{B_{00}}{B_{00}}{} + (1-p)\frac{1}{4} id$. We assume that the apparatus applies independent and identical noise of the form $N \rho(p) = f \rho(p) + (1-f)/4 (\rho(p) + \px \rho(p) \px + \py \rho(p) \py + \pz \rho(p) \pz)$ before each distillation step. In similar fashion to the DEJMPS protocol one easily obtains the recurrence relation for the noisy BBPSSW protocol:
\begin{align*}
\tilde{p} = \frac{4 p^2 f^2 + 2 p f}{3 p^2 f^2 + 3} = b(p).
\end{align*}
The fixed point $p^\infty$ of the protocol is obtained by solving the equation $b(p^\infty) = p^\infty$. A straightforward computation gives the fixed point $p^\infty = 2/3 + 1/3 \sqrt{4 - 9/f^2 + 6/f}$ (which depends on the noise parameter $f$). It was shown in \cite{Duer} that this fixed point is an attractor assuming sufficiently high initial fidelity with $\ket{B_{00}}$ per input pair. Expressing the recurrence relation $b$ in terms of its Taylor series around $p^\infty$ leads to
\begin{align}\label{eqn.bbpswtaylor}
\tilde{p} = b(p) \approx b(p^\infty) + b'(p^\infty)(p-p^\infty).
\end{align}
Hence, (\ref{eqn.bbpswtaylor}) provides an approximation of the error in terms of fidelity with $\ket{B_{00}}$ after $n+1$ successful distillation rounds, i.e. $\epsilon_{n+1} = \left(b'(p^\infty)\right)^{n} \epsilon_{1}$, see also the plots within Fig. \ref{fig:ibm}. Moreover, we compute the first derivative of $b$ by
\begin{align*}
b'(p) = \frac{2 f (1 + 4 f p - f^2 p^2)}{3 (1 + f^2 p^2)^2}.
\end{align*}
Evaluating $b'$ at $p^\infty$ yields
\begin{align}\label{eqn.bbpsswderivativefix}
b'(p^\infty) = \frac{9 - 3 f}{f (3 + 2 (2 + \sqrt{4 - 9/f^2 + 6/f}) f)}.
\end{align}

\begin{figure}[htb]
\centering
\hspace*{1.5em}\raisebox{\dimexpr-.5\height-1em}
  {\includegraphics{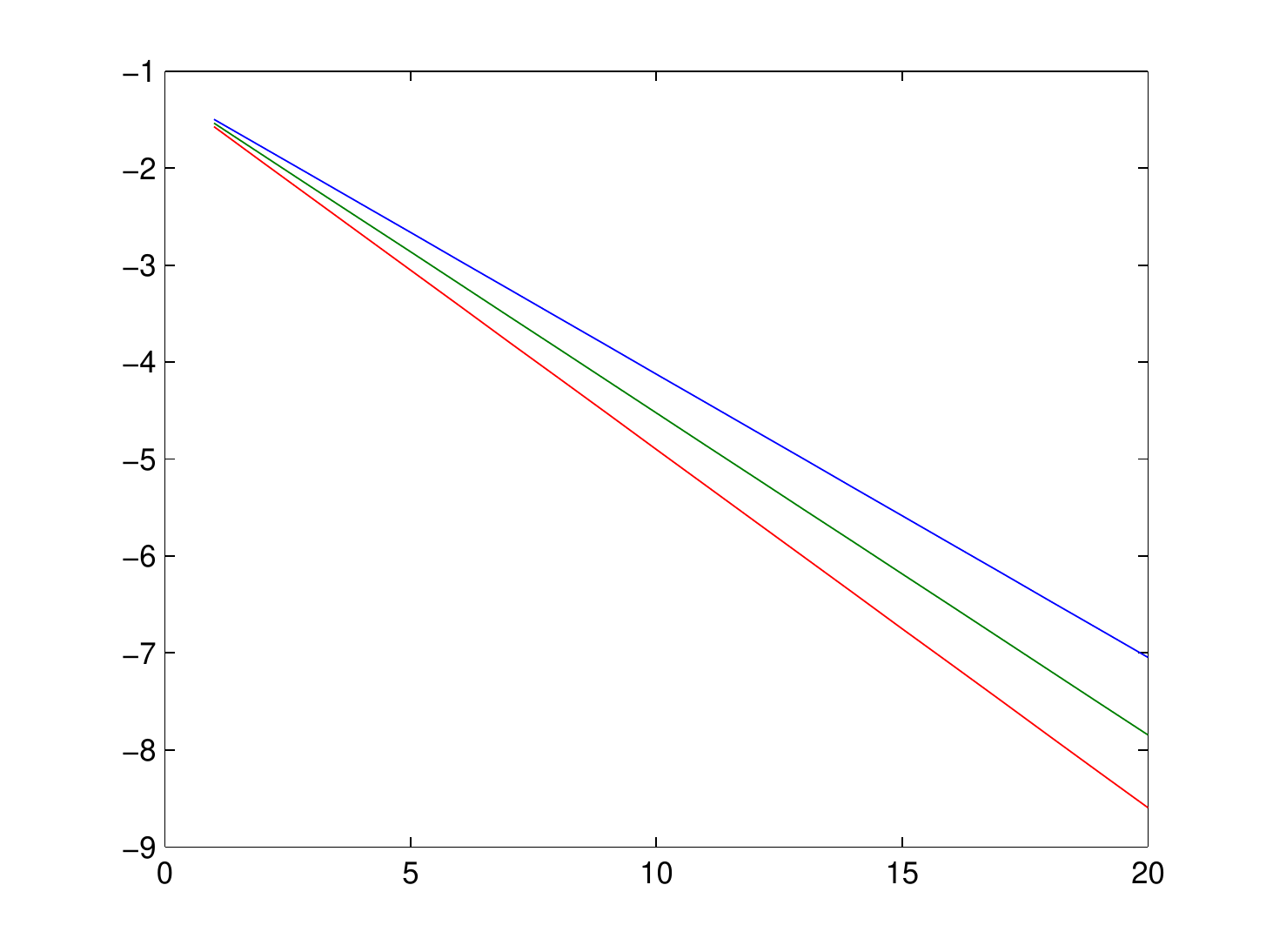}
}%

\leavevmode\smash{\makebox[0pt]{\hspace{-45em}
  \rotatebox[origin=l]{90}{\hspace{12em}
    $\log \epsilon_{n+1} = \log\left(\left(b'(p^\infty)\right)^{n}\epsilon_1 \right)$}%
}}

\hspace{0pt plus 1filll}\null

$n$

\medskip

\caption[h!]{\label{fig:ibm} The figure illustrates $\log \epsilon_{n+1}$ for the BBPSSW protocol for different noise parameters $f = 0.97$ (blue), $f = 0.98$ (green) and $f = 0.99$ (red).}
\end{figure}
From this we conclude that, if the apparatus is perfect, i.e. $f=1$ in (\ref{eqn.bbpsswderivativefix}), the error in terms of fidelity with $\ket{B_{00}}$ after $n+1$ successful distillation rounds scales as $\epsilon_{n+1} = (2/3)^{n} \epsilon_{1}$.

Using $\log_2 N = n$, where $N$ denotes the number of initial states, we infer for $\epsilon_{n+1}$ that
\begin{align*}
\epsilon_{n+1} = \epsilon_{1} b'(p^\infty)^{\log_2 N} = \epsilon_{1} \left(2^{\log_2 b'(p^\infty)} \right)^{\log_2 N} = \epsilon_{1} N^{\log_2 b'(p^\infty)}.
\end{align*}
This implies that $\epsilon_{n+1}$ scales as $F(N) \in O(N^{\log_2 b'(p^\infty)})$ and thus $\|\rho_{\text{fix}} - \rho_{n} \|_{1}$, where  $\rho_{\text{fix}}$ and $\rho_{n}$ denote the fixed point and the state after $n$ successful distillation rounds respectively, scales also as $F(N) \in O(N^{\log_2 b'(p^\infty)})$ as mentioned in the main text. \newline
For the analysis of two qubit correlated noise we assume that the noisy operations used by the BBPSSW protocol are of the form
\begin{align}
O_{12} \rho = \tilde{f} O^{\mathrm{ideal}}_{12} \rho + \frac{1-\tilde{f}}{4} \ptr{12}{\rho} \otimes \id_{12} \label{eq:ibm:noisyop}
\end{align}
where $\rho$ is a two qubit density operator and $O^{\mathrm{ideal}}_{12}$ denotes the ideal two qubit quantum gate. Observe that (\ref{eq:ibm:noisyop}) coincides with the standard form of \cite{DuerStd}. If the noisy quantum gates are not of the form (\ref{eq:ibm:noisyop}) we bring them to that standard form via the same depolarization procedure mentioned in the analysis of the DEJMPS protocol. Hence the following anaylsis is not restricted to this specific noise model, but actually applies to arbitrary noise processes describing noisy two qubit gates. \newline
It has been shown in \cite{Duer} that the BBPSSW protocol converges for noisy CNOT gates of the form (\ref{eq:ibm:noisyop}) to a unique and attracting fixed point if $\tilde{f}$ is sufficiently high. The recurrence relation for the fidelity relative to $\ket{B_{00}}$ obtained in \cite{Duer} is given by the formula
\begin{align}
F' = \frac{\tilde{f}^2 (F^2 + (\frac{1-F}{3})^2) + \frac{1-\tilde{f}^2}{8}}{\tilde{f}^2(F^2 + \frac{2F(1-F)}{3} + 5 (\frac{1-F}{3})^2) + \frac{1-\tilde{f}^2}{2}}. \label{eq:ibm:recgeneral}
\end{align}
Hence one obtains as in \cite{Duer} the respective fixed points of (\ref{eq:ibm:recgeneral}) to be
\begin{align*}
F_{\mathrm{min,max}} = \frac{3 \pm \sqrt{10 - 9/\tilde{f}^2}}{4}.
\end{align*}
For $F \in (F_{\mathrm{min}}, F_{\mathrm{max}})$ we have that $F' > F$ which shows that $F_{\mathrm{max}}$ is an attracting fixed point. By replacing $F'$ in (\ref{eq:ibm:recgeneral}) with $\tilde{b}(F)$ we observe similar to (\ref{eqn.bbpswtaylor}) that the error after $n+1$ successful distillation rounds scales for two qubit correlated noise as $F(N) \in O(N^{\log_2 \tilde{b}'(F_{\mathrm{max}})})$ where $N$ denotes the number of initial states. \newline
Finally we provide a worst case analysis of the BBPSSW protocol. For that purpose assume the following scenario: The noisy apparatus performs with probability $f_{\mathrm{I}}$ the ideal distillation step $\E_{\mathrm{I}}$ and introduces with probability $1-f_{\mathrm{I}}$ an arbitrary noise map $\E_{\perp}$. More precisely, we decompose the distillation step taken by Alice and Bob before the measurement of the target system as the CP map
\begin{align*}
\E(\rho) = f_{\mathrm{I}} \E_{\mathrm{I}}(\rho) + (1-f_{\mathrm{I}}) \E_{\perp} (\rho)
\end{align*}
where $\rho$ is a four qubit density operator. Notice that one can always decompose a noisy map in this form, where both maps are completely positive and trace preserving. We remark, however, that the map $\E_{\mathrm{I}}$ denotes the ideal protocol which includes an abort option, i.e. we only keep the first pair if the results of the measurements on the second pair coincide. The map $\E_{\perp}$ may similarly contain such an abort branch. The noise parameter $f_{\mathrm{I}}$ describes the quality of the overall map \footnote{We remark that a similar analysis can be performed by modelling local operations of Alice and Bob sepearetely in this way.}, i.e. one can think of the process that with probability $f_{\mathrm{I}}$ the desired procedure (including gates and measurements) is performed, while with probability $(1-f_{\mathrm{I}})$ something else happens (described by the map $\E_{\perp}$). 
\newline
We will now consider the worst case for the map $\E_{\perp}$ w.r.t. entanglement distillation. The worst case for the BBPSSW protocol is that the apparatus introduces a state orthogonal to $\ket{B_{00}}$ on the source system and the state $\ket{B_{00}}$ on the target system as this will always contribute to the overall success probability of a distillation step of the BBPSSW protocol but lead at the same time to a lower fidelity relative to $\ket{B_{00}}$ after the measurement of the target system compared to the ideal distillation step. One example for such a map is given by $\E_{\perp} (\rho) = \dm{B_{01}} \otimes \dm{B_{00}}$. Any other map will lead to a larger fidelity after the distillation step followed by depolarization to Werner form. We thus have 
\begin{align}
F ' \geq  \frac{f_{\mathrm{I}}(F^2 + (\frac{1-F}{3})^2)}{f_{\mathrm{I}}(F^2 + \frac{2F(1-F)}{3} + 5 (\frac{1-F}{3})^2) + 1 - f_{\mathrm{I}}} \label{eq:ibm:gen:rec}
\end{align}
for the fidelity relative to $\ket{B_{00}}$. This formula can be understood as follows: The ideal protocol is applied with probability $f_{\mathrm{I}}$, and succeeds with probability $f_{\rm suc}$, thereby producing a fidelity $\tilde F$. The map $\E_{\perp}$ is applied with probability $(1-f_{\mathrm{I}})$, does never abort and does not contribute to the final fidelity (which is clearly the worst case). We thus have $F' \geq f_{\mathrm{I}} f_{\rm suc} \tilde F/[f_{\mathrm{I}} f_{\rm suc} + (1-f_{\mathrm{I}})]$

We now analyze the worst case scenario, i.e. assuming equality in (\ref{eq:ibm:gen:rec}). Since we know that at each step the actual noise map produces an output density operator with a larger fidelity than the worst-case map, we can conclude that the resulting fidelity of any noise map will be larger than the fixed point which is achieved by the worst-case map. We remark, however, that this does not constitute a full confidentiality proof for arbitrary noise maps, as it is not evident from this analysis that for any fixed noise map a unique fixed point is reached. 
Assuming equality in (\ref{eq:ibm:gen:rec}), one can compute that the fixed points of the noisy BBPSSW protocol are in this case given by the solutions of
\begin{align}
-f_{\mathrm{I}} + (9-2f_{\mathrm{I}})F_{\infty} - 14f_{\mathrm{I}} F^2_{\infty} + 8f_{\mathrm{I}} F^3_{\infty} = 0 \label{eq:ibm:gen:fixeq}
\end{align}
which only depend on the noise parameter $f_{\mathrm{I}}$. We define $g_{\mathrm{fix}}(x,f_{\mathrm{I}}) = -f_{\mathrm{I}} + (9-2f_{\mathrm{I}})x - 14f_{\mathrm{I}} x^2 + 8f_{\mathrm{I}} x^3$ which implies that (\ref{eq:ibm:gen:fixeq}) reads as $g_{\mathrm{fix}}(F_{\infty},f_{\mathrm{I}}) = 0$. The question how many solutions of (\ref{eq:ibm:gen:fixeq}) are real we easily answer by the discriminant of $g_{\mathrm{fix}}$. We obtain for the discriminant of $g_{\mathrm{fix}}$
\begin{align}
\Delta(f_{\mathrm{I}}) = -36 (648 f_{\mathrm{I}} - 873 f_{\mathrm{I}}^2 - 212 f_{\mathrm{I}}^3 + 436 f_{\mathrm{I}}^4). \label{eq:ibm:gen:disc}
\end{align}
Hence if $\Delta(f_{\mathrm{I}}) > 0$ then all three solutions of (\ref{eq:ibm:gen:fixeq}) are real. We numerically estimate that $\Delta(f_{\mathrm{I}_{\mathrm{crit}}}) = 0$ for $f_{{\mathrm{I}}_{\mathrm{crit}}} \approx 0.9641$, hence for $f_{\mathrm{I}} > f_{{\mathrm{I}}_{\mathrm{crit}}}$ there exist three real solutions of (\ref{eq:ibm:gen:fixeq}) because $\Delta(f_{\mathrm{I}}) > 0$ for $f_{\mathrm{I}} > f_{{\mathrm{I}}_{\mathrm{crit}}}$, see Fig. \ref{fig:ibm:gen:disc}.
\begin{figure}[htb]
\centering
\hspace{0pt plus 1filll}\null

$f_{\mathrm{I}}$

\medskip
\hspace*{1.5em}\raisebox{\dimexpr-.5\height-1em}
  {\includegraphics[scale=0.92]{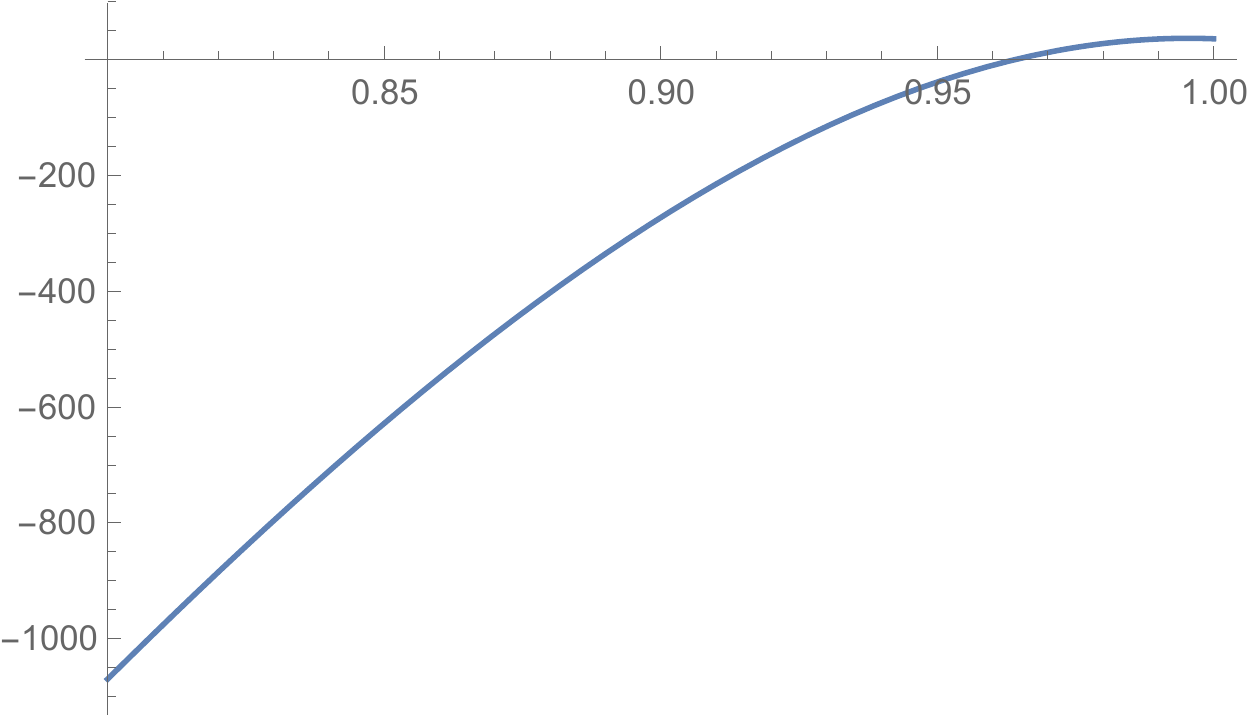}}%

\leavevmode\smash{\makebox[0pt]{\hspace{-40em}
  \rotatebox[origin=l]{90}{\hspace{8em}
    $\Delta(f_{\mathrm{I}})$}%
}}

\caption[h!]{\label{fig:ibm:gen:disc} The figure illustrates the discriminant $\Delta(f_{\mathrm{I}})$ of (\ref{eq:ibm:gen:fixeq}). For $f_{\mathrm{I}} > 0.9641$ we have $\Delta(f_{\mathrm{I}}) > 0$.}
\end{figure}
Thus, for $f_{\mathrm{I}} > f_{{\mathrm{I}}_{\mathrm{crit}}}$, we compute the fixed points of the noisy BBPSSW protocol via solving (\ref{eq:ibm:gen:fixeq}). Fig. \ref{fig:ibm:gen:fixpoint} shows the function $g_{\mathrm{fix}}$ for different values of $f_{\mathrm{I}}$.
\begin{figure}[htb]
\centering
\hspace{0pt plus 1filll}\null

$F$

\medskip
\hspace*{1.5em}\raisebox{\dimexpr-.5\height-1em}
  {\includegraphics[scale=0.92]{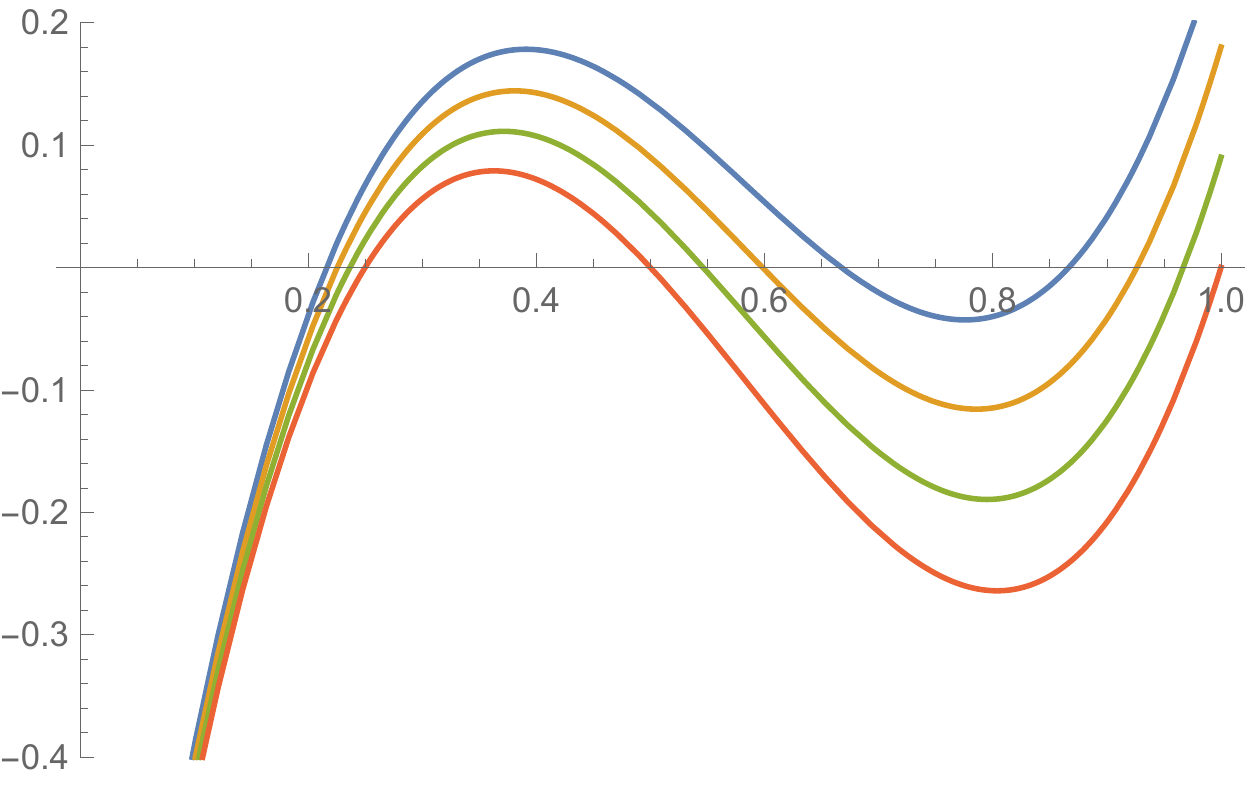}}%

\leavevmode\smash{\makebox[0pt]{\hspace{-40em}
  \rotatebox[origin=l]{90}{\hspace{8em}
    $g_{\mathrm{fix}}(F,f_{\mathrm{I}})$}%
}}

\caption[h!]{\label{fig:ibm:gen:fixpoint} The figure illustrates the function $g_{\mathrm{fix}}$ for $f_{\mathrm{I}}=0.97$ (blue), $f_{\mathrm{I}}=0.98$ (yellow), $f_{\mathrm{I}}=0.99$ (green) and $f_{\mathrm{I}}=1$ (red). The fixed points of the BBPSSW protocol correspond to the zero's of $g_{\mathrm{fix}}(F,f_{\mathrm{I}})$.}
\end{figure}
From Fig. \ref{fig:ibm:gen:disc} and \ref{fig:ibm:gen:fixpoint} we infer that we have three possible fixed points for $f_{\mathrm{I}} > f_{{\mathrm{I}}_{\mathrm{crit}}}$. Hence we need to show that the fixed point with the highest fidelity relative to $\ket{B_{00}}$ obtained via (\ref{eq:ibm:gen:fixeq}) is an attracting fixed point. We solve this issue by showing that $F' > F$ for $F \in (F_\text{min}, F_{\max})$ (where $F_\text{min}$ denotes the second, and $F_{\max}$ the third fixed point in Fig. \ref{fig:ibm:gen:fixpoint}). From Fig. \ref{fig:ibm:gen:attr} we find that $F' - F > 0$ for $f_{\mathrm{I}} > f_{{\mathrm{I}}_{\mathrm{crit}}}$, hence $F' > F$ which shows that $F_{\max}$ is an attracting fixed point whenever starting with initial fidelity $F > F_\text{min}$.
\begin{figure}[htb]
\centering
\hspace{0pt plus 1filll}\null

$F$

\medskip
\hspace*{1.5em}\raisebox{\dimexpr-.5\height-1em}
  {\includegraphics[scale=0.92]{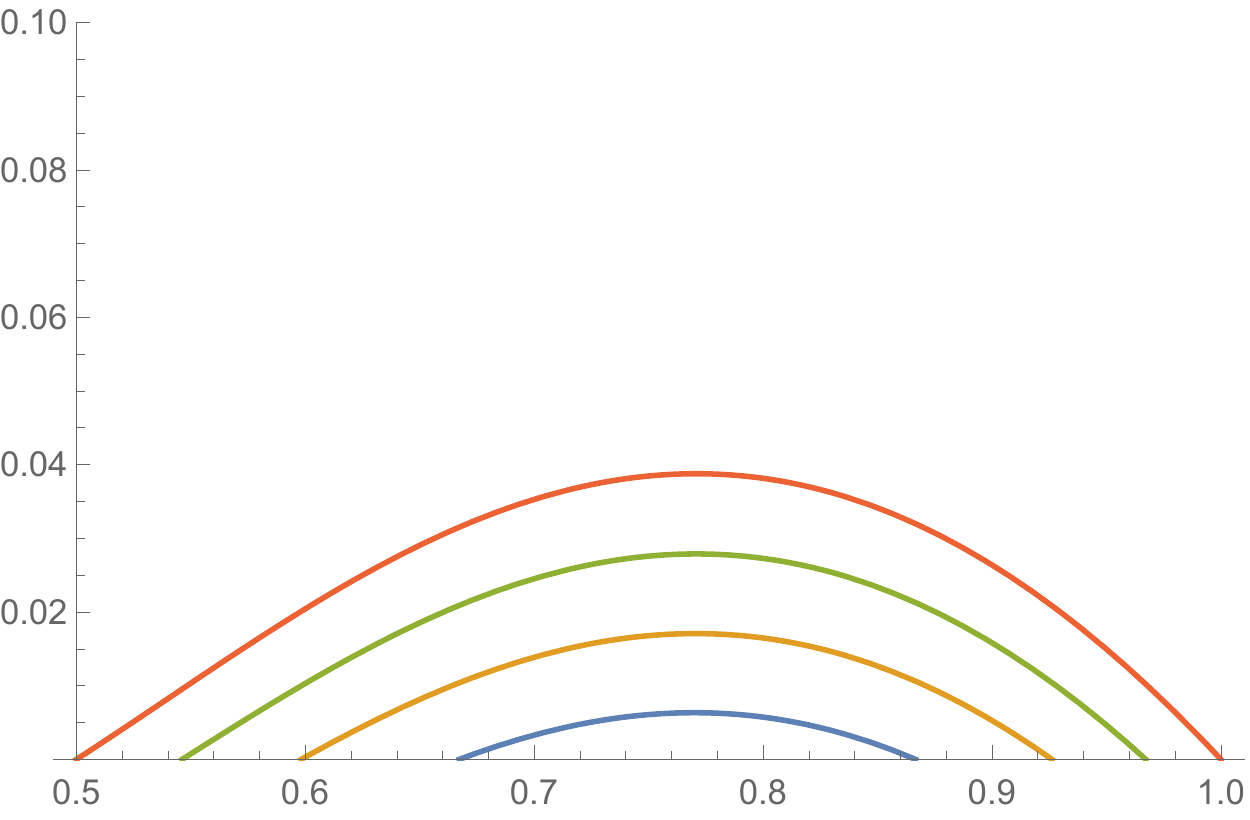}}%

\leavevmode\smash{\makebox[0pt]{\hspace{-40em}
  \rotatebox[origin=l]{90}{\hspace{8em}
    $F'-F$}%
}}

\caption[h!]{\label{fig:ibm:gen:attr} The figure illustrates $F' - F$ for $f_{\mathrm{I}}=0.97$ (blue), $f_{\mathrm{I}}=0.98$ (yellow), $f_{\mathrm{I}}=0.99$ (green) and $f_{\mathrm{I}}=1$ (red).}
\end{figure}

Furthermore, by assuming equality in (\ref{eq:ibm:gen:rec}) and replacing $F'$ with  $b_{\perp}(F)$, we observe similar to (\ref{eqn.bbpswtaylor}) that the error after $n+1$ successful distillation rounds scales in this worst case analysis as $F(N) \in O(N^{\log_2 b'_{\perp}(F_{\mathrm{max}})})$ where $N$ denotes the number of initial states.

\section{Confidentiality of entanglement distillation protocols}\label{sec.supp.noleak}

In this section we provide the proofs of Lemma \ref{thm.globalclosenessfactor} and Lemma \ref{lem:forpost} of the main text, crucial for the de-Finetti-based and post-selection-based reduction techniques. Both proofs require only one specific property of the real protocol $\E^\alpha$: after passing the parameter estimation phase the entanglement distillation protocol always converges to \emph{one} fixed point, i.e. the fixed point is \emph{unique}, an \emph{attractor} for all the states which pass the parameter estimation and depends on the noise parameters \emph{only}, as this implies that the distance with respect to the $1-$norm within the $ok-$branch of the protocol is bounded and converges towards zero.

\subsection{Proof of Lemma \ref{thm.globalclosenessfactor}}\label{app:sec:defin}

We first state the following lemma which establishes a connection between measurements on one subsystem of a bipartite state and tensor product states.

\begin{lemma}\label{lem.steering}[Steering of local states]
Let $\rho_{AB}$ be a bipartite (in general, mixed) state and let $\rho_A = \ptr{B}{\rho_{AB}}$ and $\rho_B = \ptr{A}{\rho_{AB}}$. Furthermore let $\rho^{\phi}_{B}$ be defined as
\begin{align*}
\rho^{\phi}_{B} = \frac{\ptr{A}{(\dm{\phi} \otimes I) \rho_{AB}}}{p_{A}(\phi)}
\end{align*}
where $\ket{\phi} \in \H_A$ and $p_{A}(\phi) = \tr{\left(\dm{\phi} \rho_{A} \right)}$. If $\|\rho^{\phi}_{B} - \rho_B \|_1 \leq \epsilon$ for all $\ket{\phi} \in \H_A$, then
\begin{align}\label{inequ.steering}
\|\rho_{AB} - \rho_A \otimes \rho_B \|_1 \leq 2 C \epsilon
\end{align}
where $C$ only depends on the dimensions of $A$ and $B$. In particular, if we fix the number of qubits of $A$ and $B$ to $2$ respectively, then we have $C = 4^8$.
\end{lemma}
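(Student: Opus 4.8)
The plan is to bound $\|\rho_{AB}-\rho_A\otimes\rho_B\|_1$ directly, by reconstructing the Hermitian, traceless operator $\Delta_{AB}:=\rho_{AB}-\rho_A\otimes\rho_B$ out of the data supplied by the steering hypothesis. The first step is to put that hypothesis into an unnormalised form: since $\ptr{A}{(\dm{\phi}\otimes I_B)\,\rho_{AB}}=p_A(\phi)\,\rho^{\phi}_B$ and $\ptr{A}{(\dm{\phi}\otimes I_B)(\rho_A\otimes\rho_B)}=p_A(\phi)\,\rho_B$, every unit vector $\ket{\phi}\in\H_A$ satisfies
\begin{align*}
\| \ptr{A}{(\dm{\phi}\otimes I_B)\,\Delta_{AB}} \|_1 = p_A(\phi)\,\| \rho^{\phi}_B - \rho_B \|_1 \le p_A(\phi)\,\epsilon \le \epsilon .
\end{align*}
Thus every ``steered block'' of $\Delta_{AB}$ produced by a rank-one projective measurement on $A$ has trace norm at most $\epsilon$, and crucially this holds for \emph{all} rank-one projectors on $A$, not just those of a single fixed basis.

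The second step is to recover $\Delta_{AB}$ from such blocks. Fix an orthonormal basis $\{\ket{k}\}_{k=1}^{d_A}$ of $\H_A$ and write $\Delta_{AB}=\sum_{k,l}\ketbra{k}{l}{A}\otimes\Delta^{(kl)}_B$, where $\Delta^{(kl)}_B:=\ptr{A}{(\ketbra{l}{k}{A}\otimes I_B)\,\Delta_{AB}}$. For $k=l$ the block $\Delta^{(kk)}_B$ is itself a steered block, so $\|\Delta^{(kk)}_B\|_1\le\epsilon$. For $k\neq l$, the polarization identity expresses $\ketbra{l}{k}{A}$ as a linear combination, with coefficients in $\{\pm\tfrac12,\pm\tfrac i2\}$, of the four rank-one projectors onto the normalised vectors $(\ket{k}\pm\ket{l})/\sqrt{2}$ and $(\ket{k}\pm i\ket{l})/\sqrt{2}$; applying the linear map $X_A\mapsto\ptr{A}{(X_A\otimes I_B)\,\Delta_{AB}}$ and using the bound just derived term by term gives $\|\Delta^{(kl)}_B\|_1\le 2\epsilon$.

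The last step is to assemble the pieces: by the triangle inequality, $\|\ketbra{k}{l}{A}\|_1=1$, and multiplicativity of the trace norm under tensor products,
\begin{align*}
\|\rho_{AB}-\rho_A\otimes\rho_B\|_1 = \|\Delta_{AB}\|_1 \le \sum_{k,l=1}^{d_A}\|\ketbra{k}{l}{A}\|_1\,\|\Delta^{(kl)}_B\|_1 \le d_A\,\epsilon + (d_A^2-d_A)\cdot 2\epsilon < 2\,d_A^2\,\epsilon ,
\end{align*}
which is the asserted inequality $\|\rho_{AB}-\rho_A\otimes\rho_B\|_1\le 2C\epsilon$ with $C$ depending only on the dimensions (one may in fact take $C=d_A^2$; the value $C=4^8$ quoted for two-qubit $A$ and $B$ is a deliberately generous bound, more than enough for Lemma \ref{thm.globalclosenessfactor}). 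I expect no conceptual difficulty here; the only things to be careful about are bookkeeping ones: tracking the normalisation $p_A(\phi)$, since the hypothesis is about the \emph{normalised} conditional states $\rho^{\phi}_B$ whereas the reconstruction produces unnormalised sub-operators, and — if one wants the explicit constant — controlling the $\ell_1$-weight of the coefficients that appear when the operator-basis elements on $A$ are rewritten through rank-one projectors. The substance of the lemma is the steering/reconstruction argument above.
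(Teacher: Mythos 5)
Your proof is correct, and it takes a genuinely different route from the paper's. The paper expands $\rho_{AB}$ and $\rho_A\otimes\rho_B$ in the Pauli operator basis, reduces the trace-norm distance to the $\ell_1$-distance of the coefficient vectors, and then reconstructs those coefficients by quantum state tomography: it evaluates the hypothesis on an informationally complete set of \emph{product} measurements $\ket{\phi_k}_A\otimes\ket{\xi_k}_B$, inverts the resulting tomography matrix $T$, and obtains $C=\|T^{-1}\|\,4^{n+m}2^{n+m}$, which for two qubits on each side (with $\|T^{-1}\|=16$) gives the quoted $C=4^8$. You instead work directly with $\Delta_{AB}=\rho_{AB}-\rho_A\otimes\rho_B$, note that every unnormalised steered block $\ptr{A}{(\dm{\phi}\otimes I)\Delta_{AB}}$ has trace norm at most $p_A(\phi)\,\epsilon\le\epsilon$ (which also quietly disposes of the $p_A(\phi)=0$ case), and recover the off-diagonal blocks $\Delta^{(kl)}_B$ via the polarization identity over four rank-one projectors, so that only measurements on $A$ ever enter and no tomographic inversion is needed. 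This is more elementary, the constant depends only on $d_A$ rather than on both dimensions, and it is substantially sharper ($C=d_A^2=16$ for a two-qubit $A$, versus $4^8$), which of course still implies the stated bound. What the paper's heavier route buys is the explicit finite tomographic set \eqref{eq:lem:prob:tom:1}--\eqref{eq:lem:prob:tom:4} whose every element has overlap at least $1/2$ with $\ket{0}$ on each qubit; this is exactly what the refinement Lemma \ref{lem:steering:prob} exploits (after a local unitary, all measurements in that set succeed with probability at least $1/16$). Your set of test states — basis vectors $\ket{k}$ and the superpositions $(\ket{k}\pm\ket{l})/\sqrt{2}$, $(\ket{k}\pm i\ket{l})/\sqrt{2}$ — does not have that uniform-overlap property (a basis state orthogonal to the dominant eigenvector of $\rho_A$ may occur with arbitrarily small probability), so if one wanted to substitute your argument into the paper, the probability-threshold step of Lemma \ref{lem:steering:prob} would need a separate adaptation; as a proof of the present statement, however, your argument is complete.
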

\begin{proof}
In the following we denote the four Pauli operators by
\begin{align*}
\sigma_{0} = id, \quad \sigma_{1} = \px, \quad \sigma_{2} = \pz, \quad \sigma_{3} = \py.
\end{align*}
First we decompose $\rho_{AB}$ in the Pauli basis, i.e. we have
\begin{align}\label{equ.lem.meas.rhoab}
\rho_{AB} = \frac{1}{2^{n+m}} \sum\limits_{\mathbf{i},\mathbf{j}} \alpha_{\mathbf{ij}} \sigma_{\mathbf{i}} \otimes \sigma_{\mathbf{j}}
\end{align}
where $n$ and $m$ denote the number of qubits of $A$ and $B$ respectively and we use the notations $\mathbf{i} = (i_1,..,i_{n})$ and $\mathbf{j}=(j_1,..,j_{m})$ where each $i_k$ and $j_k$ are in $\lbrace 0,..,3 \rbrace$ as well as $\sigma_\mathbf{i} = \bigotimes^{n}_{k = 1} \sigma_{i_k}$ and $\sigma_\mathbf{j} = \bigotimes^{m}_{k = 1} \sigma_{j_k}$. Recall that $\tr{(\sigma_0)} = 2$ and $\tr{(\sigma_1)} = \tr{(\sigma_2)} = \tr{(\sigma_3)} = 0$. From this one easily computes $\rho_A$ and $\rho_B$ by
\begin{align}
\rho_A &= \ptr{B}{\rho_{AB}} =  \frac{1}{2^{n+m}} \sum\limits_{\mathbf{i,j}} \alpha_{\mathbf{ij}} \sigma_{\mathbf{i}} \tr{(\sigma_{\mathbf{j}})} =  \frac{1}{2^{n}} \sum\limits_{\mathbf{i}} \alpha_{\mathbf{i0}} \sigma_{\mathbf{i}}, \label{equ.lem.meas.partials.1} \\
\rho_B &= \ptr{A}{\rho_{AB}} = \frac{1}{2^{n+m}} \sum\limits_{\mathbf{i,j}} \alpha_{\mathbf{ij}} \tr{(\sigma_{\mathbf{i}})} \sigma_{\mathbf{j}} = \frac{1}{2^{m}} \sum\limits_{\mathbf{j}} \alpha_{\mathbf{0j}} \sigma_{\mathbf{j}}. \label{equ.lem.meas.partials.2}
\end{align}
Using (\ref{equ.lem.meas.rhoab}), (\ref{equ.lem.meas.partials.1}) and (\ref{equ.lem.meas.partials.2}) we obtain for (\ref{inequ.steering})
\begin{align}\label{inequ.lem.meas.1}
\|\rho_{AB} - \rho_A \otimes \rho_B \|_1 & \leq \frac{1}{2^{n+m}} \sum\limits_{\mathbf{i,j}} \|(\alpha_{\mathbf{ij}} - \alpha_{\mathbf{i0}} \alpha_{\mathbf{0j}}) \sigma_{\mathbf{i}} \otimes \sigma_{\mathbf{j}} \|_1 = \frac{1}{2^{n+m}} \sum\limits_{\mathbf{i,j}} |\alpha_{\mathbf{ij}} - \alpha_{\mathbf{i0}} \alpha_{\mathbf{0j}}| \cdot \|\sigma_{\mathbf{i}} \otimes \sigma_{\mathbf{j}}\|_1 \notag \\
& = \frac{2^{n+m}}{2^{n+m}} \sum\limits_{\mathbf{i,j}} |\alpha_{\mathbf{ij}} - \alpha_{\mathbf{i0}} \alpha_{\mathbf{0j}}| = \sum\limits_{\mathbf{i,j}} |\alpha_{\mathbf{ij}} - \alpha_{\mathbf{i0}} \alpha_{\mathbf{0j}}| = \|\mathbf{a} - \mathbf{a'} \|_{1;\C^{4^{n+m}}}
\end{align}
where $\mathbf{a} = (\alpha_{00},..,\alpha_{3^n 3^m})$, $\mathbf{a'} = (\alpha_{00} \alpha_{00},..,\alpha_{3^n 0} \alpha_{0 3^m})$ and $\| \cdot \|_{1;\C^{4^{n+m}}}$ denotes the $1-$norm of vectors in $\C^{4^{n+m}}$. Hence in order to prove (\ref{inequ.steering}) it is sufficient to prove $\|\mathbf{a} - \mathbf{a'} \|_{1;\C^{4^{n+m}}} \leq 2C \epsilon$. By assumption we have for $\rho^{\phi}_{B}$ where $\ket{\phi} \in \H_A$ and $p_{A}(\phi) = \tr{\left((\dm{\phi} \otimes I) \rho_{AB}\right)}$ that $\|\rho^{\phi}_{B} - \rho_B \|_1 \leq \epsilon$ for all $\ket{\phi} \in \H_A$. Moreover, according to Theorem 9.1 in \cite{Nielsen} we have for all $\ket{\xi} \in \H_B$
\begin{align}\label{inequ.lem.meas.nc}
\frac{1}{2} |p_B(\xi | \phi) - q_B(\xi)| = \frac{1}{2}\left|\tr{\left(\dm{\xi} \rho^{\phi}_{B} \right)} - \tr{\left(\dm{\xi} \rho_{B} \right)} \right| \leq \max_{E_m} \frac{1}{2} \sum\limits_{m} \left|\tr{\left(E_m \rho^{\phi}_{B} \right)} - \tr{\left(E_m \rho_{B} \right)} \right| = \|\rho^{\phi}_{B} - \rho_B \|_1 \leq \epsilon
\end{align}
where $p_B(\xi | \phi)$ denotes the conditional probability of obtaining the outcome $\phi$ on system $A$ and the outcome $\xi$ on system $B$ and $\{E_m \}$ denotes a POVM on the subsystem of $B$. Suppose we perform a projective measurement on the systems of $A$ and $B$ denoted by $\{\ket{\psi_k}_{AB} \} = \{\ket{\phi_k}_{A} \otimes \ket{\xi_k}_{B} \}$ where $k \in \lbrace 1,.., 4^{n+m}\rbrace$ on $\rho_{AB}$ and $\rho_A \otimes \rho_B$. This yields for the respective probabilities $p_{AB}(\psi_k)$ and $q_{AB}(\psi_k)$ of observing outcome $k$ for $\rho_{AB}$ and $\rho_A \otimes \rho_B$
\begin{align*}
p_{AB}(\psi_k) &= \tr{(\dm{\psi_k} \rho_{AB})} = \tr{\left(\dm{\phi_k}_A \otimes \dm{\xi_k}_B \rho_{AB} \right)} = \tr{\left(\dm{\xi_k}_B \ptr{A}{(\dm{\phi_k}_A \otimes I) \rho_{AB}}\right)} \\
& = \tr{\left(\dm{\xi_k}_B p_A(\phi_k) \rho^{\phi_k}_{B} \right)} = p_A(\phi_k) \tr{\left(\dm{\xi_k}_B \rho^{\phi_k}_{B} \right)} = p_A(\phi_k) p_B(\xi_k | \phi_k),\\
q_{AB}(\psi_k) &= \tr{(\dm{\psi_k} \rho_{A} \otimes \rho_{B})} =  \tr{(\dm{\phi_k} \rho_{A})} \tr{(\dm{\xi_k} \rho_{B})} = q_A(\phi_k) q_B(\xi_k)
\end{align*}
where $p_B(\xi_k | \phi_k)$ denotes the conditional probability of obtaining outcome $\phi_k$ on system $A$ first and obtaining outcome $\xi_k$ on system $B$. We observe $p_A(\phi_k) = q_A(\phi_k)$. Thus we obtain
\begin{align*}
|p_{AB}(\psi_k) - q_{AB}(\psi_k)| = p_A(\phi_k) |p_B(\xi_k | \phi_k) - q_B(\xi_k)| \leq 2\epsilon p_A(\phi_k)
\end{align*}
using (\ref{inequ.lem.meas.nc}). In order to compute a bound for (\ref{inequ.lem.meas.1}) we use quantum state tomography, see e.g. \cite{tomo}. For that purpose we perform an informationally complete POVM induced by different separable bases on $\H_A \otimes \H_B$. More precisely, we choose that many POVMs such that we have in total $4^{n+m}$ different outcomes. We observe for $\ket{\psi_k}_{AB} = \ket{\phi_k}_{A} \otimes \ket{\xi_k}_{B}$ that
\begin{align}\label{equ.lem.meas.pandqtomo}
p_{AB}(\psi_k) = \frac{1}{2^{n+m}} \sum\limits_{\mathbf{i,j}} \bra{\phi_k} \sigma_\mathbf{i} \ket{\phi_k} \bra{\xi_k} \sigma_\mathbf{j} \ket{\xi_k} \alpha_{\mathbf{ij}} \quad \text{and} \quad q_{AB}(\psi_k) = \frac{1}{2^{n+m}} \sum\limits_{\mathbf{i,j}} \bra{\phi_k} \sigma_\mathbf{i} \ket{\phi_k} \bra{\xi_k} \sigma_\mathbf{j} \ket{\xi_k} \alpha_{\mathbf{i0}} \alpha_{\mathbf{0j}}.
\end{align}
Enumerating (\ref{equ.lem.meas.pandqtomo}) for $1 \leq k \leq 4^{n+m}$ yields $4^{n+m}$ equations for $\mathbf{a}$, i.e.
\begin{align}\label{inequ.lem.meas.20}
p_{AB}(\psi_1) &= \frac{1}{2^{n+m}} \sum\limits_{\mathbf{i,j}} \bra{\phi_1} \sigma_\mathbf{i} \ket{\phi_1} \bra{\xi_1} \sigma_\mathbf{j} \ket{\xi_1}  \alpha_{\mathbf{ij}},\\
 &... \notag \\
p_{AB}(\psi_{4^{n+m}}) &= \frac{1}{2^{n+m}} \sum\limits_{\mathbf{i,j}}  \bra{\phi_{4^{n+m}}} \sigma_\mathbf{i} \ket{\phi_{4^{n+m}}} \bra{\xi_{4^{n+m}}} \sigma_\mathbf{j} \ket{\xi_{4^{n+m}}} \alpha_{\mathbf{ij}} \label{inequ.lem.meas.21}
\end{align}
as well as $4^{n+m}$ equations for $\mathbf{a'}$
\begin{align}\label{inequ.lem.meas.30}
q_{AB}(\psi_1) &= \frac{1}{2^{n+m}} \sum\limits_{\mathbf{i,j}}  \bra{\phi_1} \sigma_\mathbf{i} \ket{\phi_1} \bra{\xi_1} \sigma_\mathbf{j} \ket{\xi_1}\alpha_{\mathbf{i0}} \alpha_{\mathbf{0j}}, \\
& ... \notag \\
 q_{AB}(\psi_{4^{n+m}}) &= \frac{1}{2^{n+m}} \sum\limits_{\mathbf{i,j}}  \bra{\phi_{4^{n+m}}} \sigma_\mathbf{i} \ket{\phi_{4^{n+m}}} \bra{\xi_{4^{n+m}}} \sigma_\mathbf{j} \ket{\xi_{4^{n+m}}} \alpha_{\mathbf{i0}} \alpha_{\mathbf{0j}} \label{inequ.lem.meas.31}.
\end{align}
We can rewrite the systems of equations (\ref{inequ.lem.meas.20})-(\ref{inequ.lem.meas.21}) and (\ref{inequ.lem.meas.30})-(\ref{inequ.lem.meas.31}) using
\begin{align*}
T = \begin{pmatrix}
\bra{\phi_1} \sigma_0 \ket{\phi_1} \bra{\xi_1} \sigma_0 \ket{\xi_1} &  ... & \bra{\phi_1} \sigma_{3^n} \ket{\phi_1} \bra{\xi_1} \sigma_{3^m} \ket{\xi_1} \\
... & ... & ... \\
\bra{\phi_{4^{n+m}}} \sigma_0 \ket{\phi_{4^{n+m}}} \bra{\xi_{4^{n+m}}} \sigma_0 \ket{\xi_{4^{n+m}}} & ... & \bra{\phi_{4^{n+m}}} \sigma_{3^n} \ket{\phi_{4^{n+m}}} \bra{\xi_{4^{n+m}}} \sigma_{3^m} \ket{\xi_{4^{n+m}}}
\end{pmatrix}
\end{align*}
and $\mathbf{p} = (p_{AB}(\psi_1),..,p_{AB}(\psi_{4^{n+m}}))$ and $\mathbf{q} = (q_{AB}(\psi_1),..,q_{AB}(\psi_{4^{n+m}}))$ as
\begin{align*}
\mathbf{p} = \frac{1}{2^{n+m}} T \mathbf{a} \quad \text{and} \quad \mathbf{q} = \frac{1}{2^{n+m}}  T \mathbf{a'}
\end{align*}
respectively. Hence $2^{n+m}(\mathbf{p} - \mathbf{q}) = T\left(\mathbf{a} - \mathbf{a'} \right)$. Moreover we observe that $T$ is invertible if the POVM is informationally complete, see \cite{tomo} for details. Thus, inverting $T$ and taking norms on both sides yields
\begin{align*}
\| \mathbf{a} - \mathbf{a'} \|_{1;\C^{4^{n+m}}} & \leq 2^{n+m} \|T^{-1}\| \| \mathbf{p} - \mathbf{q}\|_{1;\C^{4^{n+m}}} = 2^{n+m} \|T^{-1}\| \sum_{k} |p_{AB}(\psi_k) - q_{AB}(\psi_k)| \\
& \leq 2^{n+m} \|T^{-1}\| \sum_{k} 2 \epsilon p_A(\phi_k) \leq 2 \|T^{-1}\| 4^{n+m} 2^{n+m} \epsilon
\end{align*}
which completes the proof for the general case with $C = \|T^{-1}\| 4^{n+m} 2^{n+m}$. \newline
Before we complete the Lemma we need to determine $C$ for the case of $n=m=2$. 
We choose $\ket{\phi_{4^3(j_1-1) + 4^2(j_2-1) + 4(j_3-1) + j_4}} = \ket{\phi'_{j_1}} \otimes \ket{\phi'_{j_2}} \otimes \ket{\phi'_{j_4}} \otimes \ket{\phi'_{j_4}}$ where $j_1,j_2,j_3,j_4 \in \lbrace 1,2,3,4 \rbrace$ and
\begin{align}
\ket{\phi'_{1}} &= (\ket{0} + \ket{1})/\sqrt{2} , \label{eq:tomoset:1} \\
\ket{\phi'_{2}} &= (\ket{0} + i\ket{1})/\sqrt{2}, \label{eq:tomoset:2} \\
\ket{\phi'_{3}} &= \ket{0}, \label{eq:tomoset:3} \\
\ket{\phi'_{4}} &= (\ket{0} - \ket{1})/\sqrt{2}, \label{eq:tomoset:4} 
\end{align}
which is informationally complete and thus a valid choice. This choice of $\ket{\phi'_{l}}$ corresponds to a Pauli tomography on a single qubit. We observe that the matrix $T$ is invertible and compute $\|T^{-1}\| = 16$. Thus $C = 4^8$ which completes the proof.
\end{proof}

Roughly speaking Lemma \ref{lem.steering} states that if all post-selected reduced states of a bipartite state, where each partition consists of two qubits, are $\eta-$close then the overall state is $2 \cdot 4^8 \eta$ close to a product state. \newline
We gave the lemma in a more general form as it may have utility beyond the scope of this paper. However for our purposes we need a stronger, but more specific result. In the following lemma we will show that we can achieve the same result even if the measurements must succeed above a threshold, which is important in the application of the lemma.

\begin{lemma}\label{lem:steering:prob}
In the situation of Lemma \ref{lem.steering} for $n=m=2$ it suffice to consider measurements on the subsystem $A$ which have a probability greater than or equal to $1/16$. \newline
More precisely, for every state $\rho_{AB}$ there exists a unitary $U$ acting on system $A$ and a state $\rho'_{AB} = (U \otimes I_{B}) \rho_{AB} (U \otimes I_{B})^\dagger$, such that if the state $\rho'_{AB}$ 
meets the conditions of Lemma \ref{lem.steering}, i.e. subsystem $B$ is $\epsilon-$non-steerable via measurements on subsystem $A$ for all measurements with probability greater than or equal to $1/16$, then
\begin{align}
\| \rho_{AB} - \rho_{A} \otimes \rho_{B} \|_1 \leq 2C \epsilon.
\end{align}
\end{lemma}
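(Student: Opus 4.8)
The plan is to reduce Lemma~\ref{lem:steering:prob} to Lemma~\ref{lem.steering} by pre-rotating system $A$ so that the specific informationally complete tomographic set used in the proof of Lemma~\ref{lem.steering} involves only $A$-measurement outcomes whose probability is at least $1/16$. Write $\rho_A = \ptr{B}{\rho_{AB}} = \sum_{i=0}^3 p_i \dm{e_i}$ with $p_0 \geq p_1 \geq p_2 \geq p_3 \geq 0$; since $\rho_A$ is a normalized state on $\C^4$ we have $p_0 \geq 1/4$. First I would exhibit a single-qubit state $\ket{\mu}$ -- the one with Bloch vector $(0,1/\sqrt2,1/\sqrt2)$ works -- with $|\langle\phi'_j|\mu\rangle|^2 \geq 1/2$ for each of the four tomography states $\ket{\phi'_1},\dots,\ket{\phi'_4}$ of (\ref{eq:tomoset:1})--(\ref{eq:tomoset:4}); this is an immediate Bloch-sphere computation (the orthogonal pair $\ket{\phi'_1},\ket{\phi'_4}$ forces equality $1/2$, while the remaining overlaps equal $(1+1/\sqrt2)/2 > 1/2$). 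Setting $\ket{g} = \ket{\mu}\otimes\ket{\mu} \in \H_A$ then gives $|\langle\phi'_{j_1}\phi'_{j_2}|g\rangle|^2 \geq 1/4$ for all $j_1,j_2 \in \{1,2,3,4\}$.

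Next I would pick any unitary $U$ on $\H_A$ with $U\ket{e_0} = \ket{g}$ and put $\rho'_{AB} = (U\otimes I_B)\rho_{AB}(U\otimes I_B)^\dagger$, so that $\rho'_A = U\rho_A U^\dagger$ and $\rho'_B = \rho_B$. For every $A$-outcome $\ket{\phi_k}_A = \ket{\phi'_{j_1}}\otimes\ket{\phi'_{j_2}}$ in the tomographic set one then has
\begin{align*}
\bra{\phi_k}\rho'_A\ket{\phi_k} = \sum_{i=0}^3 p_i\,|\langle\phi_k|Ue_i\rangle|^2 \geq p_0\,|\langle\phi_k|g\rangle|^2 \geq \tfrac14\cdot\tfrac14 = \tfrac1{16}.
\end{align*}
Hence the hypothesis of Lemma~\ref{lem:steering:prob} -- $\epsilon$-non-steerability of $B$ under all $A$-measurements of probability $\geq 1/16$ -- applies in particular to every $\ket{\phi_k}_A$, i.e. $\|\rho'^{\phi_k}_B - \rho_B\|_1 \leq \epsilon$ for all $k$. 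This is exactly the input the proof of Lemma~\ref{lem.steering} uses through the bound (\ref{inequ.lem.meas.nc}): running that argument verbatim with $\rho_{AB}$ replaced by $\rho'_{AB}$ (the tomographic set is unchanged, hence still informationally complete with the same invertible $T$, $\|T^{-1}\|=16$) yields $\|\rho'_{AB} - \rho'_A\otimes\rho_B\|_1 \leq 2C\epsilon$, $C = 4^8$. Unitary invariance of the trace norm then gives $\|\rho_{AB} - \rho_A\otimes\rho_B\|_1 = \|\rho'_{AB} - \rho'_A\otimes\rho_B\|_1 \leq 2C\epsilon$, the claim.

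The hard part is really the geometric observation in the first step: that a single-qubit pure state having overlap $\geq 1/2$ against all four fixed tomography states exists, and that consequently it is enough to steer only the \emph{dominant} eigenvector of $\rho_A$ into such a ``spread'' position, since the remaining eigenvalues only help in the lower bound on $\bra{\phi_k}\rho'_A\ket{\phi_k}$. Once this is seen the rest is bookkeeping: $p_0 \geq 1/4$, $\rho'_B = \rho_B$, and the fact that nothing in the proof of Lemma~\ref{lem.steering} other than (\ref{inequ.lem.meas.nc}) for the chosen $\phi_k$ is disturbed by the pre-rotation. One minor point to keep in mind is that the $B$-side tomography states $\ket{\xi_k}$ need no probability lower bound -- they enter only through the fixed matrix $T$ -- so no rotation on $B$ is required.
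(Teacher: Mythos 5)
Your proposal is correct and takes essentially the same route as the paper's proof: rotate the dominant eigenvector of $\rho_A$ (eigenvalue $\geq 1/4$) so that every outcome of the fixed tomographic set (\ref{eq:lem:prob:tom:1})--(\ref{eq:lem:prob:tom:4}) has probability at least $\tfrac14\cdot\tfrac14=\tfrac1{16}$, rerun the argument of Lemma \ref{lem.steering} on $\rho'_{AB}$ (noting $\rho'_B=\rho_B$), and conclude by unitary invariance of the trace norm. The only cosmetic difference is the target of the rotation: the paper sends the dominant eigenvector to $\ket{00}$, which already has overlap $\geq 1/2$ with each single-qubit tomography state, so your specially constructed $\ket{\mu}\otimes\ket{\mu}$ is a valid but unnecessarily elaborate choice.
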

\begin{proof}
First we construct the state $\rho'_{AB}$ associated with $\rho_{AB}$ and show that it suffice to consider measurements of probability greater than or equal to $1/16$. Recall the situation of Lemma \ref{lem.steering}. Let $\rho_{AB}$ be a bipartite (in general, mixed) state and let $\rho_A = \ptr{B}{\rho_{AB}}$ and $\rho_B = \ptr{A}{\rho_{AB}}$. Furthermore let $\rho^{\phi}_{B}$ be defined as
\begin{align*}
\rho^{\phi}_{B} = \frac{\ptr{A}{(\dm{\phi} \otimes I) \rho_{AB}}}{p_{A}(\phi)}
\end{align*}
where $\ket{\phi} \in \H_A$ and $p_{A}(\phi) = \tr{\left(\dm{\phi} \rho_{A} \right)}$. Then the claim of Lemma \ref{lem.steering} was: If $\|\rho^{\phi}_{B} - \rho_B \|_1 \leq \epsilon$ for all $\ket{\phi} \in \H_A$, then
\begin{align}
\|\rho_{AB} - \rho_A \otimes \rho_B \|_1 \leq 2 C \epsilon
\end{align}
where $C$ only depends on the dimensions of $A$ and $B$. In particular, if we fix the number of qubits of $A$ and $B$ to $2$ respectively, then we have $C = 4^8$. \newline 
Further recall that the set $\ket{\phi_{4^3(j_1-1) + 4^2(j_2-1) + 4(j_3-1) + j_4}} = \ket{\phi'_{j_1}} \otimes \ket{\phi'_{j_2}} \otimes \ket{\phi'_{j_4}} \otimes \ket{\phi'_{j_4}}$ where $j_1,j_2,j_3,j_4 \in \lbrace 1,2,3,4 \rbrace$ of Lemma \ref{lem.steering}, i.e. (\ref{eq:tomoset:1})--(\ref{eq:tomoset:4}), is informationally complete and thus suffice to reconstruct any $4$ qubit quantum state where
\begin{align}
\ket{\phi'_{1}} &= (\ket{0} + \ket{1})/\sqrt{2} , \label{eq:lem:prob:tom:1}\\
\ket{\phi'_{2}} &= (\ket{0} + i\ket{1})/\sqrt{2}, \label{eq:lem:prob:tom:2}\\
\ket{\phi'_{3}} &= \ket{0}, \label{eq:lem:prob:tom:3}\\
\ket{\phi'_{4}} &= (\ket{0} - \ket{1})/\sqrt{2}. \label{eq:lem:prob:tom:4}
\end{align}
In order to prove the claim, we use the following observation: The state $\rho_{A} = \text{tr}_{B}[\rho_{AB}]$ is a two qubit quantum state, so it can be written as 
\begin{align}
\rho_{A} = \sum^3_{j=0} \lambda_j \dm{\Psi_j} \label{eq:lem:prob:1}
\end{align}
where the states $\ket{\Psi_j}$ correspond to the (orthogonal) eigenstates of $\rho_{A}$ for the real non-negative eigenvalues $\lambda_{j}$. Hence there exists at least one $j' \in \lbrace 0,1,2,3 \rbrace$ such that $\lambda_{j'} \geq 1/4 $, which corresponds to the maximum of the eigenvalues $\lambda_j$. Now we choose a local unitary $U$ such that $U \ket{\Psi_{j'}} = \ket{0} \otimes \ket{0}$. Applying this unitary to (\ref{eq:lem:prob:1}) therefore leads to the state
\begin{align}
\rho'_{A} = U \left(\sum^3_{j=0} \lambda_j \dm{\Psi_j} \right) U^\dagger = \lambda_{j'} \dm{00} + \sum^3_{j \neq j'} \lambda_j \dm{\varphi_{j}} \label{eq:lem:prob:2}
\end{align}
where $\ket{\varphi_j} = U \ket{\psi_j}$. We compute the probability for  any projector applied on $\rho'_{A}$ which is taken from the set (\ref{eq:lem:prob:tom:1})--(\ref{eq:lem:prob:tom:4}) and of the form $\dm{\phi'} = \dm{\phi'_{k}} \otimes \dm{\phi'_{l}}$ by
\begin{align}
\text{tr} \left(\dm{\phi'} \rho'_{A} \right) &= \text{tr} \left(\dm{\phi'_{k}} \otimes \dm{\phi'_{l}} U \rho_{A} U^\dagger \right) = \sum^3_{j=0} \lambda_j \text{tr} \left(\dm{\phi'_{k}} \otimes \dm{\phi'_{l}} U \dm{\Psi_j} U^\dagger \right) \notag \\ 
& \geq \frac{1}{4} \text{tr} \left(\dm{\phi'_{k}} \otimes \dm{\phi'_{l}} U \dm{\Psi_{j'}} U^\dagger \right) = \frac{1}{4} \text{tr} \left(\dm{\phi'_{k}} \otimes \dm{\phi'_{l}} \dm{00} \right) \notag \\
& = \frac{1}{4} \text{tr} \left(\dm{\phi'_{k}} \dm{0} \right) \text{tr} \left(\dm{\phi'_{l}} \dm{0} \right) \geq \frac{1}{4} \frac{1}{2} \frac{1}{2} = \frac{1}{16} \label{eq:lem:prob:4}
\end{align}
where we have used that $\text{tr}(AB) = \text{tr}(BA)$ for matrices $A$ and $B$ and that $\text{tr}\left(\dm{\phi'_{k}} \dm{0} \right) \geq 1/2$  for all $k \in \lbrace 0,1,2,3 \rbrace$. \newline
So we define the state $\rho'_{AB}$ as  $\rho'_{AB} = (U \otimes I_{B}) \rho_{AB} (U \otimes I_{B})^\dagger$. Observe that the probabilities of all projectors within the tomographic set (\ref{eq:lem:prob:tom:1})--(\ref{eq:lem:prob:tom:4}) are greater than or equal to $1/16$ for the state $\rho'_{A}$. \newline  
Now suppose we perform a measurement from the tomographic set (\ref{eq:lem:prob:tom:1})--(\ref{eq:lem:prob:tom:4}) on the subsystem $A$ of $\rho'_{AB}$ yielding outcome $\ket{\phi}$. The post-selected state conditioned on $\ket{\phi}$ reads as 
\begin{align*}
\rho'^{\phi}_{B} = \frac{\ptr{A}{(\dm{\phi} \otimes I) \rho'_{AB}}}{p_{A}(\phi)}
\end{align*}
where $p_{A}(\phi) \geq 1/16$. Furthermore assume as in Lemma \ref{lem.steering} that $\|\rho'^{\phi}_{B} - \rho_{B} \|_1 \leq \epsilon$ for all such $\ket{\phi} \in \mathcal{H}_A$. Then Lemma \ref{lem.steering} implies that 
\begin{align}
\|\rho'_{AB} - \rho'_{A} \otimes \rho_{B} \|_1 \leq 2C \epsilon.
\end{align}
The proof completes by observing that $\rho'_{AB}$ and $\rho'_{A} \otimes \rho_{B}$ are related by the local unitary $U$ to $\rho_{AB}$ and $\rho_{A} \otimes \rho_{B}$ and the unitary equivalence of the trace distance, i.e.
\begin{align}
\|\rho_{AB} - \rho_{A} \otimes \rho_{B} \|_1 &= \| (U \otimes I_B)(\rho_{AB} - \rho_{A} \otimes \rho_{B})(U \otimes I_B)^\dagger \|_1 \\
&= \| \rho'_{AB} - \rho'_{A} \otimes \rho_{B} \|_1 \leq 2C \epsilon.
\end{align}
\end{proof}

We observe that, due to the proof of Lemma \ref{lem:steering:prob}, which relies on the informationally complete set (\ref{eq:lem:prob:tom:1})--(\ref{eq:lem:prob:tom:4}), it suffices to be non-steerable with respect to the measurements within that set for a probability of measurement above or equal to $1/16$. We actually have proven a stronger result, as the actual choice of measurements does not matter, provided the probability of success is above or equal to the threshold $1/16$.

\begin{lemmarep}[Lemma \ref{thm.globalclosenessfactor} in main text - Product Form Lemma]
Let $\rho$ be an arbitrary mixed state shared by Alice and Bob and let $\ket{\psi}_{ABE}$ be a purification thereof held by Eve. Furthermore, let $\mathcal{P}_1$ correspond to a (distillation-type) real protocol and $\mathcal{P}_2$ correspond to the associated (distillation-type) ideal protocol, i.e. 
\begin{align*}
\mathcal{P}_1 (\rho) &= p_{\rho} \sigma_{AB} \otimes \dm{ok} + (1-p_\rho) \sigma^{\perp}_{AB} \otimes \dm{fail}, \\
\mathcal{P}_2(\rho) &= p_{\rho} \sigma^{\alpha}_{AB} \otimes \dm{ok} + (1-p_\rho) \sigma^{\perp}_{AB} \otimes \dm{fail}. \notag
\end{align*}
where $\alpha$ characterizes the level of the noise, $\sigma^{\alpha}_{AB}$, and $\sigma^{\perp}_{AB}$ are two fixed two qubit states. Furthermore, let $\mathcal{P}_1$ and $\mathcal{P}_2$ satisfy the following properties:
\begin{enumerate}
	\item The noise transcripts do not leak to Eve.
	\item The protocol $\mathcal{P}_1$ guarantees to converge towards some state $\sigma^{\alpha}_{AB}$ within the ok-branch of the protocol and $\max_{\mu_{AB}} \|(\mathcal{P}_1 - \mathcal{P}_2)(\mu_{AB})\|_1 \leq \varepsilon$.
\end{enumerate}
Then it holds that
\begin{align}
\| (\mathcal{P}_1 \otimes id_{E} -\mathcal{P}_2 \otimes id_{E} )(\ketbra{\psi}{\psi}{ABE})\|_1 \leq (34 \cdot 4^8 +1) \varepsilon. \label{eq.thm.localsteer}
\end{align}
\end{lemmarep}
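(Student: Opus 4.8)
The plan is to peel off the two–level flag system, reduce the claim to a \emph{product‑form} (non‑steerability) estimate on the $ok$‑branch output of the real protocol, and then feed the fixed‑point hypothesis into Lemma \ref{lem:steering:prob}.

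\emph{Step 1 (reduction to closeness‑to‑product).} Recall $\mathcal{P}_2=\mathcal{S}\circ\mathcal{P}_1$, where the replacement map $\mathcal{S}$ acts only inside the $ok$‑branch and only on the $AB$ register. Writing $\sigma_{ABE}$ for the normalised output of $\mathcal{P}_1\otimes\id_E$ on $\dm\psi_{ABE}$ restricted to the $ok$‑branch, with reductions $\sigma_{AB}=\mathrm{tr}_E\sigma_{ABE}$ and $\sigma_E=\mathrm{tr}_{AB}\sigma_{ABE}$, the $fail$‑branches of $\mathcal{P}_1\otimes\id_E$ and $\mathcal{P}_2\otimes\id_E$ coincide, and the $ok$‑branch of $\mathcal{P}_2\otimes\id_E$ equals $p_\rho\,\sigma^{\alpha}_{AB}\otimes\sigma_E$. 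Hence
\begin{align*}
\|(\mathcal{P}_1\otimes\id_E-\mathcal{P}_2\otimes\id_E)(\dm\psi_{ABE})\|_1=p_\rho\,\|\sigma_{ABE}-\sigma^{\alpha}_{AB}\otimes\sigma_E\|_1 .
\end{align*}
Inserting the intermediate state $\sigma_{AB}\otimes\sigma_E$ and using hypothesis (2) in the form $p_\rho\|\sigma_{AB}-\sigma^{\alpha}_{AB}\|_1=\|(\mathcal{P}_1-\mathcal{P}_2)(\rho)\|_1\le\varepsilon$, the whole statement reduces to proving $p_\rho\,\|\sigma_{ABE}-\sigma_{AB}\otimes\sigma_E\|_1\le 34\cdot 4^8\,\varepsilon$, i.e.\ that the $ok$‑branch output is close to a product across the cut $AB\,|\,E$.

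\emph{Step 2 (non‑steerability from the fixed point).} Since $\mathcal{P}_1$ never touches Eve's register, any measurement on $E$ commutes with $\mathcal{P}_1$; performing it first only replaces the input $\rho_{AB}$ by a collapsed state $\rho^{(\phi)}_{AB}$ (occurring with probability $q_\phi$), to which the real protocol still converges, in its $ok$‑branch, to within $\varepsilon/p_\phi$ of the \emph{same} fixed point $\sigma^{\alpha}_{AB}$ (hypothesis (2) applied to $\mu_{AB}=\rho^{(\phi)}_{AB}$), $p_\phi$ being the corresponding success probability. Equating ``measure‑then‑run'' with ``run‑then‑measure'' conditioned on $ok$, the outcome $\phi$ occurs with probability $q_\phi p_\phi/p_\rho$ and the post‑measured state of $AB$ is exactly $\sigma^{(\phi)}_{AB}$. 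By Lemma \ref{lem:steering:prob} it suffices to consider measurements of probability at least $1/16$ (after conjugating by a fixed local unitary); for these $q_\phi p_\phi\ge p_\rho/16$ forces $p_\phi\ge p_\rho/16$, so
\begin{align*}
\|\sigma^{(\phi)}_{AB}-\sigma_{AB}\|_1\le\|\sigma^{(\phi)}_{AB}-\sigma^{\alpha}_{AB}\|_1+\|\sigma^{\alpha}_{AB}-\sigma_{AB}\|_1\le\frac{16\varepsilon}{p_\rho}+\frac{\varepsilon}{p_\rho}=\frac{17\varepsilon}{p_\rho}.
\end{align*}
Feeding $\epsilon=17\varepsilon/p_\rho$ into Lemma \ref{lem:steering:prob} with $C=4^8$ gives $\|\sigma_{ABE}-\sigma_{AB}\otimes\sigma_E\|_1\le 2\cdot 4^8\cdot 17\varepsilon/p_\rho=34\cdot4^8\,\varepsilon/p_\rho$; multiplying by $p_\rho$ and adding back the $\varepsilon$ of Step 1 yields $(34\cdot4^8+1)\varepsilon$.

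\emph{Expected main obstacle.} The delicate point is invoking Lemma \ref{lem:steering:prob} with the \emph{dimension‑independent} constant $C=4^8$: that lemma is formulated for a pair of two‑qubit systems, whereas Eve's register $E$ (a purification of the full $n$‑pair input) carries no a priori dimension bound. This must be overcome by exploiting that the relevant output $\sigma_{AB}$ is only two qubits, hence has rank at most $4$, so that only an effectively two‑qubit portion of $E$ can be correlated with $AB$ in the $ok$‑branch; one restricts attention to that portion (equivalently, passes to a suitable purification/Schmidt support across the $AB\,|\,E$ cut) before applying the steering lemma. Checking that the non‑steerability parameter $17\varepsilon/p_\rho$ of Step 2 survives this restriction, and that the $1/16$ probability threshold is compatible with the collapsed‑input success probabilities $p_\phi$, is the technical heart of the argument; the remainder is the triangle‑inequality bookkeeping of Step 1.
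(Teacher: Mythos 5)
Your proposal is correct and follows essentially the same route as the paper's own proof: the same reduction of the claim to closeness-to-product on the $ok$-branch, the same commutation of Eve's measurement to the protocol's input with the probability bookkeeping $p_\rho\,p_E(\phi)=p_{\rho^\phi}\,p'_E(\phi)$ yielding the $17\varepsilon/p_\rho$ non-steerability bound for outcomes of probability at least $1/16$, and the same application of Lemma \ref{lem:steering:prob} with $C=4^8$ followed by the final triangle inequality. The ``main obstacle'' you flag — that Eve's register carries no a priori dimension bound — is precisely the point the paper dispatches with its brief assertion that, since Alice and Bob's output is two qubits, Eve's purifying system can without loss of generality be taken to be two qubits as well, which is essentially the restriction you sketch.
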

\begin{proof}
The proof relies on Lemma \ref{lem.steering} and \ref{lem:steering:prob}. Suppose Eve prepares the pure state $\ket{\psi}_{ABE}$ and let $\ptr{E}{\dm{\psi}} = \rho_{AB}$ be the state received by Alice and Bob. Then we have
\begin{align}
(\mathcal{P}_1 \otimes \id_E)(\dm{\psi}) &= p_{\rho} \sigma_{ABE} \otimes \dm{ok} + (1-p_\rho) \sigma^{\perp}_{AB} \otimes \sigma_E \otimes \dm{fail}, \label{eq.thm.localsteer.1} \\
(\mathcal{P}_2 \otimes \id_E)(\dm{\psi}) &= p_{\rho} \sigma^{\alpha}_{AB} \otimes \sigma_E \otimes \dm{ok} + (1-p_\rho) \sigma^{\perp}_{AB} \otimes \sigma_E \otimes \dm{fail}. \notag
\end{align}
If we post-select Eq. (\ref{eq.thm.localsteer.1}) on the $ok-$branch we have after normalization
\begin{align}\label{eq.thm.localsteer.2}
\frac{1}{p_{\rho}}(\id_{ABE} \otimes \dm{ok}) (\mathcal{P}_1 \otimes \id_E)(\dm{\psi}) = \sigma_{ABE} \otimes \dm{ok}.
\end{align}
It is obvious from the fact that the protocol is performed by Alice and Bob per definition that any measurement of Eve in the $ok-$branch can be commuted to the beginning of the protocol $\mathcal{P}_1$ because Eve is not part of the protocol. Hence her measurement only changes the input of the protocol $\mathcal{P}_1$ and thus either cause an abort or not.

We call the final state of Alice and Bob $\eta-$Eve-non steerable if for all $\phi \in \H_E$ we have $\|\sigma_{AB}^{\phi} - \sigma_{AB} \|_1 \leq \eta$ where $\sigma_{AB}^{\phi} = \ptr{E}{\frac{1}{p_E(\phi)}(\id_{AB} \otimes \dm{\phi}_E)\sigma_{ABE}}$. We sketch the remainder of this proof as follows: We show, that the final state of Alice and Bob is Eve-non steerable in the sense of Lemma \ref{lem.steering} by making use of the bounded distance of the protocols $\mathcal{P}_1$ and $\mathcal{P}_2$. Furthermore, Lemma \ref{lem:steering:prob} implies that it suffice to consider measurements of Eve of having probability greater than or equal to $1/16$. Therefore Lemma \ref{lem.steering} and Lemma \ref{lem:steering:prob} completes the proof. \newline
Because the output of Alice and Bob are $2$ qubits the purifying system  that Eve holds is without loss of generality also a two-qubit system. Hence, according to Lemma \ref{lem:steering:prob}, there exists a state $\sigma'_{ABE}$, which is unitarly related to $\sigma_{ABE}$ via an unitary $U$ on Eve's system only (which is not part of the protocol) and for which it suffice to consider measurements of Eve having probability greater than or equal to $1/16$. Furthermore observe that this local unitary of Eve can not change the success probability of the overall protocol as unitaries are CPTP. In other words, the success probabilities associated with $\sigma_{ABE}$ and $\sigma'_{ABE}$ are identical.
\newline
More formally, suppose Eve performs a projective measurement on this state $\sigma'_{ABE}$ (which stems from a purification $\ket{\psi'}_{ABE}$ of $\rho'_{AB}$ which is unitarly related to the purification $\ket{\psi}_{ABE}$ of $\rho_{AB}$ and both having the same success probability, see paragraph above) and observes outcome $\ket{\phi} \in \H_E$ having probability greater than or equal to $1/16$. Then the post-selected state of Alice, Bob, and Eve conditioned on that particular outcome $\phi$ reads as
\begin{align*}
& \frac{1}{p_E(\phi)}(\id_{AB} \otimes \dm{\phi}_E)(\sigma'_{ABE} \otimes \dm{ok}) = \frac{1}{p_E(\phi)}(\id_{AB} \otimes \dm{\phi}_E) \frac{1}{p_{\rho}}(\id_{ABE} \otimes \dm{ok}) (\mathcal{P}_1 \otimes \id_E)(\dm{\psi'}_{ABE}) \\
& =  \frac{1}{p_{\rho^\phi}} (\id_{ABE} \otimes \dm{ok}) (\mathcal{P}_1 \otimes \id_E)\underbrace{\left(\frac{\id_{AB} \otimes \dm{\phi}_E}{p'_E(\phi)} \dm{\psi'}_{ABE}\right)}_{=: \rho^{\phi}_{ABE}} = \frac{1}{p_{\rho^\phi}} (\id_{ABE} \otimes \dm{ok}) (\mathcal{P}_1 \otimes \id_E) (\rho^{\phi}_{ABE}) \\
& = \sigma^{\phi}_{ABE} \otimes \dm{ok}.
\end{align*}
More importantly, we relate the probability of the protocol succeeding for initial state $\rho'$, $p_{\rho}$, and the probability of measuring $\phi$ after the protocol, $p_E(\phi)$, to the probability of the protocol succeeding for the initial state $\rho^{\phi}_{ABE}$ (measurement of Eve commuted to the beginning of the protocol), $p_{\rho^\phi}$, and the probability of measuring $\phi$ before the protocol has started, $p'_E(\phi)$, via
\begin{align}
p_{\rho} p_E(\phi) = p_{\rho^\phi} p'_E(\phi). \label{eq:lemma:prod:1}
\end{align}
Observe that (\ref{eq:lemma:prod:1}) is equivalent to
\begin{align}
\frac{p_{\rho} p_E(\phi)}{p'_E(\phi)} = p_{\rho^\phi}  \label{eq:lemma:prod:2}
\end{align}
We note that the state $\sigma^{\phi}_{ABE}$ is in the $ok-$branch of the protocol $\mathcal{P}_1$. The next step is to apply Lemma \ref{lem.steering} which relates the distances $\| \sigma'_{ABE} - \sigma_{AB} \otimes \sigma'_E \|_1$ and $\|\sigma'_{AB} - \sigma^{\phi}_{AB}\|_1$. In particular we show that for all measurements of Eve with outcome $\ket{\phi} \in \H_E$ having a probability greater than or equal to $1/16$ we have that $\|\sigma'_{AB} - \sigma^{\phi}_{AB}\|_1 \leq 17 \varepsilon / p_{\rho}$. This then implies using Lemma \ref{lem:steering:prob} that $\| \sigma_{ABE} - \sigma_{AB} \otimes \sigma_E \|_1 \leq 34C \varepsilon / p_{\rho}$. In detail, using the triangle inequality we compute for the distance between $\sigma'_{AB}$ and $\sigma^{\phi}_{AB}$
\begin{align}
\|\sigma'_{AB} - \sigma^{\phi}_{AB}\|_1 & \leq \|\sigma'_{AB} - \sigma^{\alpha}_{AB}\|_1 + \|\sigma^{\alpha}_{AB} - \sigma^{\phi}_{AB}\|_1 = \frac{1}{p_{\rho}} \|(\mathcal{P}_1 - \mathcal{P}_2)(\rho'_{AB})\|_1 + \frac{1}{p_{\rho^\phi}} \|(\mathcal{P}_1 - \mathcal{P}_2)(\rho^\phi_{AB})\|_1 \notag \\
& \leq \left(\frac{1}{p_{\rho}} + \frac{1}{p_{\rho^\phi}} \right) \max_{\mu_{AB}} \|(\mathcal{P}_1 - \mathcal{P}_2)(\mu_{AB})\|_1. \label{eq:lemma:prod:3}
\end{align}
Now we employ (\ref{eq:lemma:prod:2}) in (\ref{eq:lemma:prod:3}) which yields
\begin{align}
\|\sigma'_{AB} - \sigma^{\phi}_{AB}\|_1 & \leq \left(\frac{1}{p_{\rho}} + \frac{p'_E(\phi)}{p_{\rho} p_E(\phi)} \right) \max_{\mu_{AB}} \|(\mathcal{P}_1 - \mathcal{P}_2)(\mu_{AB})\|_1 \leq \left(\frac{1}{p_{\rho}} + \frac{1}{p_{\rho} p_E(\phi)} \right) \max_{\mu_{AB}} \|(\mathcal{P}_1 - \mathcal{P}_2)(\mu_{AB})\|_1 \notag \\
& = \frac{1}{p_{\rho}} \left(1 + \frac{1}{p_E(\phi)} \right) \max_{\mu_{AB}} \|(\mathcal{P}_1 - \mathcal{P}_2)(\mu_{AB})\|_1 \leq  \frac{1}{p_{\rho}} \left(1 + 16 \right) \max_{\mu_{AB}} \|(\mathcal{P}_1 - \mathcal{P}_2)(\mu_{AB})\|_1 \notag \\
&\leq \frac{17}{p_{\rho}} \varepsilon \label{eq:lemma:prod:4}
\end{align}
because $p_E(\phi) \geq 1/16$ and $\max_{\mu_{AB}} \|(\mathcal{P}_1 - \mathcal{P}_2)(\mu_{AB})\|_1$ is bounded by $\varepsilon$ by assumption. Hence we apply Lemma \ref{lem.steering} to $\sigma'_{ABE}$ with $\epsilon = \frac{17}{p_{\rho}} \varepsilon$ which implies for the distance between $\sigma'_{ABE}$ and $\sigma_{AB} \otimes \sigma'_{E}$ that
\begin{align} \label{eq.thm.localsteer.4}
\|\sigma'_{ABE} - \sigma_{AB} \otimes \sigma'_{E} \|_1 \leq \frac{34 \cdot 4^8}{p_{\rho}} \varepsilon
\end{align}
where the factor $4^8$ is the constant $C$ of Lemma \ref{lem.steering} depending on the dimensions of the systems of Alice/Bob and Eve, for which we have $n=m=2$. Furthermore, this implies via Lemma \ref{lem:steering:prob} that
\begin{align}
\|\sigma_{ABE} - \sigma_{AB} \otimes \sigma_{E} \|_1 \leq \frac{34 \cdot 4^8}{p_{\rho}} \varepsilon \label{eq.thm.localsteer.5}
\end{align}
because $\sigma_{ABE}$ and $\sigma_{AB} \otimes \sigma_{E}$ are unitarly related to $\sigma'_{ABE}$ and $\sigma_{AB} \otimes \sigma'_{E}$ via the unitary $U$ on Eve's system. \newline
Finally, employing (\ref{eq.thm.localsteer.5}) in (\ref{eq.thm.localsteer}) yields
\begin{align*}
\|(\mathcal{P}_1 \otimes \id_E)(\dm{\psi}) &- (\mathcal{P}_2 \otimes \id_E)(\dm{\psi})\|_1 = p_{\rho} \|\sigma_{ABE} - \sigma^{\alpha}_{AB} \otimes \sigma_E \|_1 \\
& \leq p_{\rho} (\|\sigma_{ABE} - \sigma_{AB} \otimes \sigma_E \|_1 + \|\sigma_{AB} \otimes \sigma_E - \sigma^{\alpha}_{AB} \otimes \sigma_E \|_1) \\
& \leq 34 \cdot 4^8 \varepsilon + \varepsilon = (34 \cdot 4^8 +1) \varepsilon.
\end{align*}
\end{proof}

\subsection{Proof of Lemma \ref{lem:forpost}}\label{app:sec:post}

Now we turn to the proof of Lemma \ref{lem:forpost} of the main text. For that purpose we remind the reader that the final state after the distillation protocol including the system of L is pure. Thus, the following Lemma will turn out to be very useful.

\begin{lemma}\label{lem.localstates}
Let $\rho_{AB}$ and $\varphi_{AB} = \ketbra{\varphi}{\varphi}{A} \otimes \mu_{B}$ be two mixed states. Furthermore, assume that $\rho_A = \ptr{B}{\rho_{AB}} $ satisfies $\|\rho_A - \ketbra{\varphi}{\varphi}{A} \|_1 \leq \varepsilon$ and $\rho_B = \ptr{A}{\rho_{AB}}  = \mu_{B}$. Then $\|\rho_{AB} - \varphi_{AB} \|_{1} \leq 4 \sqrt{\varepsilon}$.
\end{lemma}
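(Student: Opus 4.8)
The guiding intuition is that if the $A$-marginal of a bipartite state is close to a pure state $\ket{\varphi}_A$, then essentially all of the weight of the state sits on the one-dimensional subspace spanned by $\ket{\varphi}_A$, so the state itself must be close to a product. I would set $P := \ketbra{\varphi}{\varphi}{A}$ and $Q := \one_A - P$ (acting as $P \otimes \one_B$, $Q \otimes \one_B$ on the joint system), and split $\rho_{AB} = P\rho_{AB}P + P\rho_{AB}Q + Q\rho_{AB}P + Q\rho_{AB}Q$.

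First I would quantify the ``leakage'' onto the orthogonal part. By H\"older's inequality, $\tr(P\rho_A) \ge \tr(P) - \|P\|_\infty\,\|\rho_A - \ketbra{\varphi}{\varphi}{A}\|_1 \ge 1-\varepsilon$, hence $\tr(Q\rho_{AB}Q) = \tr(Q\rho_A) \le \varepsilon$. Since $Q\rho_{AB}Q \ge 0$ this gives $\|Q\rho_{AB}Q\|_1 = \tr(Q\rho_{AB}Q) \le \varepsilon$. For the two off-diagonal blocks I would invoke the operator Cauchy--Schwarz inequality: writing $A\rho_{AB}B^\dagger = (A\sqrt{\rho_{AB}})(B\sqrt{\rho_{AB}})^\dagger$ and applying the Schatten--H\"older bound $\|XY\|_1 \le \|X\|_2\|Y\|_2$ yields $\|A\rho_{AB}B^\dagger\|_1 \le \sqrt{\tr(A\rho_{AB}A^\dagger)}\,\sqrt{\tr(B\rho_{AB}B^\dagger)}$ for arbitrary $A,B$; with $A = P$, $B = Q$, and using $\tr(P\rho_{AB}P) = \bra{\varphi}\rho_A\ket{\varphi} \le 1$, this gives $\|P\rho_{AB}Q\|_1 \le \sqrt{\varepsilon}$, and likewise $\|Q\rho_{AB}P\|_1 \le \sqrt{\varepsilon}$. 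Summing the three off-block contributions, $\|\rho_{AB} - P\rho_{AB}P\|_1 \le 2\sqrt{\varepsilon} + \varepsilon$.

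It then remains to compare $P\rho_{AB}P$ with $\varphi_{AB}$. Since $P$ projects onto a one-dimensional subspace of $A$, one has $P\rho_{AB}P = \ketbra{\varphi}{\varphi}{A} \otimes \tilde{\mu}_B$ with $\tilde{\mu}_B := \bra{\varphi}\rho_{AB}\ket{\varphi}$ a positive operator on $B$. Tracing out $A$ in the block decomposition gives $\mu_B = \rho_B = \ptr{A}{\rho_{AB}} = \tilde{\mu}_B + \ptr{A}{Q\rho_{AB}Q}$, so $\|\mu_B - \tilde{\mu}_B\|_1 = \tr(Q\rho_{AB}Q) \le \varepsilon$, and therefore $\|P\rho_{AB}P - \varphi_{AB}\|_1 = \|\ketbra{\varphi}{\varphi}{A} \otimes (\tilde{\mu}_B - \mu_B)\|_1 = \|\tilde{\mu}_B - \mu_B\|_1 \le \varepsilon$. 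A final triangle inequality yields $\|\rho_{AB} - \varphi_{AB}\|_1 \le 2\sqrt{\varepsilon} + 2\varepsilon$, which is at most $4\sqrt{\varepsilon}$ whenever $\varepsilon \le 1$ (and trivially so otherwise, since the left-hand side never exceeds $2$, while $4\sqrt{\varepsilon} \ge 4$). The only step that is not pure triangle-inequality bookkeeping is the operator Cauchy--Schwarz estimate of the cross terms $P\rho_{AB}Q$ and $Q\rho_{AB}P$; that is the part I would take most care over, the rest following from positivity of $Q\rho_{AB}Q$ and of $\tilde\mu_B$.
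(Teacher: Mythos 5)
Your proof is correct, and it takes a genuinely different route from the one in the paper. The paper argues at the level of purifications: it takes a purification $\ket{\psi}_{ABR}$ of $\rho_{AB}$, invokes an Uhlmann-type statement (Lemma A.2.7 of Renner's thesis) to produce a nearby product purification $\ket{\varphi}_{A}\otimes\ket{\xi}_{BR}$ at vector distance at most $\sqrt{\varepsilon}$, converts this to a trace-norm bound of $2\sqrt{\varepsilon}$ (Lemma A.2.3), traces out $R$, and then uses $\rho_B=\mu_B$ to replace $\xi_B$ by $\mu_B$ at a further cost of $2\sqrt{\varepsilon}$, arriving at $4\sqrt{\varepsilon}$. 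You instead work directly with the block decomposition of $\rho_{AB}$ induced by $P=\ketbra{\varphi}{\varphi}{A}\otimes I_B$: a gentle-measurement-type estimate gives $\tr(Q\rho_{AB}Q)\le\varepsilon$, operator Cauchy--Schwarz controls each off-diagonal block by $\sqrt{\varepsilon}$, the corner block is exactly $\ketbra{\varphi}{\varphi}{A}\otimes\tilde{\mu}_B$, and the hypothesis $\rho_B=\mu_B$ enters only through $\|\tilde{\mu}_B-\mu_B\|_1=\tr(Q\rho_{AB}Q)\le\varepsilon$ (the cross blocks vanish under $\ptr{A}{\cdot}$ because $Q_A P_A=0$, a point you correctly rely on and which is worth stating explicitly). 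Your approach is more elementary and self-contained --- no fidelity/Uhlmann machinery or external lemmas --- and in fact yields the slightly sharper bound $2\sqrt{\varepsilon}+2\varepsilon$, with the stated $4\sqrt{\varepsilon}$ recovered for $\varepsilon\le 1$ and trivially otherwise; the paper's purification argument is marginally shorter on the page and keeps the toolkit uniform with the rest of the appendix, where the same Renner lemmas are reused (e.g. in the proof of Theorem \ref{lem.leaklesstoleak}).
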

\begin{proof}
By assumption we have $\| \rho_{A} - \ketbra{\varphi}{\varphi}{A} \|_1 \leq \varepsilon$. Moreover, let $\ket{\psi}_{ABR}$ be a purification of $\rho_{AB}$. According to Lemma A.2.7 in \cite{bib.renner.diss} there exists a purification $\ket{\varphi}_{A} \otimes \ket{\xi}_{BR}$ of $\varphi_{AB}$ such that $\|\ket{\psi}_{ABR} - \ket{\varphi}_{A} \otimes \ket{\xi}_{BR}\|_{\text{vec}} \leq \sqrt{\| \rho_{A} - \ketbra{\varphi}{\varphi}{A} \|_1 } = \sqrt{\varepsilon}$ where $\|\ket{\psi}{}\|_{\text{vec}} = \sqrt{\langle \psi | \psi \rangle}$ and $_{ABR} \langle \psi | \varphi \rangle_{A} | \xi \rangle_{BR}$ is real and non-negative. Moreover, Lemma A.2.3 of \cite{bib.renner.diss} gives
\begin{align*}
\| \ketbra{\psi}{\psi}{ABR} - \ketbra{\varphi}{\varphi}{A} \otimes \ketbra{\xi}{\xi}{BR} \|_1 \leq 2 \|\ket{\psi}_{ABR} - \ket{\varphi}_{A} \otimes \ket{\xi}_{BR}\|_{\text{vec}} \leq 2 \sqrt{\epsilon}.
\end{align*}
We define $\xi_{B} = \ptr{R}{\ketbra{\xi}{\xi}{BR}}$. As the $1$-norm does not increase under the partial trace we have
\begin{align*}
\|\rho_{B} - \xi_{B}\|_1 \leq \|\rho_{AB} - \ketbra{\varphi}{\varphi}{A} \otimes \xi_{B}\|_1 \leq \|\ketbra{\psi}{\psi}{ABR} - \ketbra{\varphi}{\varphi}{A} \otimes \ketbra{\xi}{\xi}{BR}\|_1 \leq 2 \sqrt{\epsilon}
\end{align*}
by construction. Moreover, the assumption $\rho_{B} = \mu_{B}$ implies $\|\mu_{B} - \xi_{B}\|_1 = \|\rho_{B} - \xi_{B}\|_1 \leq 2 \sqrt{\varepsilon}$. This gives us $\|\ketbra{\varphi}{\varphi}{A} \otimes \mu_{B} - \ketbra{\varphi}{\varphi}{A} \otimes \xi_{B} \|_1 = \|\mu_{B} - \xi_{B}\|_1 \leq 2 \sqrt{\varepsilon}$. If we combine these results we obtain
\begin{align*}
\|\rho_{AB} - \varphi_{AB} \|_1 = \|\rho_{AB} - \ketbra{\varphi}{\varphi}{A} \otimes \mu_{B} \|_1 \leq \|\rho_{AB} - \ketbra{\varphi}{\varphi}{A} \otimes \xi_{B} \|_1 + \|\ketbra{\varphi}{\varphi}{A} \otimes \xi_{B} - \ketbra{\varphi}{\varphi}{A} \otimes \mu_{B} \|_1 \leq 4 \sqrt{\varepsilon}
\end{align*}
which proves the claim.
\end{proof}
Lemma \ref{lem.localstates} enables us to prove Lemma \ref{lem:forpost} of the main text.
\begin{lemmarep}[Lemma \ref{lem:forpost} of the main text]
Let $\E$ be the real protocol which guarantees to converge towards a unique and attracting fixed point depending on the noise parameter only. Let $\F$ be the ideal protocol as defined in the main text. Furthermore let $\rho$ be a mixed state (consisting of $n$ systems) shared by Alice and Bob. If the extension of $\E$ and $\F$ to the system of L satisfies $\|\E_{\text{L}}(\rho) - \F_{\text{L}}(\rho)\|_1 \leq \varepsilon(n)$, then
\begin{align*}
\|(\E \otimes id_{E'})(\ketbra{\psi}{\psi}{ABE'}) &- (\F \otimes id_{E'})(\ketbra{\psi}{\psi}{ABE'})\|_1 \leq 4 \sqrt{\varepsilon(n)}
\end{align*}
for all purifications $\ket{\psi}_{ABE'}$ of $\rho$.
\end{lemmarep}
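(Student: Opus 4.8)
The plan is to derive the statement from Lemma~\ref{lem.localstates} together with the non-increase of the trace distance under partial trace, in the same spirit in which Lemma~\ref{lem.steering} feeds the proof of Lemma~\ref{thm.globalclosenessfactor}. First I would discard the $\mathit{fail}$-branch: there $\E$ and $\F$ output the same state with the same weight $1-p_\rho$, so it contributes nothing, and it suffices to control $p_\rho\,\|\sigma_{ABE'}-\sigma^{\alpha}_{AB}\otimes\rho_{E'}\|_1$, where $\sigma_{ABE'}$ is the normalised $\mathit{ok}$-output of $(\E\otimes id_{E'})(\dm{\psi})$, $\rho_{E'}=\ptr{AB}{\sigma_{ABE'}}$ is Eve's reduced state in that branch, and (because $\F=\mathcal{S}\circ\E$ with $\mathcal{S}$ overwriting only the $AB$ register) $\sigma^{\alpha}_{AB}\otimes\rho_{E'}$ is exactly the normalised $\mathit{ok}$-output of $(\F\otimes id_{E'})(\dm{\psi})$.

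Next I would move to the lab-demon picture. Adjoining fresh pure ancillas, applying unitaries, and post-selecting on measurement outcomes all preserve purity, so the normalised $\mathit{ok}$-output $\dm{\Phi}_{ABLE'}$ of $(\E_{\mathrm{L}}\otimes id_{E'})(\dm{\psi})$ is a pure state on $ABLE'$ --- this is exactly the remark that the final state including $\mathrm{L}$ is pure. Tracing out $E'$ gives its $ABL$-marginal, the normalised $\mathit{ok}$-part of $\E_{\mathrm{L}}(\rho)$, which I call $\sigma_{ABL}$; tracing out $ABL$ returns $\rho_{E'}$. The hypothesis $\|\E_{\mathrm{L}}(\rho)-\F_{\mathrm{L}}(\rho)\|_1\le\varepsilon(n)$ then yields $\|\sigma_{ABL}-\sigma^{\alpha}_{ABL}\|_1\le\varepsilon(n)/p_\rho$, where $\sigma^{\alpha}_{ABL}$ is the $\mathrm{L}$-extended ideal fixed point. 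The structural fact I would invoke here is that, in the no-leak setting, $\mathrm{L}$ carries the noise record coherently and the ``cross-probabilities'' of the fixed point vanish, so that $\mathrm{L}$ is fully correlated with $AB$ and $\sigma^{\alpha}_{ABL}=\dm{\Theta}_{ABL}$ is a \emph{pure} state with $\ptr{L}{\dm{\Theta}_{ABL}}=\sigma^{\alpha}_{AB}$.

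Now I would apply Lemma~\ref{lem.localstates} to $\dm{\Phi}_{ABLE'}$, letting the composite $ABL$ play the role of the system called $A$ there, with pure target $\dm{\Theta}_{ABL}$, and $E'$ play the role of $B$, with $\mu_{E'}:=\rho_{E'}$. Both hypotheses hold: $\|\sigma_{ABL}-\dm{\Theta}_{ABL}\|_1\le\varepsilon(n)/p_\rho$ and $\ptr{ABL}{\dm{\Phi}_{ABLE'}}=\rho_{E'}$ exactly. Hence $\|\dm{\Phi}_{ABLE'}-\dm{\Theta}_{ABL}\otimes\rho_{E'}\|_1\le 4\sqrt{\varepsilon(n)/p_\rho}$. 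Tracing out $\mathrm{L}$ and using contractivity of the trace distance gives $\|\sigma_{ABE'}-\sigma^{\alpha}_{AB}\otimes\rho_{E'}\|_1\le 4\sqrt{\varepsilon(n)/p_\rho}$, and multiplying by $p_\rho\le 1$ collapses the prefactor, $p_\rho\cdot 4\sqrt{\varepsilon(n)/p_\rho}=4\sqrt{p_\rho\,\varepsilon(n)}\le 4\sqrt{\varepsilon(n)}$, which with the first paragraph is the claim. (Under the standing assumption that all distillation steps succeed, $p_\rho=1$ and this arithmetic is trivial.)

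The step I expect to be the main obstacle is precisely the structural claim that the $\mathrm{L}$-extended ideal state $\sigma^{\alpha}_{ABL}$ is pure on $ABL$: this is what makes Lemma~\ref{lem.localstates} applicable --- its pure tensor factor has to be the whole of $ABL$, since $AB$ alone is only Bell-diagonal --- and it is the point where the uniqueness and attractivity of the fixed point, the fully coherent treatment of $\mathrm{L}$, and the vanishing of the cross-probabilities established in \ref{sec:det:l} genuinely enter. The remaining ingredients (reduction to the $\mathit{ok}$-branch, purity of the real $\mathrm{L}$-including output, contractivity of the trace distance, and the $p_\rho$ bookkeeping) are routine.
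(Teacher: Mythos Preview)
Your approach is essentially identical to the paper's: reduce to the $ok$-branch, pass to the $L$-extended picture, invoke Lemma~\ref{lem.localstates} with the pure ideal fixed point playing the role of $\ket{\varphi}_A$ and $E'$ playing the role of $B$, trace out $L$, and absorb the prefactor via $p_\rho\cdot 4\sqrt{\varepsilon(n)/p_\rho}=4\sqrt{p_\rho\,\varepsilon(n)}\le 4\sqrt{\varepsilon(n)}$. The paper executes exactly these steps, taking $A'=ABEL$ and $B'=E'$, where $E$ is the (would-be-)leaked noise-transcript register that is part of the coherent lab-demon description in \ref{sec:det:l}.

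The one point that needs sharpening is your justification for the purity of $\sigma^\alpha_{ABL}$, which is precisely the obstacle you flagged. The vanishing of the cross-probabilities at the fixed point means $p^\infty_{ijkl}=0$ unless $(i,j)=(k,l)$; on the noise-record register $L$ alone this gives the \emph{diagonal} state $\sum_{ij} p^\infty_{ij}\,\ketbra{B_{ij}}{B_{ij}}{AB}\otimes\ketbra{ij}{ij}{L}$, which is classically fully correlated but \emph{mixed}, not pure. Purity is not a consequence of the cross-probability structure; it comes from keeping the \emph{entire} purifying ancilla. In the paper's model the pure state lives on $ABEL$: the register $E$ (the one that would carry the leaked transcripts) is what purifies the stochastic Pauli noise, and the fixed point there is the genuinely pure $\ket{\psi^f}_{ABEL}$. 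Once your ``$L$'' is read as the full Stinespring environment (the paper's $L\cup E$), your own purity argument for the real output---``adjoining fresh pure ancillas, applying unitaries, and post-selecting preserve purity''---already yields purity of the ideal ok-output as well, with no appeal to cross-probabilities needed, and the remainder of your proof goes through verbatim and matches the paper's.
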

\begin{proof}
As mentioned in the main text, we introduce a two-level flag system held by Alice which indicates whether they aborted the protocol or not. So we observe
\begin{align*}
\E_{\text{L}}(\rho) &= p_\rho \sigma_{ABEL} \otimes \ketbra{ok}{ok}{} + (1-p_\rho) \sigma^{\perp}_{ABEL} \otimes \ketbra{fail}{fail}{}, \\
\F_{\text{L}}(\rho) &= p_\rho \dm{\psi^{f}}_{ABEL} \otimes \ketbra{ok}{ok}{} + (1-p_\rho) \sigma^{\perp}_{ABEL} \otimes \ketbra{fail}{fail}{},
\end{align*}
where $E$ denotes the system of leaked noise transcripts to Eve. By assumption we have $\|\E_{\text{L}}(\rho) - \F_{\text{L}}(\rho)\|_1 \leq \varepsilon(n)$. This is equivalent to $p_{\rho} \|\sigma_{ABEL} - \ketbra{\psi_{f}}{\psi_{f}}{ABEL}\|_1 \leq \varepsilon(n)$ since $\E_{\text{L}}(\rho)$ and $\F_{\text{L}}(\rho)$ are equal on the fail branch. This we can rewrite to $\|\sigma_{ABEL} - \ketbra{\psi_{f}}{\psi_{f}}{ABEL}\|_1 \leq \varepsilon(n) / p_{\rho}$.

Moreover, applying the real and ideal protocol to the purification $\ket{\psi}_{ABE'}$ results in
\begin{align*}
(\E \otimes id_{E'})(\ketbra{\psi}{\psi}{ABE'}) &= p_{\rho} \sigma_{ABEE'} \otimes \ketbra{ok}{ok}{} + (1-p_{\rho}) \sigma^{\perp}_{ABEE'} \otimes \ketbra{fail}{fail}{}, \\
(\F \otimes id_{E'})(\ketbra{\psi}{\psi}{ABE'}) &= p_{\rho} \sigma^{f}_{ABE} \otimes \rho_{E'} \otimes \ketbra{ok}{ok}{} + (1-p_{\rho}) \sigma^{\perp}_{ABEE'} \otimes \ketbra{fail}{fail}{}.
\end{align*}
Again, both expression are equal in the fail branch, thus the $1$-norm simplifies to
\begin{align}\label{equ.global.1}
\|(\E \otimes id_{E'})(\ketbra{\psi}{\psi}{ABE'}) - (\F \otimes id_{E'})(\ketbra{\psi}{\psi}{ABE'})\|_1 = p_\rho \|\sigma_{ABEE'} -  \sigma^{f}_{ABE} \otimes \rho_{E'} \|_1.
\end{align}
Hence it is sufficient to show $p_\rho \|\sigma_{ABEE'} - \sigma^{f}_{ABE} \otimes \rho_{E'}\|_1 \leq 4 \sqrt{\varepsilon(n)}$. We observe that by introducing the system $L$ held by L that
\begin{align}\label{inequ.global.extension}
p_\rho \|\sigma_{ABEE'} - \sigma^{f}_{ABE} \otimes \rho_{E'}\|_1 \leq p_\rho \|\sigma_{ABELE'} - \dm{\psi^\alpha}_{ABEL} \otimes \rho_{E'}\|_1.
\end{align}
One easily verifies $\ptr{E'}{\sigma_{ABELE'}} = \sigma_{ABEL}$ and $\ptr{ABEL}{\sigma_{ABELE'}} = \rho_{E'}$ because the system $E'$ is not changed by the protocol $\E$. Moreover, by assumption we have $\|\sigma_{ABEL} - \ketbra{\psi_{f}}{\psi_{f}}{ABEL}\|_1 \leq \varepsilon(n) / p_{\rho}$. Thus we apply Lemma \ref{lem.localstates} to $\rho_{A'B'} := \sigma_{ABELE'}$ and $\varphi_{A'B'} = \ketbra{\psi_f}{\psi_f}{ABEL} \otimes \rho_{E'}$ where $A' := ABEL$ and $B' := E'$ which implies
\begin{align}\label{inequ.global.2}
\|\sigma_{ABELE'} - \ketbra{\psi_{f}}{\psi_{f}}{ABEL} \otimes \rho_{E'}\|_1 \leq 4 \sqrt{\varepsilon(n) / p_{\rho}}.
\end{align}
Employing (\ref{inequ.global.extension}) and (\ref{inequ.global.2}) in (\ref{equ.global.1}) yields
\begin{align*}
\|(\E \otimes id_{E'})(\ketbra{\psi}{\psi}{ABE'}) - (\F \otimes id_{E'})(\ketbra{\psi}{\psi}{ABE'})\|_1 \leq p_\rho 4 \sqrt{\varepsilon(n) / p_{\rho}} = 4 \sqrt{p_{\rho} \varepsilon(n)} \leq 4 \sqrt{\varepsilon(n)}
\end{align*}
which completes the proof.
\end{proof}

\section{Confidentiality of entanglement distillation protocols whenever the noise transcripts leak}\label{app:sec:leak}

In this section we show how the confidentiality guarantees regarding an entanglement distillation protocol can be extended to the case whenever the noise transcripts leak to Eve. \newline
We remind the reader that it is not necessary to leak the noise transcripts to Eve after every single distillation round. It is sufficient to copy all noise transcripts at the very end to Eve's register, as L is not accessible and Eve is not part of the protocol being executed by Alice and Bob.
\begin{theorep}[Theorem \ref{lem.leaklesstoleak} in main text]
Let $\E$ be the real protocol and $\F$ be the ideal protocol. Furthermore, let $\E^l$ be the real and $\F^l$ be the ideal protocol when the noise transcripts leak to Eve. Then
\begin{align*}
\|(\E \otimes \id_E) (\dm{\psi}_{ABE}) - (\F \otimes \id_E) (\dm{\psi}_{ABE})\|_1 \leq \varepsilon(n)
\end{align*}
implies that
\begin{align}\label{eq.supp.leakreduction}
\|(\E^l \otimes \id_E) (\dm{\psi}_{ABE}) - (\F^l \otimes \id_E) (\dm{\psi}_{ABE})\|_1 \leq 2 \sqrt{\varepsilon(n)}.
\end{align}
for all purifications $\ket{\psi}_{ABE}$ of initial state $\rho_{AB}$ consisting of $n$ systems
\end{theorep}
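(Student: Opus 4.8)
The plan is to reduce the ``leaked'' statement to the already-proven ``leak-less'' bound by exploiting the purity of the state obtained when the lab-demon register $L$ is kept coherent, together with the unitary equivalence of purifications. The first step is to note that the leaked protocols differ from the leak-less ones in exactly one respect: at the very end, $L$ coherently copies the content of its noise register into a fresh register $\hat E$ which is then handed to Eve, while $L$ itself stays inaccessible. Thus, writing $\E_{\mathrm L}$ and $\F_{\mathrm L}$ for the real and ideal protocols run with $L$ retained as a coherent side register, we have $\E^l=\mathcal C\circ\E_{\mathrm L}$ and $\F^l=\mathcal C\circ\F_{\mathrm L}$ for \emph{one and the same} CPTP map $\mathcal C$ (``copy $L$'s transcript register into $\hat E$, then trace out $L$''). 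Since $\mathcal C\otimes\id_E$ is a channel and the trace norm is contractive under channels,
\[
\|(\E^l\otimes\id_E-\F^l\otimes\id_E)(\dm\psi)\|_1\le\|(\E_{\mathrm L}\otimes\id_E-\F_{\mathrm L}\otimes\id_E)(\dm\psi)\|_1 ,
\]
so it suffices to bound the right-hand side by $2\sqrt{\varepsilon(n)}$. As in the proof of Lemma \ref{lem:forpost}, the abort branches of the real and ideal protocols coincide, so only the $ok$-branch (weighted by some $p_\rho\le 1$) contributes.

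For the $ok$-branch I would use purity. Running the honest protocol together with the coherent demon is a composition of unitaries and post-selected measurements on $ABL$ only, leaving Eve's purifying system $E$ untouched; hence the $ok$-branch output of $\E_{\mathrm L}\otimes\id_E$ is, up to the weight $p_\rho$, a pure state $\dm\Psi_{ABLE}$, and $\ket\Psi_{ABLE}$ is a purification, with purifying system $L$, of the $ok$-branch state $\sigma_{ABE}$ of the leak-less real protocol. The leak-less ideal $ok$-branch state is the product $\sigma^{\alpha,\mathcal P}_{AB}\otimes\rho_E$, and one can take $\F_{\mathrm L}$ to output in the $ok$-branch a \emph{pure} purification $\ket\Phi_{ABLE}$ of $\sigma^{\alpha,\mathcal P}_{AB}\otimes\rho_E$ whose transcript register carries the clean records $\ket{\eta_{ij}}$, so that $\mathcal C$ applied to it reproduces $\sigma^{\alpha,\mathcal P,l}_{AB\hat E}\otimes\rho_E$, i.e. the output of $\F^l$. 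The leak-less confidentiality hypothesis gives $\|\sigma_{ABE}-\sigma^{\alpha,\mathcal P}_{AB}\otimes\rho_E\|_1\le\varepsilon(n)/p_\rho$ on the $ok$-branch. By Uhlmann's theorem --- equivalently, Lemma A.2.7 together with Lemma A.2.3 of \cite{bib.renner.diss}, which is precisely the ``unitary equivalence of purifications'' referred to above --- two purifications of states that are $\delta$-close in trace norm can be chosen $2\sqrt\delta$-close in trace norm; applying this to $\ket\Psi$ and $\ket\Phi$ yields $\|\dm\Psi_{ABLE}-\dm\Phi_{ABLE}\|_1\le 2\sqrt{\varepsilon(n)/p_\rho}$. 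Reinstating the weight $p_\rho$ and using $\sqrt{p_\rho}\le1$ gives $\|(\E_{\mathrm L}\otimes\id_E-\F_{\mathrm L}\otimes\id_E)(\dm\psi)\|_1\le 2\sqrt{\varepsilon(n)}$, which together with the contraction step proves the theorem.

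The point needing the most care --- and the main obstacle --- is that $\mathcal C$ acts in a fixed transcript basis of $L$ and is therefore \emph{not} invariant under the unitaries on the purifying system that Uhlmann's theorem may apply; one cannot simply compare $\ket\Psi$ with an arbitrary Uhlmann-optimal purification of $\sigma^{\alpha,\mathcal P}_{AB}\otimes\rho_E$ and still recover $\F^l$ after applying $\mathcal C$. What makes the pair $(\ket\Psi,\ket\Phi)$ the legitimate one in the Uhlmann step is the fixed-point structure assumed in Lemma \ref{thm.globalclosenessfactor}: near the attracting fixed point the demon's record faithfully tracks the residual Bell error on $AB$, and decoupling of $AB$ from Eve pins Eve's conditional states, so the protocol-produced purification $\ket\Psi$ is already of the clean, $\hat E$-aligned form up to an error controlled by $\sqrt{\varepsilon(n)/p_\rho}$. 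One also uses here, exactly as asserted in the proof of Lemma \ref{lem:forpost}, that Eve's marginal is unchanged by the protocol since $\E_{\mathrm L}$ acts only on $A$, $B$ and $L$. With this alignment in hand, the remainder of the argument is the routine Uhlmann/Renner estimate together with contractivity.
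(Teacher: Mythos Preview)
Your proof is correct and follows essentially the same route as the paper: both lift the leak-less bound from $ABE$ to the purified level $ABLE$ via Uhlmann's theorem (Lemmas~A.2.7 and~A.2.3 of \cite{bib.renner.diss}), then push the bound through the leak map by contractivity of the trace norm under the CPTP map ``copy $L_1$ to Eve, trace out $L$'' (the paper writes this out explicitly as $\operatorname{tr}_{L_1L_2}[U_M(\cdot)U_M^\dagger]$ rather than abstracting it as your $\mathcal C$). The one substantive difference is that you explicitly flag the basis-alignment issue---that Uhlmann fixes the purifying register only up to a unitary on $L$, while $\mathcal C$ acts in a fixed transcript basis---whereas the paper simply invokes ``purifications are unitarily equivalent'' and silently identifies the Uhlmann-chosen $\ket{\psi_{\E}}$ with the protocol-produced purification; your awareness of this point is an improvement in rigor, though neither argument fully closes the gap beyond appealing to the freedom in how the lab-demon register is modeled.
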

\begin{proof}
We observe that
\begin{align*}
(\E \otimes \id_E)(\dm{\psi}) &= p_{\rho} \sigma_{ABE} \otimes \dm{ok} + (1-p_\rho) \sigma^{\perp}_{AB} \otimes \sigma_E \otimes \dm{fail}, \\
(\F \otimes \id_E)(\dm{\psi}) &= p_{\rho} \sigma^{\alpha}_{AB} \otimes \sigma_E \otimes \dm{ok} + (1-p_\rho) \sigma^{\perp}_{AB} \otimes \sigma_E \otimes \dm{fail}.
\end{align*}
So by assumption we have
\begin{align*}
\| (\E \otimes \id_E)(\dm{\psi}) - (\F \otimes \id_E)(\dm{\psi}) \|_1 = p_{\rho} \| \sigma_{ABE} - \sigma^{\alpha}_{AB} \otimes \sigma_E\|_1 \leq \varepsilon(n),
\end{align*}
i.e. $\| \sigma_{ABE} - \sigma^{\alpha}_{AB} \otimes \sigma_E\|_1 \leq \varepsilon(n) / p_{\rho}$. \newline
As outlined in the main text we model L in terms of purifications. Because purifications are unitarly equivalent we choose a particular purification of $\sigma^{\alpha}_{AB} \otimes \sigma_E$. Thus we fix $\ket{\psi_{\F}}_{ABL_1L_2E} = \ket{\psi'}_{ABL_1} \otimes \ket{\psi''}_{L_2E}$ where $\ket{\psi'}_{ABL_1} = \sum_{i,j} \omega_{ij}(\alpha)\ket{B_{ij}}_{AB} \ket{ij}_{L_1}$. The purifying systems $L_1$ and $L_2$ we attribute to the Lab Demon. \newline
Moreover, according to Lemma A.2.7 in \cite{bib.renner.diss} there exists a purification $\ket{\psi_{\E}}$ of $\sigma_{ABE}$ such that $\|\ket{\psi_{\F}}_{ABL_1L_2E} - \ket{\psi_{\E}}_{ABL_1L_2E} \|_{\text{vec}} \leq \sqrt{\varepsilon(n) / p_{\rho}}$ where $\|\ket{\psi}{}\|_{\text{vec}} = \sqrt{\langle \psi | \psi \rangle}$ and $_{ABL_1L_2E} \langle \psi_{\F} | \psi_{\E} \rangle_{ABL_1L_2E}$ is real and non-negative. Furthermore, Lemma A.2.3 of \cite{bib.renner.diss} gives
\begin{align}\label{eq.leakreduction.0}
\| \dm{\psi_{\E}}_{ABL_1L_2E} - \dm{\psi_{\F}}_{ABL_1L_2E} \|_1 \leq 2 \|\ket{\psi_{\E}}_{ABL_1L_2E} - \ket{\psi_{\F}}_{ABL_1L_2E} \|_{\text{vec}} \leq 2 \sqrt{\varepsilon(n) / p_{\rho}}.
\end{align}
When the noise transcripts leak to Eve, L effectively copies the noise transcripts $\ket{ij}_{L_1}$ to Eve, resulting in the pure state $\ket{\phi}_{ABL_1L_2EE'} = \left(\sum_{i,j} \ket{B_{ij}}_{AB} \ket{ij}_{L_1} \ket{ij}_{E'}\right) \otimes \ket{\psi}_{L_2E}.$ Hence we can model the leakage of the noise transcripts to Eve by a unitary $U_{M}$ such that $U_{M} \ket{\psi_{\F}}_{ABL_1L_2E} \ket{0}_{E'} = \ket{\phi}_{ABL_1L_2EE'}$. \newline
For the protocol when the noise transcripts leak to Eve we have
\begin{align*}
(\E^l \otimes \id_E)(\dm{\psi}) &= p_{\rho} \sigma'_{ABE} \otimes \dm{ok} + (1-p_\rho) \sigma^{\perp}_{AB} \otimes \sigma_E \otimes \dm{fail} \\
& = p_{\rho} \ptr{L_1 L_2}{U_{M} \dm{\psi_{\E}} U^\dagger_{M}} \otimes \dm{ok} + (1-p_\rho) \sigma^{\perp}_{AB} \otimes \sigma_E \otimes \dm{fail} \\
(\F^l \otimes \id_E)(\dm{\psi}) &= p_{\rho} \sigma'^{\alpha}_{ABE}  \otimes \dm{ok} + (1-p_\rho) \sigma^{\perp}_{AB} \otimes \sigma_E \otimes \dm{fail} \\
& = p_{\rho} \ptr{L_1 L_2}{U_{M} \dm{\psi_{\F}} U^\dagger_{M}} \otimes \dm{ok} + (1-p_\rho) \sigma^{\perp}_{AB} \otimes \sigma_E \otimes \dm{fail}.
\end{align*}
Because the real and the ideal protocol are equal in the fail-branch we obtain by using (\ref{eq.leakreduction.0})
\begin{align*}
\|(\E^l \otimes \id_E) (\dm{\psi}_{ABE}) &- (\F^l \otimes \id_E) (\dm{\psi}_{ABE})\|_1 \\
& = p_{\rho} \left\| \sigma'_{ABE} \otimes \dm{ok} - \sigma'^\alpha_{ABE} \otimes \dm{ok} \right\|_1 \\
& = p_{\rho} \left\|\ptr{L_1 L_2}{U_{M} \dm{\psi_{\E}} U^\dagger_{M}} - \ptr{L_1 L_2}{U_{M} \dm{\psi_{\F}} U^\dagger_{M}} \right\|_1 \\
& \leq p_{\rho} \left\|U_{M} \dm{\psi_{\E}} U^\dagger_{M} - U_{M} \dm{\psi_{\F}} U^\dagger_{M} \right\|_1 \\
& = p_{\rho} \left\| \dm{\psi_{\E}} - \dm{\psi_{\F}} \right\|_1 \leq 2 \sqrt{\varepsilon(n) p_{\rho}} \leq 2 \sqrt{\varepsilon(n)}
\end{align*}
which proves (\ref{eq.supp.leakreduction}).
\end{proof}

Thus the confidentiality of a protocol where the noise transcripts leak to Eve is bounded by the confidentiality of the same protocol when they do not.

\section{Quantum one-time padding after the real protocol}
\label{sec.supp.decouple}

In this section we show that a final secret twirl applied to the pair of Alice and Bob decouples Eve completely from the remaining state. Keep in mind that for this Alice and Bob require two classical bits unknown to Eve.

Recall that the state of Alice, Bob, Eve, and L after $n$ distillation rounds is pure and of the form $\ket{\psi}{} = \sum_{i,j,k,l} P_{ijkl} \ket{B_{ij}}_{AB} \ket{\eta_{kl}}_{L} \ket{\eta_{ijkl}}_{E}$. Tracing over L yields the mixed state
\begin{align}\label{eqn.decouple.1}
\rho_{ABE} = \sum\limits_{i_1,i_2,j_1,j_2}\sum\limits_{k,l} P_{i_1 j_1 k l} P^{*}_{i_2 j_2 k l}\ketbra{B_{i_1 j_1}}{B_{i_2 j_2}}{} \otimes \ketbra{\eta_{i_1 j_1 k l}}{\eta_{i_2 j_2 k l}}{}.
\end{align}
Suppose Alice and Bob apply a secret twirl $\T$ to (\ref{eqn.decouple.1}), i.e. they apply stochastically the family of operators $\lbrace id, K_1, K_2, K_1 K_2 \rbrace$ where $K_1 = \px \otimes \px$ and $K_2 = \pz \otimes \pz$. These are two stabilizers of the Bell state, i.e.,
\begin{align*}
K^{r_1}_1 \ket{B_{i_1 j_1}}{} &= (-1)^{i_1 r_1} \ket{B_{i_1 j_1}}{}, \\
K^{r_2}_2 \ket{B_{i_1 j_1}}{} &= (-1)^{j_1 r_2} \ket{B_{i_1 j_1}}{}.
\end{align*}
Hence, applying the secret twirl $\T$ to (\ref{eqn.decouple.1}) gives
\begin{align*}
\T \rho_{ABE} &= \sum\limits_{\stackrel{r_1,r_2}{i_1,i_2,j_1,j_2,k,l}} \frac{1}{4} P_{i_1 j_1 k l} P^{*}_{i_2 j_2 k l} K^{r_1}_1 K^{r_2}_2 \ketbra{B_{i_1 j_1}}{B_{i_2 j_2}}{} K^{r_1}_1 K^{r_2}_2 \otimes \ketbra{\eta_{i_1 j_1 k l}}{\eta_{i_2 j_2 k l}}{} \\
& = \sum\limits_{\stackrel{r_1,r_2}{i_1,i_2,j_1,j_2,k,l}}(-1)^{i_1 r_1} (-1)^{j_1 r_2} (-1)^{i_2 r_1} (-1)^{j_2 r_2} \frac{1}{4} P_{i_1 j_1 k l} P^{*}_{i_2 j_2 k l}  \ketbra{B_{i_1 j_1}}{B_{i_2 j_2}}{} \otimes \ketbra{\eta_{i_1 j_1 k l}}{\eta_{i_2 j_2 k l}}{} \\
& = \sum\limits_{i_1,i_2,j_1,j_2} \ketbra{B_{i_1 j_1}}{B_{i_2 j_2}}{} \otimes \frac{1}{4} \sum\limits_{k,l} P_{i_1 j_1 k l} P^{*}_{i_2 j_2 k l} \ketbra{\eta_{i_1 j_1 k l}}{\eta_{i_2 j_2 k l}}{}   \sum\limits_{r_1,r_2} (-1)^{(i_1 + i_2) r_1} (-1)^{(j_1 + j_2) r_2} \\
& = \sum\limits_{i_1,j_1} \ketbra{B_{i_1 j_1}}{B_{i_1 j_1}}{} \otimes \sum\limits_{k,l} |P_{i_1 j_1 k l}|^2 \ketbra{\eta_{i_1 j_1 k l}}{\eta_{i_1 j_1 k l}}{}.
\end{align*}
Note that in the resulting state $\sum\limits_{i_1,j_1} \ketbra{B_{i_1 j_1}}{B_{i_1 j_1}}{} \otimes \sum\limits_{k,l} |P_{i_1 j_1 k l}|^2 \ketbra{\eta_{i_1 j_1 k l}}{\eta_{i_1 j_1 k l}}{}$ Eve decouples, i.e. Alice/Bob and Eve have a separable state. The obtained resource state can be used to establish a confidential quantum channel by means of quantum teleportation.

\section{Robustness of recurrence-type entanglement distillation protocol}\label{app:sec:robust}

To complete the security characterization of entanglement distillation protocols we also consider the robustness of an entanglement distillation protocol. To define this term precisely we first need the definition of a honest eavesdropper.
\begin{defi}
We call an eavesdropper honest, if the states sent by the eavesdropper are of the form $\ket{B_{00}}{}^{\otimes 2^n}$.
\end{defi}
It is obvious that a honest eavesdropper is not entangled with the ensemble delivered to Alice and Bob via the noisy quantum channel. Moreover we formally define the robustness of a protocol by:
\begin{defi}[Robustness of a protocol]
We call a protocol $\E^\alpha$ $\varepsilon_R$-robust, if for a honest eavesdropper the probability of aborting the protocol is at most $\varepsilon_R$.
\end{defi}
Now we show that we can tune the robustness of a recurrence-type entanglement distillation protocol to be exponentially small in terms of necessary number of input pairs.
\begin{theo}
Let $M \in \N$ such that Alice and Bob achieve $\varepsilon$-confidentiality by succeeding $M$ rounds of a recurrence-type entanglement distillation protocol. Furthermore assume that Alice and Bob receive $n$ pairs from a honest eavesdropper over the quantum channel $\Phi^{\otimes n}$ (where $\Phi(\rho) = \beta \rho + (1-\beta)/4 \left(\sum_{i,j} \sigma_{i,j} \rho \sigma_{i,j} \right)$) such that, after the parameter estimation step of the proposed protocol, $k - \sqrt{k}$ pairs (where $k - \sqrt{k} = c 2^M$ and $c = \xi 2^{M+2}$) are left for entanglement distillation. Then, the robustness $\varepsilon_{R}$ of the protocol is bounded by
\begin{align*}
\varepsilon_{R} \leq \exp\left(- (3\beta - 4 F_{min}(\alpha) - 1)^2 \sqrt{k}/128 \right) + M \exp\left( - \xi \right).
\end{align*}
\end{theo}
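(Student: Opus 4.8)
The plan is to split the abort event into a \emph{parameter--estimation} failure and a \emph{distillation} failure, bound each separately, and combine by a union bound. Fix a honest eavesdropper. Then, after the channel $\Phi^{\otimes n}$ (and the honest symmetrisation, twirling and trace--out preprocessing, none of which breaks the i.i.d.\ structure), Alice and Bob hold $k$ i.i.d.\ copies of the Werner state $\Phi(\dm{B_{00}}) = \beta\,\dm{B_{00}} + \tfrac{1-\beta}{4}\,\one$, which has fidelity $F(\beta)=\tfrac{1+3\beta}{4}$ with $\ket{B_{00}}$. The protocol aborts only if (i)~parameter estimation on the $\sqrt{k}$ sacrificed pairs rejects, or (ii)~one of the $M$ distillation rounds exhausts the pool of available pairs; since for a honest eavesdropper the pairs are i.i.d.\ Werner irrespective of the classical parameter--estimation outcome, one has $\varepsilon_{R}\le \Pr[\text{PE rejects}] + \Pr[\text{distillation exhausts pairs}]$, with the latter evaluated on the honest i.i.d.\ ensemble.

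\textbf{Parameter estimation.} First I would note that the parameter--estimation step reconstructs a lower bound on $F(\beta)$ from empirical averages of the $\pm 1$--valued outcomes of local $\px\otimes\px$ and $\pz\otimes\pz$ correlation measurements on the $\sqrt{k}$ i.i.d.\ sacrificed pairs; these are i.i.d.\ bounded random variables whose means are fixed by $\beta$. It rejects only when this estimate, shifted by the confidence margin it builds in, falls below the distillability threshold $F_{\min}(\alpha)$, which for the honest state requires the empirical mean to deviate from its true value by at least an amount proportional to $3\beta-4F_{\min}(\alpha)-1$ — the proportionality factor and the extra constant coming from the definition of the estimator and from the safety margin subtracted before comparing with $F_{\min}(\alpha)$. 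Hoeffding's inequality then gives $\Pr[\text{PE rejects}]\le \exp\!\big(-(3\beta-4F_{\min}(\alpha)-1)^{2}\sqrt{k}/128\big)$, the constant $128$ collecting the $[-1,1]$ ranges of the estimators, the split of the $\sqrt{k}$ pairs between the two bases, and the size of the margin.

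\textbf{Distillation.} Here I would track the number $N_{r}$ of pairs entering round $r$, with $N_{1}=k-\sqrt{k}=c\,2^{M}$ and, conditioned on $N_{r}$, $N_{r+1}\sim \mathrm{Binomial}(\lfloor N_{r}/2\rfloor,\,p_{r})$, where $p_{r}$ is the probability that the measurement outcomes of the basic recurrence step coincide in round $r$. From the recurrence relations this probability equals the normalisation $N=(p_{00}+p_{11})^{2}+(p_{01}+p_{10})^{2}\ge \tfrac12$ for any Bell--diagonal ensemble, and for the honest near--perfect pairs — whose fidelity stays between $F_{\min}(\alpha)$ and the noise--dependent fixed--point fidelity, at which $N\to 1$ — one gets a uniform bound $p_{r}\ge p_{0}$ with $p_{0}$ close to $1$ in the low--noise regime of interest. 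I would then fix a decreasing target sequence $T_{1}=c\,2^{M}\ge T_{2}\ge\dots\ge T_{M+1}\ge 1$ such that $\{N_{r}\ge T_{r}\text{ for all }r\}$ implies the protocol completes all $M$ rounds, where the over--provisioning $c=\xi\,2^{M+2}$ is chosen precisely so that each $T_{r+1}$ lies a fixed relative amount below the conditional mean $\mathbb{E}[N_{r+1}\mid N_{r}\ge T_{r}]$; a Chernoff bound then gives $\Pr[N_{r+1}<T_{r+1}\mid N_{r}\ge T_{r}]\le \exp(-\xi)$ for each round, and a union bound over the $M$ rounds yields $\Pr[\text{distillation exhausts pairs}]\le M\exp(-\xi)$. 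Adding the two estimates proves the stated bound.

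\textbf{Main obstacle.} The parameter--estimation term is a routine concentration estimate; the delicate part is the distillation bookkeeping. One must design how the $c\,2^{M}$ pairs feed the $M$ rounds and how the targets $(T_{r})$ are defined so that, simultaneously, (a)~meeting all targets guarantees a final output pair, and (b)~every target — including those of the last rounds, where only $O(\xi)$ pairs remain — is far enough below the conditional expectation of the surviving count for a Chernoff bound with exponent $\xi$ to apply; closing this requires the uniform lower bound $p_{r}\ge p_{0}$ on the per--step success probability (and the high--fidelity regime where $p_{0}$ is close to $1$) together with the specific over--provisioning factor encoded in $c=\xi\,2^{M+2}$.
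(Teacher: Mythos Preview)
Your proposal is correct and follows the same two--part decomposition as the paper: a Hoeffding bound for the parameter--estimation abort and a per--round Chernoff bound plus union bound for the distillation abort, combined additively. The one point where you diverge from the paper is your ``main obstacle'': you suggest that closing the distillation estimate requires the high--fidelity regime where the per--round success probability $p_0$ is close to $1$. The paper shows this is unnecessary. It simply plugs in the universal lower bound $p=1/2$ (which you yourself derive from the DEJMPS normalisation) together with the concrete deviation parameter $d=1/2$ in the Chernoff bound; the recursion $N_{m+1}\ge (1-d)pN_m$ then gives a worst--case exponent $\tfrac{d^2}{2}(1-d)^{m-1}p^{m}c\,2^M = c\,2^{M-2m-2}$, which at $m=M$ equals $c\,2^{-(M+2)}=\xi$ precisely because of the over--provisioning $c=\xi\,2^{M+2}$. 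So the target sequence and the required slack are produced directly by these fixed numerical choices, with no appeal to $p_0$ near $1$.
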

\begin{proof}
The basic idea of the proof is to request sufficiently many pairs from Eve such that the probabilities of abort during the protocol to be exponentially small while still having enough pairs left to achieve $M$ rounds of a recurrence-type entanglement distillation protocol. We divide the proof into two parts:
\begin{itemize}
\item Part 1: We prove that the probability of aborting the recurrence-type entanglement distillation protocol due to parameter estimation is exponentially small.
\item Part 2: We prove the same holds true for aborting the protocol during entanglement distillation.
\end{itemize}
\underline{Part 1:} Suppose Eve sends the state $\ket{B_{00}}{}^{\otimes n}$ through the noisy quantum channel $\Phi^{\otimes n}$ to Alice and Bob. Applying $\Phi$ to $\ketbra{B_{00}}{B_{00}}{}$ yields
\begin{align}\label{eqn.robust.qchannel}
\rho_{AB} = \Phi\left(\ketbra{B_{00}}{B_{00}}{} \right) = (3\beta + 1)/4 \ketbra{B_{00}}{B_{00}}{} + (1-\beta)/4 \left(\ketbra{B_{10}}{B_{10}}{} + \ketbra{B_{01}}{B_{01}}{} + \ketbra{B_{11}}{B_{11}}{} \right).
\end{align}
Thus the state Alice and Bob receive is $\rho_{AB}^{\otimes n}.$ According to the preceding protocols proposed in the main text, Alice and Bob apply a symmetrization to $\rho_{AB}^{\otimes n}$, and, depending on the noise level of the apparatus, they might have to trace out $n-k$ pairs or not. For the subsequent analysis we assume that this tracing out step is necessary, i.e. the de-Finetti-based reduction needs to be applied. Hence, Alice and Bob continue by applying a twirl to each remaining pair. Since $\rho_{AB}^{\otimes k}$ is invariant under permutations and $\rho_{AB}$ is Bell-diagonal, the remaining state after twirling is equal to $\rho_{AB}^{\otimes k}$.

Next, they apply to $\sqrt{k}$ of the remaining $k$ pairs the parameter estimation for estimating the fidelity of each pair. Necessary for convergence of all recurrence-type entanglement distillation protocols is that the fidelity $F$ of $\rho_{AB}$ with $\ket{B_{00}}{}$ satisfies $F > F_{min}(\alpha)$. Hence this step is crucial in order to guarantee successful distillation.

For that purpose, we measure $\lfloor \sqrt{k} \rfloor$ of $k$ pairs by applying two-qubit measurements. To be more precise, we apply a $\px \otimes \px$ to the first and $\pz \otimes \pz$ measurement to the second pair. We refer to this measurements by $M_1$ and $M_2$ respectively. We observe that the state $\ket{B_{00}}{}$ is a common eigenstate of $M_1$ and $M_2$ with eigenvalue $1$. We define to each pair of pairs a random variable $X_i$ for $i \in \lbrace 1,.., \lfloor \sqrt{k} \rfloor /2 \rbrace$ with $X_i = 1$ whenever both measurements $M_1$ and $M_2$ yield outcome $1$ and $X_i = 0$ else.

Furthermore we assume for the expected value $\mathbb{E}(X)$ of the fidelity with $\ket{B_{00}}$ that $\mathbb{E}(X) = F_{min}(\alpha) + \delta$, where $\delta > 0$ will be fixed below. The protocol will be aborted if the estimate is below $F_{min}(\alpha) + \delta$.

From (\ref{eqn.robust.qchannel}) we observe that, whenever $(3\beta + 1)/4 \leq F_{min}(\alpha)$, the entanglement distillation protocol will not distill any entanglement. This implies for the quantum channel $\Phi$ that, if $\beta \leq (4F_{min}(\alpha) - 1)/3$ the parameter estimation step will abort, independent of the input provided by Eve. Thus we assume for the subsequent analysis that $\beta > (4F_{min}(\alpha) - 1)/3$.

Moreover we define $\eta = \delta/2$. Hence we get by the Hoeffdings inequality \cite{Hoeffding} for the probability of an error larger than $\eta$ in our measured estimate $\overline{X}$ for the fidelity the following expression:
\begin{align*}
\mathbb{P}(|\mathbb{E}(X) - \overline{X}| \geq \eta) \leq \exp\left( - \eta^2 \sqrt{k}/2 \right) =: p_{\text{pe-abort}}.
\end{align*}
Thus the probability of aborting the protocol due to an error in the parameter estimation is exponentially small in number of necessary input pairs. In order to fix $\delta$ we recognize that Alice and Bob abort the protocol whenever $(3\beta + 1)/4 < F_{min}(\alpha) + \delta$. This is equivalent to $\delta > (3\beta - 4 F_{min}(\alpha) - 1)/4$. Inserting the definition of $\eta$ yields $\eta > (3\beta - 4 F_{min}(\alpha) - 1)/8$ and thus $p_{\text{pe-abort}} < \exp\left(- (3\beta - 4 F_{min}(\alpha) - 1)^2 \sqrt{k}/128 \right)$. \newline

\underline{Part 2:} What remains to be shown is that the probability of aborting the protocol in the distillation phase is also exponentially small in the number of input pairs. For that purpose, we assume that the noise level $\alpha$ of the apparatus is such that distillation is feasible. In the following we show that we can force the probability of abort due to entanglement distillation to be exponentially small in terms of requested input pairs.

We assume that Alice and Bob are left with $c 2^M$ pairs after parameter estimation. Recall that the Chernoff inequality for a sequence of independent Bernoulli random variables $X_1, ... , X_n$ where $\mathbb{P}\left(X_i = 1 \right) = p$ and $d \in [0,1]$ reads as
\begin{align*}
\mathbb{P}\left(\sum_i X_i \leq (1-d) p n \right) \leq \exp\left( - \frac{d^2}{2} p n \right).
\end{align*}
Moreover, we observe that a basic distillation step can be modelled by a Bernoulli random variable $X_i$ where $\mathbb{P}\left(X_i = 1 \right) = p$ is the probability of succeeding (measurement outcomes coincide).

Suppose we perform $m$ rounds of entanglement distillation. Let $N_m$ denote the number of input pairs to the $m$-th round and let $d \in [0,1]$.
\begin{figure}[h]
\scalebox{0.9}{
\includegraphics{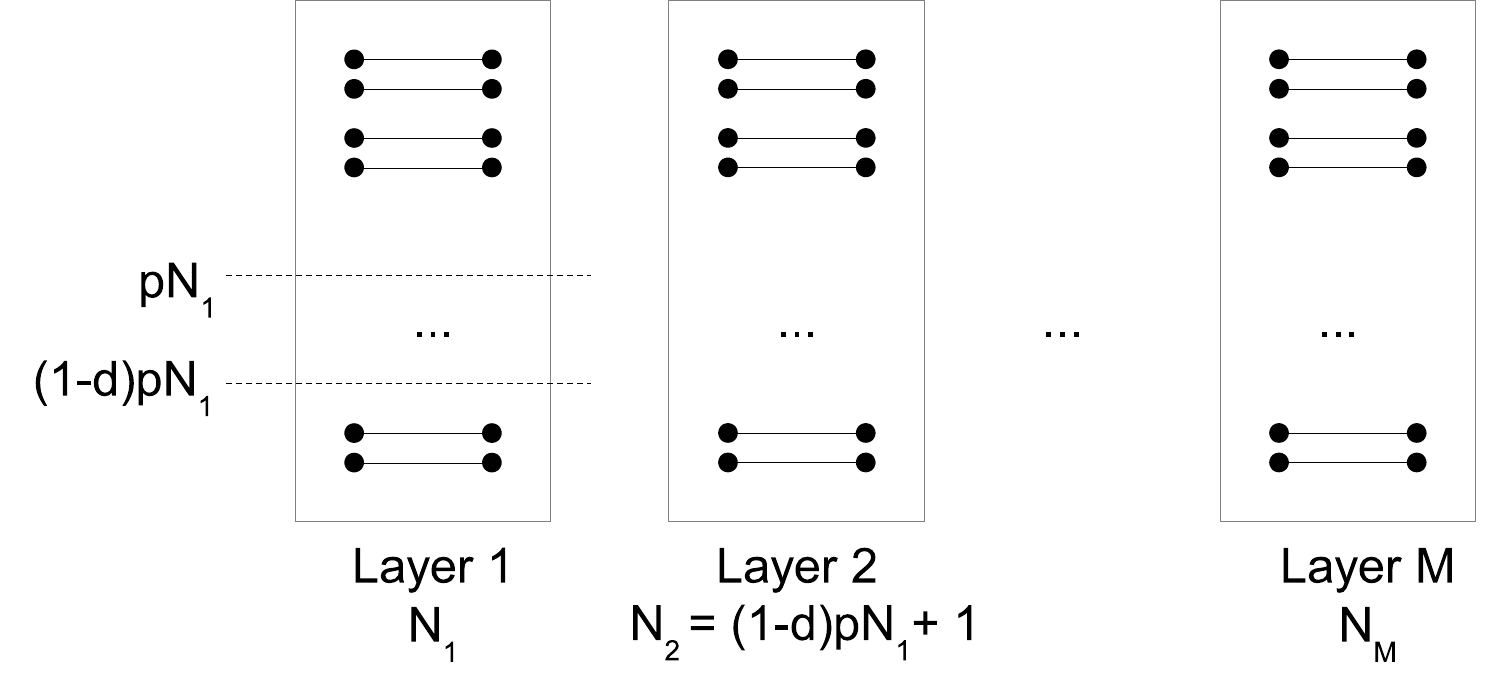}
}
\caption{M rounds of entanglement distillation}\label{app:fig:robustness}
\end{figure}
Then the Chernoff inequality implies that the probability that less than $(1-d)p N_m$ basic distillation steps at round $m$ have succeeded is bounded by $\exp\left( - \frac{d^2}{2} p N_m \right)$, i.e.
\begin{align}\label{inequ.robustness.1}
p_{\text{abort},m} = \mathbb{P}\left(\sum_i X_i \leq (1-d) p N_m \right) \leq \exp\left( - \frac{d^2}{2} p N_m \right).
\end{align}
But this also implies that, with probability $1-p_{\text{abort},m}$, at least $(1-d) p N_m + 1$ basic distillation steps have succeeded at round $m$. Thus we may safely assume that $N_{m+1} = (1-d) p N_m + 1$. The situation is summarized in Fig. \ref{app:fig:robustness}. Furthermore we have $N_{1} = c 2^M$. Eliminating the recurrence relation yields $N_{m+1} = (1-d)^{m} p^{m} c 2^M + \sum_{i=0}^{m-1}(1-d)^i p^i$. This implies for (\ref{inequ.robustness.1})
\begin{align*}
p_{\text{abort},m} \leq \exp\left( - \frac{d^2}{2} p \left((1-d)^{m-1} p^{m-1} c 2^M + \underbrace{\sum_{i=0}^{m-2}(1-d)^i p^i}_{> 0} \right) \right) \leq \exp\left( - \frac{d^2}{2} (1-d)^{m-1} p^{m} c 2^M \right).
\end{align*}
Furthermore, we compute the probability of aborting the protocol at distillation round $m$ (assuming that the previous rounds $1,..,m-1$ succeeded) by
\begin{align}\label{inequ.robustness.2}
p_{\text{abort at round $m$}} = p_{\text{abort},m} \underbrace{\prod\limits^{m-1}_{k=1} p_{\text{succeed},k}}_{\leq 1} \leq p_{\text{abort},m} \leq \exp\left( - \frac{d^2}{2} (1-d)^{m-1} p^{m} c 2^M \right).
\end{align}
The events of aborting the distillation protocol at two different rounds $i$ and $j$ are disjoint. Thus we have for the probability of aborting in any of $m$ rounds $p_{\text{abort in any of $m$ rounds}} = \sum_{k=1}^m p_{\text{abort at round $k$}}$. A simple consequence thereof is
\begin{align}\label{inequ.robustness.3}
p_{\text{abort in any of $M$ rounds}} &= \sum\limits^M_{k=1} p_{\text{abort at round $k$}} \leq \sum\limits^M_{k=1} \exp\left( - \frac{d^2}{2} (1-d)^{k-1} p^{k} c 2^M \right)
\end{align}
where we have used (\ref{inequ.robustness.2}). Inserting $p=1/2$ and $d =1/2$ in (\ref{inequ.robustness.3}) yields
\begin{align} \notag
p_{\text{abort in any of $M$ rounds}} & \leq \sum\limits^M_{k=1} \exp\left( - \frac{1}{8} \frac{1}{2^{2k-1}} c 2^M \right) = \sum\limits^M_{k=1} \exp\left( - c 2^{M-2k-2} \right) \leq M \exp\left( - c 2^{M-2M-2} \right) \\ \label{inequ.robustness.4}
& = M \exp\left( - c 2^{-(M+2)} \right).
\end{align}
By assumption we have $c = 2^{M+2} \xi$ which implies for (\ref{inequ.robustness.4})
\begin{align*}
p_{\text{abort in any of $M$ rounds}} & \leq  M \exp\left( - \xi 2^{M+2} 2^{-(M+2)} \right) = M \exp\left( - \xi \right).
\end{align*}
Thus, the probability of aborting the protocol satisfies
\begin{align*}
\varepsilon_{R} & \leq p_{\text{pe-abort}} + (1-p_{\text{pe-abort}})p_{\text{abort in any of $M$ rounds}} \leq \exp\left(- (3\beta - 4 F_{min}(\alpha) - 1)^2 \sqrt{k}/128 \right) + M \exp\left( - \xi \right)
\end{align*}
which completes the proof.
\end{proof}

\section{Establishing a confidential quantum channel}\label{app:sqc}

For illustration purposes, we show how confidential quantum channels can be realized using our proposal in conjunction with standard teleportation. By our results, the joint state of Alice, Bob, and Eve after the distillation protocol is $\epsilon$ close to the output of the ideal protocol. The latter, since the register of L is not accessible to any of the parties and  thus is traced out, yields the state of the form (provided the protocol was not aborted)
\EQ{\rho_{final} =  \sum_{i,j} |\omega_{ij}(\alpha)|^2 \dm{B_{ij}}_{AB} \otimes \dm{\eta_{ij}}_{E}.\ \
}
The teleportation of any state $\rho$ from Alice to Bob will yield the state
\EQ{
\sum_{i,j} |\omega_{i,j}(\alpha)|^2 \px^j \pz^i\rho \pz^i \px^j \otimes \dm{\eta_{ij}}_{E}.
}
Thus the only information Eve can obtain is what noise operator was applied on the teleported state, and nothing more -- thus, the channel is confidential. Moreover, the probabilities for the different noise processes are not under Eve's control, but depend on the local devices.

\end{document}